\documentclass{article}




    \usepackage[final,nonatbib]{neurips_2024}


\usepackage[utf8]{inputenc} 
\usepackage[T1]{fontenc}    
\usepackage{hyperref}       
\usepackage{url}            
\usepackage{booktabs}       
\usepackage{amsfonts}       
\usepackage{nicefrac}       
\usepackage{microtype}      
\usepackage{xcolor}         
\usepackage{soul}
\usepackage{amsmath}
\usepackage{amssymb}
\usepackage{amsthm}
\usepackage{cite}
\usepackage{nicefrac}       
\usepackage{microtype}      
\usepackage{lipsum}
\usepackage{graphicx}
\usepackage{subfigure}
\graphicspath{ {./images/} }
\usepackage{bbm}
\usepackage{enumitem}
\usepackage{bm}
\usepackage{hyperref}
\usepackage[absolute,overlay]{textpos}
\usepackage{multirow}
\usepackage{makecell}
\usepackage{appendix}
\usepackage{blindtext}
\usepackage{titlesec}
\usepackage{wrapfig}
\usepackage{algorithm}
\usepackage{algpseudocode}
\usepackage{pifont}
\usepackage{mathtools}
\usepackage{mathrsfs} 
\usepackage{caption}
\usepackage{etoc}
\usepackage{extpfeil}
\usepackage{pifont}
\usepackage{geometry}
\geometry{letterpaper, margin=1in}

\captionsetup[table]{skip=10pt}

\newtheorem{theorem}{Theorem}

\newtheorem{lemma}[theorem]{Lemma}
\newtheorem{definition}{Definition}
\newtheorem{proposition}{Proposition}

\newtheorem{propositionE}{Proposition}

\theoremstyle{definition}
\newtheorem{definitionA}{Definition}

\newtheorem{lemmaA}{Lemma}
\newtheorem{lemmaB}{Lemma}

\newcommand{\GG}{\mathcal{G}}
\newcommand{\VV}{\mathcal{V}}
\newcommand{\EE}{\mathcal{E}}
\newcommand{\Xcal}{\mathcal{X}}

\newcommand{\A}{\mathbf{A}}
\newcommand{\X}{\mathbf{X}}
\newcommand{\x}{\mathbf{x}}

\newcommand{\D}{\mathbf{D}}
\newcommand{\Dcal}{\mathcal{D}}

\newcommand{\e}{\mathbf{e}}

\newcommand{\s}{\mathbf{s}}
\newcommand{\h}{\mathbf{h}}
\newcommand{\dd}{\mathbf{d}}

\newcommand{\Pbf}{\mathbf{P}}

\newcommand{\Ecal}{\mathcal{E}}

\newcommand{\W}{\mathbf{W}}

\newcommand{\I}{\mathbf{I}}

\newcommand{\Bcal}{\mathcal{B}}

\newcommand{\seed}{\mathbf{s}}
\newcommand{\Dscr}{\mathscr{D}}

\newcommand{\aaa}{\mathbf{a}}

\newcommand{\Acal}{\mathcal{A}}

\newcommand{\gau}{\mathcal{N}}
\newcommand{\lap}{\mathcal{L}}
\newcommand{\0}{\mathbf{0}}
\newcommand{\Rcal}{\mathcal{R}}

\newcommand{\Pscr}{\mathscr{P}}

\newcommand{\SSS}{\mathbf{S}}
\newcommand{\rhodf}{\rho_{\text{diff}}}
\newcommand{\gaone}{\gamma_{\text{max}}^{(1)}}

\allowdisplaybreaks[4]

\title{Differentially Private Graph Diffusion with Applications in Personalized PageRanks}

%

\author{
    Rongzhe Wei\\
    Georgia Institute of Technology\\
    \texttt{rongzhe.wei@gatech.edu}
  \And
  Eli Chien \\
  Georgia Institute of Technology \\
  \texttt{ichien6@gatech.edu} \\
  \AND
  Pan Li \\
  Georgia Institute of Technology \\
  \texttt{panli@gatech.edu} \\
}

\begin{document}
\etocdepthtag.toc{mtchapter}
\etocsettagdepth{mtchapter}{subsection}
\etocsettagdepth{mtappendix}{none}

\maketitle

\begin{abstract}
    Graph diffusion, which iteratively propagates real-valued substances among the graph, is used in numerous graph/network-involved applications. However, releasing diffusion vectors may reveal sensitive linking information in the data such as transaction information in financial network data. However, protecting the privacy of graph data is challenging due to its interconnected nature.
    This work proposes a novel graph diffusion framework with edge-level differential privacy guarantees by using \emph{noisy diffusion iterates}.
    The algorithm injects Laplace noise per diffusion iteration and adopts a degree-based thresholding function to mitigate the high sensitivity induced by low-degree nodes. Our privacy loss analysis is based on Privacy Amplification by Iteration (PABI), which to our best knowledge, is the first effort that analyzes PABI with Laplace noise and provides relevant applications.
    We also introduce a novel $\infty$-Wasserstein distance tracking method, which tightens the analysis of privacy leakage and makes PABI more applicable in practice. 
    We evaluate this framework by applying it to Personalized Pagerank computation for ranking tasks. Experiments on real-world network data demonstrate the superiority of our method under stringent privacy conditions.
\end{abstract}
\vspace{-2mm}
\section{Introduction} \label{sec.intro}
\vspace{-2mm}
Graph diffusion, characterized by propagating signals across networks, is used in a variety of real-world applications. Variants of graph diffusion such as PageRank~\cite{page1999pagerank} and heat kernel diffusion~\cite{chung2007heat} has revolutionized the domains such as web searching~\cite{cho2007rankmass}, community detection~\cite{andersen2007using,fortunato2010community,kloster2014heat,li2019optimizing}, network analysis~\cite{ivan2011web,gleich2015pagerank} and advancements in graph neural networks~\cite{gasteiger2018predict,gasteiger2019diffusion,chien2020adaptive,chamberlain2021grand}. 
Despite their widespread applications, directly releasing diffusion vectors can inadvertently leak sensitive graph information and raise privacy concerns. Hoskins et al.~\cite{hoskins2018inferring} demonstrate that the access to a small subset of random walk-based similarities (e.g., commute times, personalized PageRank scores) could disclose significant portions of network's edges, a phenomenon known as \textit{link disclosure}~\cite{backstrom2007wherefore}.
Such attacks, for instance, may enable advertisers to deploy invasive 
advertising tactics~\cite{ullah2020privacy} or reveal sensitive transaction information within financial networks~\cite{zhong2020financial}. 
Consequently, it becomes critically important to design graph diffusion algorithms 
with privacy safeguards. 

Differential privacy (DP) is recognized as a gold standard used for characterizing the privacy risk of data processing algorithms~\cite{dwork2006calibrating}. 
However, the inherently interconnected nature of graph-structured data renders the adaptation of DP to graphs non-trivial~\cite{liu2016dependence}. Previous studies often conduct the analysis of output sensitivity and adopt output perturbation to keep graph data private, which include the study on differentially private personalized PageRanks (PPRs)~\cite{epasto2022differentially}, and other relevant graph algorithms such as max flow-min cut~\cite{gupta2010differentially}, graph sparsification~\cite{arora2019differentially}, spectral analysis~\cite{wang2022privacy,wang2013differential}. 
However, output perturbation-based approaches often provide a less-than-ideal utility-privacy tradeoff. Numerous studies suggest that incorporating noise during the process, rather than at the output, can potentially enhance utility-privacy tradeoffs~\cite{abadi2016deep}. 

In this work, we introduce a graph diffusion framework that ensures edge-level DP guarantees based on noisy diffusion iterates. Our framework is the \textit{first} to incorporate privacy amplification by iterations (PABI) technique~\cite{feldman2018privacy} into graph diffusions.
As  graph diffusion can be viewed as iterating contraction maps in $\ell_1$ space, we adopt per-iterate Laplace noise due to its better performance than the Gaussian mechanism commonly adopted in previous PABI frameworks and provide new analysis dedicated to Laplace noise. We also propose a novel $\infty$-Wasserstein distance tracking analysis that can tighten the state-of-the-art PABI bound~\cite{altschuler2022privacy} that relies on the space diameter, which makes the bound valid for practical usage.  
Noticing diffusion from low-degree nodes may introduce high sensitivity, our framework also also proposes a theory-informed \textit{degree-based thresholding function} at each step diffusion to improve the utility-privacy tradeoff. Lastly, we specialize our framework in the computation of PPR for node ranking tasks. Extensive experiments reveal the advantages of our framework over baselines, especially under stringent privacy requirements.
\vspace{-2mm}
\subsection{More Related Works}
\vspace{-2mm}

Extensive research has been dedicated to privacy protection within graphs, specifically in the release of graph statistics and structures under DP guarantees~\cite{ji2016graph,abawajy2016privacy}. The primary techniques to safeguard graph structures involve the Laplace and exponential mechanisms~\cite{li2023private}. Early contributions by Nissim et al.\cite{nissim2007smooth} calibrated noise based on the smooth sensitivity of graph queries, expanding beyond output perturbation. Karwa et al.\cite{karwa2011private} improved the efficiency of privacy-preserving subgraph-counting queries by calibrating Laplace noise according to smooth sensitivity. Different from these methods, Zhang et al.~\cite{zhang2015private} employed the exponential mechanism~\cite{mcsherry2007mechanism} to enhance privacy protections. Concurrently, Hay et al.\cite{hay2009accurate} developed a constraint-based inference algorithm as a post-processing step to improve the quality of degree sequences derived from output perturbation mechanisms. Further, Kasiviswanathan et al.\cite{kasiviswanathan2013analyzing} used a top-down degree projection technique to limit the maximum degree in graphs, thus controlling the sensitivity of degree sequence queries. Additional efforts in privately releasing graph statistics include outlinks~\cite{task2012guide}, cluster coefficients~\cite{wang2013learning}, graph eigenvectors~\cite{ahmed2019publishing}, and edge weights~\cite{li2017differential}. 

In the realm of graph diffusion, the most related work to ours is by Epasto et al.~\cite{epasto2022differentially}, which focuses on releasing the PPR vector using \textit{forward push}~\cite{andersen2006local} and Laplace output perturbation. Some studies have shown that injecting noise during the process may offer better privacy-utility trade-offs compared to output perturbation methods~\cite{abadi2016deep} and our study compared with~\cite{epasto2022differentially} provides another use case. Traditionally, the composition theorem~\cite{dwork2014algorithmic} is used to track privacy guarantees for iterative algorithms, but it results in a bound that may diverge with the number of iterations. Recently, the technique of PABI~\cite{feldman2018privacy,altschuler2022privacy} was introduced to strengthen the privacy analysis of adding noise during the process, which demonstrates a non-divergent privacy bound for releasing final results if the iterations adopt contraction maps~\cite{altschuler2022privacy}. This substantially tightens the divergent bound given by the naive application of the DP composition theorem~\cite{kairouz2015composition, mironov2017renyi}. Our framework also benefits from this advantage and we further adapt Altschuler et al.'s analysis~\cite{altschuler2022privacy} to incorporate the Laplace mechanism and provide a tightened space diameter tracking, which makes the bound practically applicable in the graph diffusion application. Besides, works that share a similar spirit in leveraging PABI have been conducted for other scenarios including machine unlearning \cite{chien2024stochastic} and improving hidden state DP \cite{chien2024convergent}.
\vspace{-2mm}
\section{Preliminaries}\label{sec.preliminary}
\vspace{-2mm}
Let $\GG = (\VV, \EE)$ represent an undirected graph, where $\VV$ is the set of nodes and $\EE$ is the set of edges, equipped with an adjacency matrix $\A \in \{0, 1\}^{n \times n}$, where $n$ denotes the total number of nodes, i.e., $n=|\VV|$. By establishing an order for the nodes within the graph, we denote $\dd = [d_1, d_2, ..., d_n]^T$ as the degree vector. Additionally, let $\D=\text{diag}(\dd)$ and $\e_i$ signifies the $i$-th standard basis. We denote $\lap(\0, \sigma)$ and $\gau(\0, \sigma^2 \I)$ as the zero mean Laplace and Gaussian distributions, respectively. We define the set $[n] = \{1, 2, \ldots, n\}$, and $X_{i:j}, i, j \in \mathbb{Z}_+, i \leq j$ as joint couple of $(X_i, X_{i+1}, \ldots, X_j)$. 

\textbf{Graph Diffusion.} First, we introduce the concept of \textbf{Graph Diffusion} $\Dscr$, which is commonly characterized by a series of diffusion map $\phi_k$ defined by the random walk matrix $\Pbf = \A\D^{-1}$~\cite{kloster2014heat,chien2020adaptive,ye2022graph}. Formally, We define the graph diffusion $\Dscr(\seed)$ with the initial seed $\seed$ as
\begin{align}
    \Dscr(\seed) = \lim_{K \to \infty} \s_K = \lim_{K \to \infty} \phi_K \circ \cdots \circ \phi_1(\seed),\; \text{where } \phi_k(\x) = \left(\gamma_{1, k}\Pbf + \gamma_{2, k}\right)\x + \gamma_{3, k}\seed \label{eq.graph_diffusion}.
\end{align}
where $\seed \in \mathbb{R}^{|\VV|}$ is a stochastic vector on the graph, and $\gamma_{1, k} + \gamma_{2, k} + \gamma_{3, k} = 1$. Let $\gamma_{\text{max}} = \max_k  |\gamma_{1, k}| + |\gamma_{2, k}|$ denote the Lipschitz constant of the graph diffusion mapping, and $\gaone = \max_{k} |\gamma_{1, k}|$ denote the maximum diffusion coefficient. $\s_K$ is the diffusion vector at time $K$. The essence of a diffusion process is to model how an initial vector $\seed$ propagates through the graph over time. Coefficient $\gamma_{i, k}$'s control how different resources contribute to the $k$th step diffusion. When taking $\gamma_{1,k} = 1 - \gamma_{3, k} = \beta$ and $\gamma_{2,k}=0$, Eq.~\eqref{eq.graph_diffusion} is recognized as the PageRank Diffusion~\cite{page1998pagerank} with teleport probability $1 - \beta$. The Exponential kernel diffusion, which includes the specific case of the Heat Kernel diffusion~\cite{chung2007heat}, can also be characterized with the composition of diffusion mappings via the infinitely divisible property~\cite{kondor2004diffusion}.

\textbf{Personalization.} Personalized graph diffusions, tailored to individual nodes or localized neighborhoods, play a crucial role in many real-world applications. These include recommendation systems~\cite{goksedef2010combination}, where personalized diffusions improve suggestion relevance, community detection for identifying subgroups within larger networks~\cite{li2019optimizing}, targeted marketing strategies for enhancing campaign effectiveness~\cite{hwang2021identifying}. These diffusions are defined by setting the graph diffusion vector $\s$ as $\e_i$ for an individual node or $\s = \sum_{i \in S} \e_i / |S|$ for a neighborhood set $S$. In this paper, we primarily discuss the single-node case while our analysis can be generalized to a set of seed nodes.

\textbf{Privacy Definition.} 
Differential Privacy (DP)\cite{dwork2006calibrating,dwork2014algorithmic} is widely recognized as the standard framework for providing formal privacy guarantees for algorithms that process sensitive data. This framework has further been extended under R\'enyi divergence~\cite{mironov2017renyi}. Its principles have been applied to safeguard sensitive structures within graph algorithms, an metric noted as Edge-level R\'enyi Differential Privacy (RDP). Details on the conversion from RDP to DP are elaborated in App.~\ref{app.RDP_DP}.

\begin{definition}[Edge-level RDP\cite{sajadmanesh2023gap,chien2024differentially}]
    A randomized graph algorithm $\Acal$ is $(\alpha, \epsilon)$-edge-level RDP if for any adjacent graphs $\GG, \GG'$ that differs in a single edge, we have
         $\mathcal{D}_{\alpha}(\Acal(\GG) \| \Acal(\GG')) \leq \epsilon$, 
    where the R\'enyi Divergence $\mathcal{D}_{\alpha}(X \| Y) = \frac{1}{\alpha - 1} \log \mathbb{E}_{x \sim \nu}\left(\frac{\mu(x)}{\nu(x)}\right)^{\alpha}$ with $X \sim \mu, Y \sim \nu$.
\end{definition}

More practical cases find that the seed node (user) of personalized graph diffusion algorithms is already aware of their direct connections within the network, such as one's friend list in social networks, and one's transaction record in financial networks, and therefore protecting the edges directly attached to the seed node becomes unnecessary. Instead, the focus of privacy protection shifts towards obscuring the connections between the remaining nodes. To address this specific need, we follow the previous study~\cite{epasto2022differentially} and introduce Personalized Edge-level RDP.

\begin{definition}[Personalized Edge-level RDP\cite{kearns2014mechanism,epasto2022differentially}]\label{def.personalized_edge_rdp}
    Consider a graph $\GG$ and a personalized graph algorithm $\Acal$. The algorithm $\Acal$ satisfies personalized $(\alpha, \epsilon)$-edge-level RDP if for any node $v$ as the seed node in $\GG$, and for any graph $\GG'$ adjacent to $\GG$ differing by one edge not incident to $v$, we have
        $\mathcal{D}_{\alpha}(\Acal(\GG, v) \| \Acal(\GG', v)) \leq \epsilon.$
\end{definition}
\vspace{-2mm}
\section{Methodology} \label{sec.main}
\vspace{-2mm}
This study centers on a category of $\gamma_{\text{max}} < 1$ Lipschitz continuous graph diffusions, encompassing prevalent techniques such as PageRank~\cite{page1999pagerank} and PPR~\cite{jeh2003scaling}. It is noted that each diffusion map $\phi_k$ within graph diffusion maintains the total mass of the diffusion vector $\s_k$, owing to the condition $\sum_{i\in[3]}\gamma_{i,k}=1$ and the property of the random walk matrix $\mathbf{P}$ being a left stochastic matrix. This observation entails that the diffusion map $\phi_k$ in Eq.~\eqref{eq.graph_diffusion} constitutes a strictly contraction map in the metric space $(\mathbb{R}^{|\VV|}, \|\cdot\|_1)$. Consequently, graph diffusion can be construed as a composite of contraction maps. The PABI technique has been devised to privatize contractive iterations by injecting random noise per iteration.  
Empirical studies suggest that distributing noise throughout the diffusion steps can provide improved utility-privacy trade-offs compared to output perturbation alone~\cite{abadi2016deep}. This insight serves as a key motivation for employing PABI to establish privacy-preserving graph diffusion.

\subsection{Preliminaries: Privacy Amplification by Iteration} \label{subsec.PABI}
The technique of Privacy Amplification by Iteration (PABI), originally introduced by Feldman et al.~\cite{feldman2018privacy} for convex risk minimization problems via noisy gradient descent, bounds the privacy loss of an iterative algorithm without releasing the full sequence of iterates. This approach applies to processes generated by Contractive Noisy Iteration (CNI) defined as follows.

\begin{definition}[Contractive Noisy Iteration (CNI)~\cite{feldman2018privacy}]\label{def.CNI}
    Consider a Banach space $(\Xcal, \|\cdot\|)$ with an initial random state $X_0 \in \Xcal$, a series of contractions (i.e., c-Lipschitz functions, $c \leq 1$) $\psi_k: \Xcal \to \Xcal$, and a sequence of noise random variables $\{\xi_k\}$. Defining $\Bcal$ as a convex bounded set, the Contractive Noisy Iteration $\text{CNI}(X_0, \{\psi_k\}, \{\xi_k\}, \Bcal)$ is governed by the update rule:
    \begin{align}
        X_{k+1} = \Pscr_{\Bcal}[\psi_t(X_k) + \xi_{k+1}] \label{eq.CNI}
    \end{align}
    where $\Pscr_\Bcal$ is the projection operator onto $\Bcal$, respecting the norm $\|\cdot\|$. 
\end{definition}

In the PABI analysis by Feldman et al.\cite{feldman2018privacy}, gradient descent is conceptualized as a contractive mapping $\psi_t$ in the $\ell_2$ space. Leveraging an additive Gaussian noise mechanism after each iteration, i.e., $\xi_k \sim \gau(\0, \sigma^2 \I)$, leads to the observation that the R\'enyi divergence of identical CNIs with differing initial conditions $X_0$ and $X_0'$ decays inversely with respect to the total number of iterations $K$. Specifically, it is observed that $\Dcal_{\alpha} (X_K \| X'_K) \leq \frac{\alpha \|\X_0 - \X_0'\|_2}{2K\sigma^2}$. Altschuler et al.\cite{altschuler2022privacy} further extended this framework with improved bound as follows:
\begin{proposition}\label{prop.PABI}
    Let \(X_K\) and \(X'_K\) denote the outputs from \(\text{CNI}(X_0, \{\psi_k\}, \{\xi_k\}, \Bcal)\) and \(\text{CNI}(X_0, \{\psi_k'\}, \{\xi_k'\}, \Bcal)\), respectively, where \(\xi_k, \xi_k' \sim \mathcal{N}(0, \sigma^2 I_d)\). Define distortion \(\rho := \sup_{k,x} \| \psi_k(x) - \psi'_k(x) \|\) and let $\Bcal$ have diameter $D$. If $\{\psi_k\}$ and $\{\psi_k'\}$ are contractions with coefficient $c < 1$, then for any $\tau \in \{0, \ldots, K-1\}$,
    \begin{align}
        \Dcal_{\alpha} (X_K \| X'_K) \leq \frac{\alpha}{\sigma^2}\biggl[\underbrace{(K - \tau)\rho^2\vphantom{c^{2(K - \tau)}D^2}}_{\text{Distortion Absorption}} + \underbrace{c^{2(K - \tau)}D^2\vphantom{(T - \tau)\rho^2}}_{\text{PABI}}\biggl] \label{eq.PABI}.
    \end{align}
\end{proposition}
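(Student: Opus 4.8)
The plan is to reproduce the shifted-Rényi-divergence argument of Altschuler--Talwar, adapted to this CNI setup with an explicit diameter initialization. Write $\mu_k=\mathrm{Law}(X_k)$, $\nu_k=\mathrm{Law}(X'_k)$, and work with the \emph{shifted Rényi divergence}
$\Dcal_\alpha^{(z)}(\mu\|\nu):=\inf\{\Dcal_\alpha(\mu'\|\nu):W_\infty(\mu,\mu')\le z\}$, where $W_\infty$ is the $\infty$-Wasserstein distance induced by $\|\cdot\|$; it is non-increasing in $z$, equals $\Dcal_\alpha$ at $z=0$, and vanishes whenever $W_\infty(\mu,\nu)\le z$. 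I would track $\Dcal_\alpha^{(z_k)}(\mu_k\|\nu_k)$ for $k=\tau,\dots,K$ along a shift schedule $z_\tau\ge z_{\tau+1}\ge\cdots\ge z_K=0$ chosen below, so that the quantity at $k=K$ is precisely $\Dcal_\alpha(\mu_K\|\nu_K)$.

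Three facts drive the step-by-step recursion. (i) \emph{Contraction with distortion}: if $f,g$ are $c$-Lipschitz with $\sup_x\|f(x)-g(x)\|\le\rho$, then $\Dcal_\alpha^{(cz+\rho)}(f_\#\mu\|g_\#\nu)\le\Dcal_\alpha^{(z)}(\mu\|\nu)$ --- take a near-optimal $\mu'$ with $W_\infty(\mu,\mu')\le z$, use the triangle inequality $W_\infty(f_\#\mu,g_\#\mu')\le W_\infty(f_\#\mu,f_\#\mu')+W_\infty(f_\#\mu',g_\#\mu')\le cz+\rho$, and then the data-processing inequality $\Dcal_\alpha(g_\#\mu'\|g_\#\nu)\le\Dcal_\alpha(\mu'\|\nu)$; the special case $c=1,\rho=0$ handles the projection $\Pscr_\Bcal$, which is $1$-Lipschitz onto a convex set. (ii) \emph{Gaussian shift reduction} (Feldman et al.): for any $a\ge0$, $\Dcal_\alpha^{(z)}\big(\mu*\gau(\0,\sigma^2\I)\,\|\,\nu*\gau(\0,\sigma^2\I)\big)\le\Dcal_\alpha^{(z+a)}(\mu\|\nu)+\tfrac{\alpha a^2}{2\sigma^2}$, using $\Dcal_\alpha(\gau(v,\sigma^2\I)\|\gau(\0,\sigma^2\I))=\tfrac{\alpha\|v\|^2}{2\sigma^2}$. (iii) Since $X_\tau,X'_\tau$ both lie in $\Bcal$ (and coincide when $\tau=0$), $W_\infty(\mu_\tau,\nu_\tau)\le D$.

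Chaining (i)--(ii): because $\mu_{k+1}=(\Pscr_\Bcal)_\#\big((\psi_k)_\#\mu_k*\gau(\0,\sigma^2\I)\big)$ and likewise $\nu_{k+1}$ with $\psi'_k$, I would apply projection-contraction, then shift reduction with budget $a_{k+1}$, then contraction-with-distortion for the pair $(\psi_k,\psi'_k)$, obtaining
\[
\Dcal_\alpha^{(z_{k+1})}(\mu_{k+1}\|\nu_{k+1})\;\le\;\Dcal_\alpha^{(z_k)}(\mu_k\|\nu_k)+\frac{\alpha a_{k+1}^2}{2\sigma^2},\qquad\text{provided } z_{k+1}+a_{k+1}=cz_k+\rho .
\]
Unrolling from $k=\tau$ to $K-1$ gives $\Dcal_\alpha(\mu_K\|\nu_K)\le\Dcal_\alpha^{(z_\tau)}(\mu_\tau\|\nu_\tau)+\sum_{k=\tau+1}^K\tfrac{\alpha a_k^2}{2\sigma^2}$, and by (iii) the choice $z_\tau=D$ kills the first term. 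Taking $z_k=c^{k-\tau}D$ for $\tau\le k\le K-1$ and $z_K=0$ forces $a_k=\rho$ for $\tau<k<K$ and $a_K=c^{K-\tau}D+\rho$, all nonnegative; then $\sum_{k=\tau+1}^K a_k^2=(K-\tau-1)\rho^2+(c^{K-\tau}D+\rho)^2\le 2(K-\tau)\rho^2+2c^{2(K-\tau)}D^2$ via $(x+y)^2\le 2x^2+2y^2$ and $K-\tau\ge1$, which yields exactly the claimed bound.

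I expect the only subtle point to be the bookkeeping of which argument the shift is attached to: the shift perturbs the first slot of $\Dcal_\alpha^{(z)}(\cdot\|\cdot)$, so the $\psi_k$-vs-$\psi'_k$ distortion must be re-routed through the first slot, which is what the triangle-inequality step in (i) accomplishes (applying data processing with $\psi'_k$, not $\psi_k$). The Gaussian shift-reduction lemma (ii) is the technical workhorse, but it is quoted verbatim from Feldman et al., and the Rényi divergence between shifted Gaussians is classical; monotonicity of the shifted divergence, the $W_\infty$ triangle inequality, and $1$-Lipschitzness of Euclidean projection onto a convex set are routine, so no further obstacle is anticipated.
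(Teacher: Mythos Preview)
Your argument is correct: the shifted-R\'enyi recursion with the schedule $z_k=c^{k-\tau}D$, $a_k=\rho$ for $\tau<k<K$, $a_K=c^{K-\tau}D+\rho$, together with $(x+y)^2\le 2x^2+2y^2$, yields exactly the stated bound. The paper does not actually prove Proposition~\ref{prop.PABI}; it is quoted as a known result from Altschuler--Talwar, so strictly speaking there is nothing to compare against.

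That said, it is worth noting that your route differs from the one the paper takes for its own analogue, Theorem~\ref{thm.privacy_noisy_graph_diffusion}. There the authors \emph{split} the noise into two independent copies $\xi_k^{(1)},\xi_k^{(2)}$, use post-processing plus strong composition (Lemmas~\ref{lm.post-processing}--\ref{lm.strong_composition}) to separate the bound into a ``distortion absorption'' term governed by $\xi^{(2)}$ and a ``PABI'' term in which, after conditioning on $\xi^{(2)}=\tilde\xi'^{(2)}$, the two chains run \emph{identical} contractions; the PABI piece is then handled by a shifted-R\'enyi recursion with \emph{zero} per-step distortion (all $a_k=0$ except $a_K$). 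Your single-chain recursion instead absorbs the $\rho$-distortion directly inside the contraction-reduction step, paying $\rho$ at every iteration and the residual $c^{K-\tau}D$ only at the end. Both are valid; your approach is slightly more economical for Gaussian noise (no factor-of-two loss from splitting the variance, recovered here only through the crude $(x+y)^2\le 2x^2+2y^2$), while the paper's noise-splitting decouples the two terms cleanly and is what makes the Laplace extension tractable, since the Laplace R\'enyi formula $g_\alpha(\sigma,\rho)$ does not decompose additively along coordinates the way the Gaussian one does.
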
 

The bound in Eq.~\eqref{eq.PABI} demonstrates that the R\'enyi divergence between two CNIs can be quantified by the cumulative R\'enyi divergence over Gaussian noise with a distortion factor $\rho$ (the Distortion Absorption term), complemented by a PABI term. The latter indicates that 
identical contractive transformations applied to bounded processes reduce privacy leakage in an exponential manner. Note that the bound in Eq.~\eqref{eq.PABI} is convex with respect to $\tau$, optimized selection of $\tau$ leads to non-divergent upper bound $\tilde\rho (\ln(D^2/\tilde\rho) + 1)$ where $\tilde\rho = \rho^2 / 2\ln(1/c)$.

\textbf{Note on Parameter Set Diameter $\boldsymbol{D}$.} The privacy bound in Eq.~\eqref{eq.PABI} relies on the assumption of bounded diameter $D$ of the parameter set $\Bcal$ to upper bound $\infty$-Wasserstein distance (definition in App.~\ref{apx.prelim}) between the coupled CNI processes $X_\tau$ and $X_\tau'$. Although in theory, the upper bound of Eq.~\eqref{eq.PABI} only depends on $\log D$ by optimizing $\tau$, we notice that the value $D$ is important to get a \emph{practically meaningful} privacy bound. 
To tighten this term, we will introduce a novel $\infty$-Wasserstein distance tracking method that circumvents the need for the diameter parameter in Lemma~\ref{lm.main_PABI_wasserstein} (detailed later): A high-level idea is to track the $\infty$-Wasserstein distance between noisy iterates via constructed couplings instead of using the default set diameter as an upper bound. 

\textbf{Note on Noise Random Variables $\boldsymbol{\xi_k}$.} The traditional PABI analysis primarily examines gradient descent within the $\ell_2$ space employing the Gaussian mechanism. In contrast, in the context of graph diffusions, modifications to Proposition~\ref{prop.PABI} are necessary to accommodate the $\ell_1$ norm and the application of Laplace noise. This adaptation to the Laplace mechanism, crucial for the graph diffusion applications, has not been previously addressed in the literature to our knowledge.

\subsection{Private Graph Diffusions} \label{subsec.private_graph_diffusion}
We now introduce our noisy graph diffusion framework, designed to ensure edge-level RDP and its personalized variant. Our approach consists of injecting Laplace noise into the contractive diffusion process and integrating a graph-dependent thresholding function to mitigate the high sensitivity associated with perturbations of low-degree nodes.

Given a graph diffusion process $\Dscr = \{\phi_k\}_{k = 1}^\infty$, we introduce a \textbf{\textit{noisy graph diffusion}} $\Dscr_{K, \sigma}$ where $K$ denotes the diffusion steps and $\sigma$ is the standard deviation of the added noise, constructed by a series of composing \textbf{\textit{noisy graph diffusion mappings}} $\phi_{k, \sigma}$:
\begin{align}
    \Dscr_{K, \sigma} = \phi_{K, \sigma} \circ \phi_{K-1, \sigma} \circ \cdots \circ \phi_{1, \sigma}, \; \text{where } \phi_{k, \sigma}(\s_{k-1}) = \phi_k(f(\s_{k-1})) + \xi_k^{(1)} + \xi_k^{(2)} \label{eq.main_alg}.
\end{align}
where $f$ is a \textit{graph-dependent degree-based function} set as $f(\x) = \min(\max(\x, -\eta \cdot \mathbf{d}), \eta \cdot \mathbf{d})$ with a threshold parameter $\eta$ to balance privacy-utility trade-off.
Specifically, $f$ clips the values of the diffusion vector according to node degrees. Notably, the thresholding function \( f \) allows for negative signals, capturing scenarios where the diffusion coefficient \( \gamma_{1, k} \) can be negative. Noise variables $\xi_k^{(1)}$ and $\xi_k^{(2)}$ are independently sampled from a Laplace distribution $\lap(\0, \sigma)$. 
It is noteworthy that our framework can also be extended to accommodate Gaussian distributions. However, Gaussian noise has been shown to be suboptimal for graph diffusion in $\ell_1$ space, with empirical evidence provided in App.~\ref{app.exp_ablation}.

\textbf{Design of Thresholding Function $\boldsymbol{f}$.} In the noisy graph diffusion process, the role of the graph-dependent thresholding function $f$ is twofold. Firstly, $f$ ensures a bounded distance between the coupled diffusions over two adjacent graphs, analogous to the role of the general projection operator $\Pscr_\Bcal$ in the standard CNI as defined in Eq.~\eqref{eq.CNI}. \begin{wrapfigure}{r}{6cm}
    \centering
    \vspace{-3mm}
    \includegraphics[height=5cm]{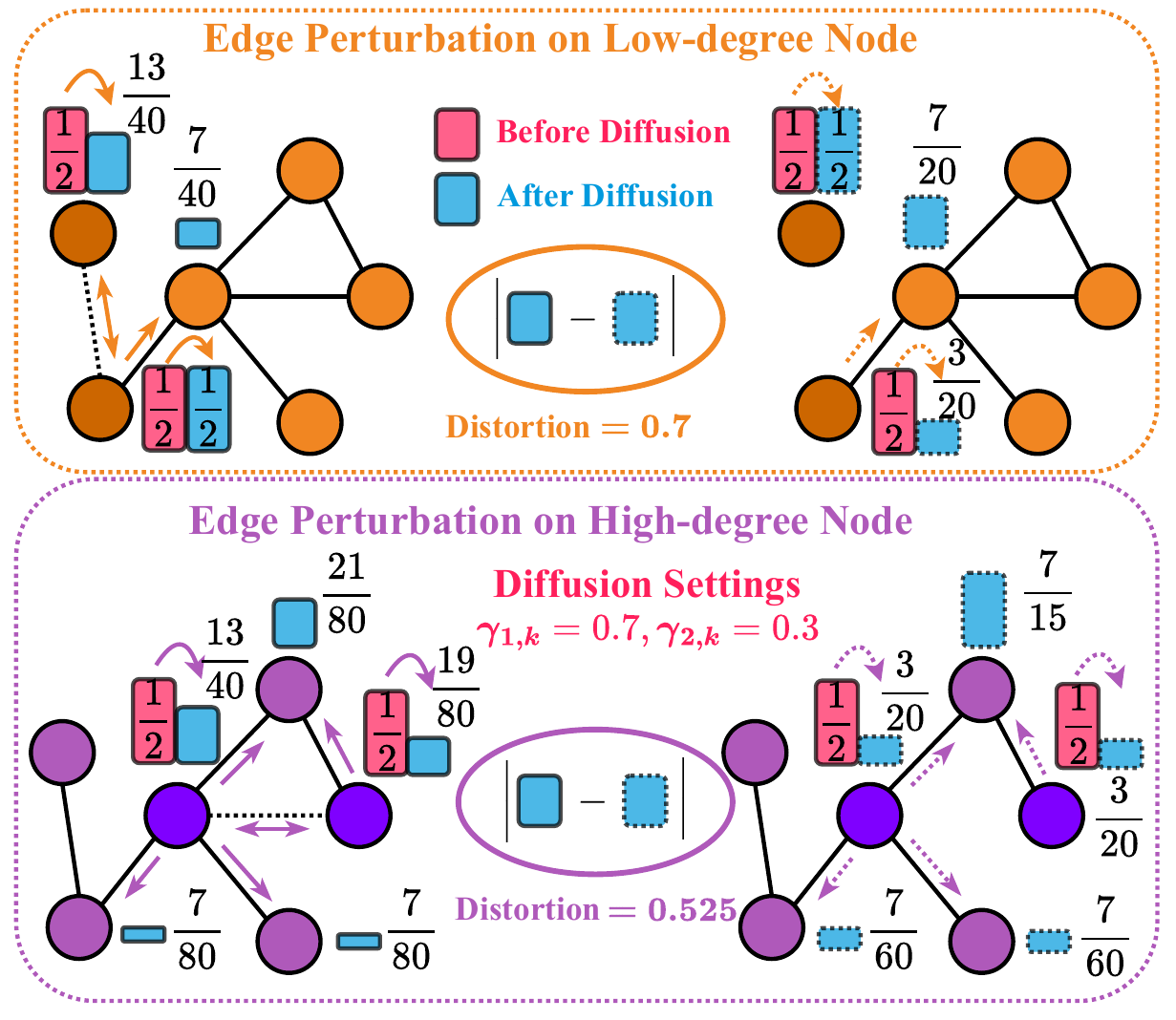}
    \vspace{-2mm}
    \small{\caption{Illustration of Distortion from Edge Perturbations over Adjacent Graphs for Nodes with Low and High Degrees.\label{fig.degree_based_control}}}
    \vspace{-3mm}
\end{wrapfigure}
Such a bounding effect is also crucial for the later analysis of $\infty$-Wasserstein distance tracking in Lemma~\ref{lm.main_PABI_wasserstein}. Secondly, and more critically, our theoretical analysis reveals that edge perturbation affecting low-degree nodes results in increased distortion at each diffusion step (illustrated in Fig.~\ref{fig.degree_based_control}). Uniform thresholding coupled with randomness injection for all nodes typically yields suboptimal performance in such cases. Our degree-dependent design naturally controls the distortion per iteration caused by low-degree nodes which helps with reducing the added noise. More detailed distortion analysis on $f$ is shown later in  Lemma~\ref{lm.main_shift_absorption}. 
The threshold parameter $\eta$ is commonly employed to optimize the privacy-utility trade-off in practical applications\cite{epasto2022differentially}. The empirical benefits of $f$ are explored in experiments detailed in Sec.~\ref{subsec.ablation_study}.

\textbf{Discussion on Dual Noise Injection.} Our framework employs a noise-splitting technique, injecting dual Laplace noise at each diffusion step to construct non-divergent privacy bounds, as outlined in Eq.~\eqref{eq.PABI}. Theoretical justifications for this design is provided in the proof sketch.

Following this, we present our main result on the privacy guarantee of noisy graph diffusion:
\begin{theorem}[Privacy Guarantees of Noisy Graph Diffusions]\label{thm.privacy_noisy_graph_diffusion}
    Given a graph $\GG$, an associate graph diffusion $\Dscr = \{\phi_k\}_{k = 1}^\infty$, then noisy graph diffusion mechanism $\Dscr_{K, \sigma}$ ensures edge-level $(\alpha, \epsilon)$-RDP with $\epsilon$ satisfies:
    \begin{align}
        \epsilon \leq \min_{\tau \in \{0, 1, ..., K-1\}} \left[(K-\tau) \cdot g_\alpha\left(\sigma, \rhodf \right) + g_\alpha\left(\sigma, \frac{\rhodf \cdot (1 - \gamma_{\text{max}}^{\tau})}{1 - \gamma_{\text{max}}} \cdot \gamma_{\text{max}}^{K-\tau}\right) \right] \label{eq.main}
    \end{align}
    where $g_\alpha(\sigma, \rho) = \frac{1}{\alpha - 1} \ln(\frac{\alpha}{2\alpha - 1}\exp(\frac{\alpha - 1}{\sigma}\rho) + \frac{\alpha - 1}{2\alpha - 1}\exp(-\frac{\alpha}{\sigma}\rho))$ denotes the R\'enyi divergence induced by the Laplace mechanism~\cite{mironov2017renyi}, and $\rhodf = \max(4\gaone, 2\gamma_{\text{max}}) \cdot \eta$ represents the maximum single-step distortion incurred by diffusion on adjacent graphs that
    involves Lipschitz continuity coefficient $\gamma_{\text{max}}$, and maximum diffusion coefficient $\gaone$.

    By selecting $\tau = \lceil K - \ln((\frac{1}{\rhodf} + \frac{1}{1 - \gamma_{\text{max}}})\ln\frac{1}{\gamma_{\text{max}}}) / \ln(\frac{1}{\gamma_{\text{max}}})\rceil$, privacy budget $\epsilon$ remains bounded by
    \begin{align}
        \epsilon \lesssim \frac{\rhodf}{\sigma \cdot \ln\left(\frac{1}{\gamma_{\text{max}}}\right)} \left[\ln\left(\left(\frac{1}{\rhodf} + \frac{1}{1 - \gamma_{\text{max}}}\right)\ln\frac{1}{\gamma_{\text{max}}}\right) + 1\right]. \label{eq.convergence}
    \end{align}
\end{theorem}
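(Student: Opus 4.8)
\emph{Proof strategy.} The plan is to recast the noisy graph diffusion of Eq.~\eqref{eq.main_alg} as a contractive noisy iteration in $(\mathbb{R}^{|\VV|},\|\cdot\|_1)$ and then run the PABI argument behind Proposition~\ref{prop.PABI}, but with three adaptations: the Gaussian mechanism is replaced by the Laplace mechanism, so the per-step cost $\tfrac{\alpha}{\sigma^2}\rho^2$ becomes $g_\alpha(\sigma,\rho)$; the crude set diameter $D$ is replaced by a directly tracked $\infty$-Wasserstein distance; and the dual noise $\xi_k^{(1)}+\xi_k^{(2)}$ supplies, at every step, one undisturbed Laplace draw for distortion absorption and one for the contraction term — this splitting cannot be pushed into the analysis alone as in the Gaussian case, since Laplace laws are not stable under convolution. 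Concretely, set $\psi_k:=\phi_k\circ f$. Since $f$ is the coordinatewise clipping onto the box $\Bcal=\prod_i[-\eta d_i,\eta d_i]$ it is nonexpansive in $\|\cdot\|_1$, and $\phi_k$ is $\gamma_{\text{max}}$-Lipschitz there because $\Pbf=\A\D^{-1}$ is column-stochastic; hence each $\psi_k$ is a contraction with coefficient $\gamma_{\text{max}}<1$, and $\s_k=\psi_k(\s_{k-1})+\xi_k^{(1)}+\xi_k^{(2)}$ is of the CNI type in Definition~\ref{def.CNI}, with $f$ serving as the projection onto $\Bcal$, and both processes (over $\GG$ and $\GG'$) sharing the initial seed $\s_0=\e_v$.

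\emph{Step 1: single-step distortion.} I would first bound $\rhodf:=\sup_{k,\x}\|\psi_k(\x)-\psi_k'(\x)\|_1$ for graphs $\GG,\GG'$ differing in one edge not incident to the seed (Definition~\ref{def.personalized_edge_rdp}). The $\gamma_{2,k}\x$ and $\gamma_{3,k}\seed$ pieces of $\phi_k$ cancel, leaving $|\gamma_{1,k}|\,\|(\Pbf-\Pbf')f(\x)\|_1$; only the two columns of $\Pbf-\Pbf'$ indexed by the differing edge's endpoints are nonzero, and changing one degree by one moves $\ell_1$-mass at most $2/d_j$ out of column $j$. Clipping gives $|f(\x)_j|\le\eta d_j$, so each endpoint contributes at most $2\eta$ and $\rhodf\le 4\gaone\eta$; this is exactly where the degree-based $f$ is essential, as without it $|f(\x)_j|$ could be of order the total mass and the $1/d_j$ would blow up for low-degree endpoints. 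The $2\gamma_{\text{max}}\eta$ branch covers the degenerate configurations (an endpoint becoming isolated). This is Lemma~\ref{lm.main_shift_absorption}.

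\emph{Step 2: Laplace shifted-divergence calculus and assembly.} Next I would establish the $\ell_1$/Laplace analogues of the two PABI building blocks: through one convolution with $\lap(\0,\sigma)$, reducing an $\infty$-Wasserstein shift of magnitude $a$ costs at most $g_\alpha(\sigma,a)$ in Rényi divergence; and pushing a shift of magnitude $w$ through $\psi_k$ rescales it to $\gamma_{\text{max}}w$. The one genuinely new point is that for product-Laplace noise the divergence of a coordinatewise shift $a$ equals $\sum_i g_\alpha(\sigma,|a_i|)$, which is superadditive by convexity of $\rho\mapsto g_\alpha(\sigma,\rho)$ with $g_\alpha(\sigma,0)=0$, hence $\le g_\alpha(\sigma,\|a\|_1)$ — so the worst case concentrates the whole $\ell_1$ shift in one coordinate and reduces to Mironov's one-dimensional Laplace RDP formula. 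With $\xi_k^{(2)}$ coupled identically across the two processes and Step~1, the $\infty$-Wasserstein gap after $\tau$ steps is a contracted geometric series, $W_\infty(\s_\tau,\s_\tau')\le\rhodf\sum_{j=0}^{\tau-1}\gamma_{\text{max}}^{j}=\rhodf\tfrac{1-\gamma_{\text{max}}^{\tau}}{1-\gamma_{\text{max}}}$; this is Lemma~\ref{lm.main_PABI_wasserstein}, and it is what replaces $D$ in Eq.~\eqref{eq.PABI}. Over the last $K-\tau$ steps I then spend $g_\alpha(\sigma,\rhodf)$ per step on the $\xi^{(1)}$ draw to absorb the fresh distortion, while the accumulated gap contracts to $\gamma_{\text{max}}^{K-\tau}\rhodf\tfrac{1-\gamma_{\text{max}}^{\tau}}{1-\gamma_{\text{max}}}$ and is absorbed on the $\xi^{(2)}$ draws for a cost $g_\alpha\big(\sigma,\gamma_{\text{max}}^{K-\tau}\rhodf\tfrac{1-\gamma_{\text{max}}^{\tau}}{1-\gamma_{\text{max}}}\big)$; summing and minimizing over $\tau\in\{0,\dots,K-1\}$ yields Eq.~\eqref{eq.main}. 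Finally, substituting the stated $\tau$, using $K-\tau=\ln\!\big((\tfrac1{\rhodf}+\tfrac1{1-\gamma_{\text{max}}})\ln\tfrac1{\gamma_{\text{max}}}\big)/\ln\tfrac1{\gamma_{\text{max}}}$ and the corresponding value of $\gamma_{\text{max}}^{K-\tau}$, and bounding $g_\alpha(\sigma,\rho)\lesssim\rho/\sigma$, collapses Eq.~\eqref{eq.main} to Eq.~\eqref{eq.convergence}, the Laplace counterpart of the $\tilde\rho(\ln(D^2/\tilde\rho)+1)$ optimum noted after Proposition~\ref{prop.PABI}.

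\emph{Main obstacle.} The crux is Step~2: porting the shifted-Rényi-divergence calculus from Gaussian/$\ell_2$ to Laplace/$\ell_1$. The Gaussian proof exploits stability under convolution and the clean identity $\Dcal_\alpha(\gau(a,\sigma^2)\|\gau(0,\sigma^2))=\alpha a^2/2\sigma^2$; for Laplace neither holds, which is precisely why the mechanism must inject two independent Laplace draws per step (so that running shift-reduction at a step never consumes the noise needed for the next step's distortion) and why the multivariate-to-univariate reduction must be routed through the convexity/superadditivity of $g_\alpha$. Pinning down the exact constant $\max(4\gaone,2\gamma_{\text{max}})$ in $\rhodf$ through the edge-perturbation case analysis is the secondary technical chore.
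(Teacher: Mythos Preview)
Your overall architecture is right and mirrors the paper's: split the two Laplace draws so that one absorbs the fresh per-step distortion over the last $K-\tau$ steps while the other feeds the shift-reduction/contraction-reduction calculus, and replace the set diameter $D$ by a directly tracked $W_\infty$ geometric series. Your convexity/superadditivity reduction of the multivariate Laplace divergence to the scalar $g_\alpha(\sigma,\|a\|_1)$ is in fact cleaner than the paper's coordinate-by-coordinate ``adjustment'' argument and reaches the same conclusion.

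There is, however, a concrete gap in your Step~1. You assert that ``the $\gamma_{2,k}\x$ and $\gamma_{3,k}\seed$ pieces of $\phi_k$ cancel, leaving $|\gamma_{1,k}|\,\|(\Pbf-\Pbf')f(\x)\|_1$.'' They do not: the thresholding map is graph-dependent, $f(\x)=\min(\max(\x,-\eta\dd),\eta\dd)$, so on $\GG'$ the map is $f'$ built from the perturbed degrees, and
\[
\psi_k(\x)-\psi_k'(\x)=\gamma_{1,k}\bigl(\Pbf f(\x)-\Pbf' f'(\x)\bigr)+\gamma_{2,k}\bigl(f(\x)-f'(\x)\bigr),
\]
with the second summand nonzero precisely at the two endpoints of the perturbed edge. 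The paper handles this by adding and subtracting $\phi_k'(f(\x))$, which produces both a $(\Pbf-\Pbf')f(\x)$ piece and a $(\gamma_{1,k}\Pbf'+\gamma_{2,k}\I)(f(\x)-f'(\x))$ piece; a case analysis on whether $x_j$ lies below $(d_j-1)\eta$, between $(d_j-1)\eta$ and $d_j\eta$, or above $d_j\eta$ then yields the bound $\max\bigl(4|\gamma_{1,k}|,\,2(|\gamma_{1,k}|+|\gamma_{2,k}|)\bigr)\eta$. It is this second piece, not ``an endpoint becoming isolated,'' that generates the $2\gamma_{\text{max}}\eta$ branch; the isolated-endpoint case is treated separately and gives a strictly smaller constant. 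Without the $f\neq f'$ correction your computation would deliver only $\rhodf=4\gaone\eta$ and would not explain the stated $\max(4\gaone,2\gamma_{\text{max}})\eta$.

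One minor slip: the ``not incident to the seed'' restriction belongs to Theorem~\ref{thm.privacy_personalized}, not to Theorem~\ref{thm.privacy_noisy_graph_diffusion}; the general edge-level statement you are proving allows the perturbed edge to touch the seed.
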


The privacy bound in Eq.~\eqref{eq.main} consists of two components: the distortion absorption term (the first term on the RHS) and the PABI term (the second term in RHS). Distortion absorption quantifies the cumulative R\'enyi divergence over Laplace noise with single-step distortion $\rhodf$, while the PABI term quantifies the exponential decay rate, echoing the result in Eq.~\eqref{eq.PABI}. However, a key difference lies in our approach; instead of leveraging the projected set diameter $D$ to control the distance between coupled CNIs, our proposed $\infty$-Wasserstein tracking method yields a more practical term, $\frac{\rhodf \cdot (1 - \gamma_{\text{max}}^{\tau})}{1 - \gamma_{\text{max}}}$. Further details and utility evaluations of this tool are presented in the proof sketch and Sec.~\ref{subsec.ablation_study}, respectively. 

The function $g_\alpha(\sigma, \rho)$, which measures Rényi divergence for the Laplace mechanism, increases with distortion $\rho$ and decreases with noise scale $\sigma$. This behavior implies that reducing distortion and increasing the noise scale enhances privacy. To achieve better calibrated noise within a given privacy budget $\epsilon$, we calculate the two terms in Eq.~\eqref{eq.main} for each $\tau$. Leveraging the monotonicity of $g_\alpha(\sigma, \rho)$, we employ a binary search to identify the appropriate noise scale $\sigma$. 
The optimal noise scale is then determined by selecting the minimum value across various $\tau$ values, achieving this efficiently with linear complexity relative to $\tau$.

It is important to note that the maximum single-step distortion \(\rhodf\) in Eq.~\eqref{eq.main} is tight and conveys several messages. First, as defined in Eq.~\eqref{eq.graph_diffusion}, when the diffusion process is relatively slow (i.e., \(\gamma_{1, k} < \gamma_{2, k}\)), the distortion remains tight, governed by the Lipschitz constant \(\gamma_{\text{max}}\) of the diffusion mapping. In contrast, when the diffusion is relatively fast (i.e., \(\gamma_{1, k} \geq \gamma_{2, k}\)), the distortion bound becomes asymptotically tight, depending on graph structures, with worst-case scenarios detailed in App.~\ref{app.main_proof}.

\begin{wrapfigure}{r}{5cm}
    \vspace{-4mm}
    \centering
    \includegraphics[height=3.5cm]{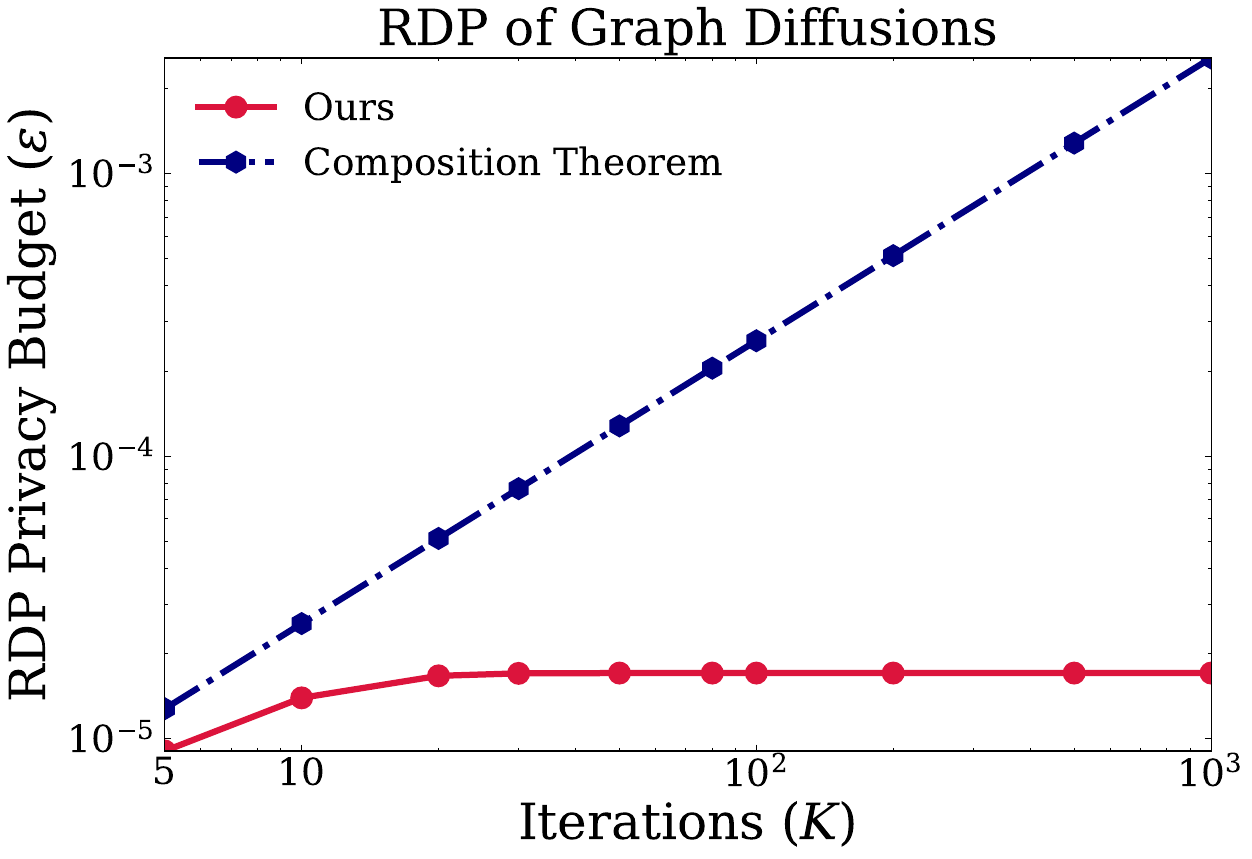}
    \vspace{-4mm}
    \small{\caption{RDP vs. Total Diffusion Step $K$ with $\gamma_{1, k} = 0.8, \gamma_{2, k} = 0, \gamma_{3, k} = 0.2, \alpha = 2, \sigma = 0.01$, and $\eta = 10^{-5}$.\label{fig.rdp_graphdiffusion}}}
    \vspace{-5mm}
\end{wrapfigure}
In Eq.~\eqref{eq.convergence}, we demonstrate the convergence of the privacy budget with respect to diffusion steps $K$. 
This approach differs from the adaptive composition theorem~\cite{dwork2014algorithmic}, which analyzes how privacy guarantees degrade when composed mechanisms are applied. Although this method has commonly been employed to protect privacy in graph learning models~\cite{sajadmanesh2023gap, sajadmanesh2024progap, chien2024differentially}, it leads to a linear increase in the privacy budget with the number of iterations $K$ under R\'enyi divergence~\cite{mironov2017renyi}, potentially resulting in unbounded losses as $K$ grows to infinity. 
More importantly, even for a small number of diffusion steps, our framework achieves a significantly better privacy budget under practical PPR diffusion settings, as illustrated in Fig.~\ref{fig.rdp_graphdiffusion}. Further empirical evaluations are detailed in App.~\ref{app.exp_ablation}.

\subsection{Proof Sketch of Theorem~\ref{thm.privacy_noisy_graph_diffusion}} 
\label{subsec.proof_sketch}

\textbf{Proof Idea.} 
Similar to Eq.~\eqref{eq.PABI}, the privacy loss of adjacent graph diffusion processes can be bounded as the sum of distortion absorption term incurred by Laplace noise and a PABI term at intermediate step $\tau$ (Step 1 \& 2).
Subsequently, we explore degree-based thresholding to manage distortion, achieving a superior utility-privacy tradeoff (Step 3), and introduce $\infty$-Wasserstein distance tracking to further tighten the divergence at $\tau$ (Step 4).

\textbf{Step 1: Interpretation of Iterates as Conditional CNI Sequences.} Consider the coupled graph diffusions $\Dscr = \{\phi_k\}_{k = 1}^\infty$ and $\Dscr' = \{\phi_k'\}_{k = 1}^\infty$, and the thresholding functions $f$ and $f'$, operating over adjacent graphs $\GG$ and $\GG'$, respectively. In each diffusion step, 
the first noise component constructs noisy iterates, while the second noise component is used to absorb distortion incurred between the adjacent graphs.
We encapsulate the discussion as follows:
\begin{align*}
    \s_k = \underbrace{\phi_k(f(\s_{k-1})) + \xi_k^{(1)}}_{\text{Identical CNI}} + \xi_k^{(2)}, \; \s_k' = \phi_k'(f'(\s_{k-1}')) + \xi_k'^{(1)} + \xi_k'^{(2)} \stackrel{d}{=} \underbrace{\phi_k(f(\s_{k-1}')) + \xi_k'^{(1)}}_{\text{Identical CNI}} + \tilde\xi_k'^{(2)}.
\end{align*}
where $\xi_k^{(1)}, \xi_k^{(2)}, \xi_k'^{(1)}, \xi_k'^{(2)} \sim \lap(\0, \sigma)$, and $\tilde\xi_k'^{(2)} \sim \lap\left(\phi_k'(f'(\s_{k-1}')) - \phi_k(f(\s_{k-1}')), \sigma\right)$, and $\stackrel{d}{=}$ denotes equality in distribution.

When the distortion $\phi_k'(f'(\s_{k-1}')) - \phi_k(f(\s_{k-1}'))$ is absorbed by the conditional event of noise variables, i.e., $\xi_k^{(2)} = \tilde{\xi}_k^{(2)}$, the coupled diffusion vectors evolve with identical CNIs through the contractive mapping $\phi_k \circ f$. Note that the Laplace distribution is essential for fully exploiting $\ell_1$ distortion in our analysis.

\textbf{Step 2: Bounding Privacy Loss through Distortion Absorption and PABI.}
The privacy loss of coupled iterates $\Dcal_\alpha(\s_K \| \s_K')$ can be bounded by the distortion from graph diffusion, and PABI:
\begin{align}
    \Dcal_\alpha(\s_K \| \s_K') \leq & \underbrace{\Dcal_\alpha(\xi_{\tau + 1 : K}^{(2)} \| \tilde\xi_{\tau + 1 : K}'^{(2)})\vphantom{\sup_{\zeta_{\tau + 1:K}}}}_{\text{Distortion Absorption}} + \underbrace{\sup_{\zeta} \Dcal_\alpha(\s_K | \xi_{\tau + 1 : K}^{(2)} =  \zeta\| \s_K' | \tilde\xi_{\tau + 1 : K}'^{(2)} = \zeta)}_{\text{PABI}} \label{eq.main_privacy_bound_separation}
\end{align}
This inequality arises from leveraging the post-processing and strong composition rules of R\'enyi divergence. Here, $\zeta$ represents a joint noise realization, and the parameter $\tau$ is introduced to balance the privacy leakage from the two terms ---  the divergence between the shifted noise variables accumulated from step $\tau+1$ to step $K$ (Distortion Absorption), and the divergence across conditional CNIs employing identical transformations $\phi_k \circ f$ (PABI).

\textbf{Step 3: Bounding Distortion Absorption} 
\begin{lemma}[Absorption of Distortion in Laplace Distribution]\label{lm.main_shift_absorption}
   For any $\tau \in \{0, 1, ..., K-1\}$, we have
    \begin{align}
        \Dcal_\alpha(\xi_{\tau + 1 : K}^{(2)} \| \tilde\xi_{\tau + 1 : K}'^{(2)}) \leq (K-\tau) g_\alpha(\sigma, \tilde\rho) \label{eq.proof_sktech_distortion}
    \end{align}
    Here, \(\tilde{\rho}\) quantifies the maximum distortion introduced by a single-step diffusion and is determined by the thresholding function \(f\) normalized by node degrees, i.e., $\frac{[f(\s_{k})]_i}{d_i}$.
\end{lemma}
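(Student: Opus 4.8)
Write $\Delta_k := \phi_k'(f'(\s_{k-1}')) - \phi_k(f(\s_{k-1}'))$ for the per-step distortion vector, so that, by the coupling in Step~1, $\xi_k^{(2)} \sim \lap(\0,\sigma)$ while $\tilde\xi_k'^{(2)} \sim \lap(\Delta_k, \sigma)$: the two laws share the \emph{same} scale $\sigma$ and differ only by a mean shift $\Delta_k$. The degree-based thresholding $f$ caps each coordinate of its argument at $\eta d_i$, so $|[f(\s_{k-1})]_i / d_i| \le \eta$, and $\tilde\rho$ is by definition the resulting supremum $\sup_{k}\|\Delta_k\|_1$ of the single-step distortion (its explicit value $\tilde\rho=\rhodf=\max(4\gaone,2\gamma_{\text{max}})\cdot\eta$ comes from the separate distortion analysis of $f$ and is not needed here). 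The plan is to peel the $(K-\tau)$-fold joint divergence into a sum of single-step divergences by adaptive R\'enyi composition, and then to bound each single-step divergence by $g_\alpha(\sigma,\tilde\rho)$ using the $\ell_1$-structure of the Laplace shift.

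\textbf{Step (i): reduction via adaptive composition.} View the generation of $(\xi_k^{(2)})_{k=\tau+1}^{K}$ and of $(\tilde\xi_k'^{(2)})_{k=\tau+1}^{K}$ as a sequence of $K-\tau$ adaptively composed mechanisms whose $k$-th output is the $k$-th noise vector; on the shifted side the mean $\Delta_k$ is a deterministic function of the past trajectory. Conditioning on that past and invoking the composition (chain-rule / strong composition) property of R\'enyi divergence~\cite{mironov2017renyi} --- the same rule used in \eqref{eq.main_privacy_bound_separation} --- gives
\begin{align*}
    \Dcal_\alpha\bigl(\xi_{\tau+1:K}^{(2)} \,\big\|\, \tilde\xi_{\tau+1:K}'^{(2)}\bigr)
    \;\le\; \sum_{k=\tau+1}^{K} \;\sup_{\|\Delta\|_1 \le \tilde\rho}\; \Dcal_\alpha\bigl(\lap(\0,\sigma) \,\big\|\, \lap(\Delta,\sigma)\bigr),
\end{align*}
since each realized $\Delta_k$ obeys $\|\Delta_k\|_1 \le \tilde\rho$.

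\textbf{Step (ii): single-step Laplace bound.} Fix $\Delta$ with $\|\Delta\|_1 \le \tilde\rho$. Since the Laplace noise is i.i.d.\ across coordinates, the divergence tensorizes: $\Dcal_\alpha(\lap(\0,\sigma)\|\lap(\Delta,\sigma)) = \sum_i \Dcal_\alpha(\lap(0,\sigma)\|\lap(\Delta_i,\sigma)) = \sum_i g_\alpha(\sigma,|\Delta_i|)$, the last identity being the one-dimensional closed form of $g_\alpha$~\cite{mironov2017renyi}. The map $\rho\mapsto g_\alpha(\sigma,\rho)$ on $[0,\infty)$ is a log-sum-exp of affine functions divided by $\alpha-1>0$, hence convex; it vanishes at $\rho=0$ and is therefore nondecreasing and superadditive, so $\sum_i g_\alpha(\sigma,|\Delta_i|) \le g_\alpha(\sigma,\sum_i|\Delta_i|) = g_\alpha(\sigma,\|\Delta\|_1) \le g_\alpha(\sigma,\tilde\rho)$ by monotonicity. (Equivalently, this is the standard statement that a $d$-dimensional Laplace mechanism with $\ell_1$-sensitivity $\tilde\rho$ and scale $\sigma$ is $g_\alpha(\sigma,\tilde\rho)$-RDP.) Substituting into Step (i) yields $\Dcal_\alpha(\xi_{\tau+1:K}^{(2)}\|\tilde\xi_{\tau+1:K}'^{(2)}) \le (K-\tau)\,g_\alpha(\sigma,\tilde\rho)$, which is \eqref{eq.proof_sktech_distortion}.

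\textbf{Anticipated main obstacle.} The coordinatewise tensorization and the convexity/superadditivity of $g_\alpha$ are routine. The load-bearing points are (a) justifying the adaptive composition in Step (i) despite $\Delta_k$ being random and measurable with respect to the shifted process's history --- this is exactly why the Step~1 construction splits each iterate's noise into an ``identical-CNI'' term $\xi_k^{(1)}$ and a ``distortion-absorbing'' term $\xi_k^{(2)}$, so that conditionally on the past the shifted second noise is a \emph{pure} $\sigma$-scale Laplace shift --- and (b) the uniform single-step bound $\|\Delta_k\|_1 \le \tilde\rho$, taken here as the definition of $\tilde\rho$ but whose quantitative evaluation --- via the fact that $f$ caps coordinate $i$ at $\eta d_i$, so the distortion an edge flip propagates through the (at most two) affected rows/columns of $\Pbf$ and the altered clipping thresholds stays $O(\eta)$ even for low-degree endpoints --- is the genuinely graph-specific computation, done separately to obtain $\tilde\rho=\rhodf$.
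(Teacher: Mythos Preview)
Your proof is correct and follows essentially the same route as the paper: decompose the joint divergence via the strong composition rule for R\'enyi divergence, reduce each step to $\Dcal_\alpha(\lap(\0,\sigma)\|\lap(\Delta,\sigma))$ with $\|\Delta\|_1\le\tilde\rho$, tensorize over coordinates, and then push all the $\ell_1$-mass onto a single coordinate to get $g_\alpha(\sigma,\tilde\rho)$. The one difference is in this last sub-step: the paper computes the Hessian to verify convexity and then runs a pairwise ``adjustment'' argument (fix all but two coordinates, show the maximum sits at an endpoint, iterate) to locate the maximizer at a vertex of the $\ell_1$-ball; your observation that $g_\alpha(\sigma,\cdot)$ is convex with $g_\alpha(\sigma,0)=0$, hence superadditive and monotone, reaches the same inequality $\sum_i g_\alpha(\sigma,|\Delta_i|)\le g_\alpha(\sigma,\|\Delta\|_1)$ in one line and is a cleaner packaging of the same idea.
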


The observation on $\tilde\rho$ highlights the importance of a degree-based design for the thresholding function $f$. Uniform thresholding across all nodes results in distortion proportional to $\frac{1}{d_{\text{min}}}$, introducing unnecessarily large noise induced by low-degree nodes and degrading overall performance. 
This in principle inspires the choice of $f$ relying on node degrees. 
Consequently, $\tilde \rho$ is tightly bounded by $\rhodf = \max(4\gaone, 2\gamma_{\text{max}}) \cdot \eta$.

\begin{wrapfigure}{r}{5cm}
    \vspace{-5mm}
    \centering
    \includegraphics[height=3.5cm]{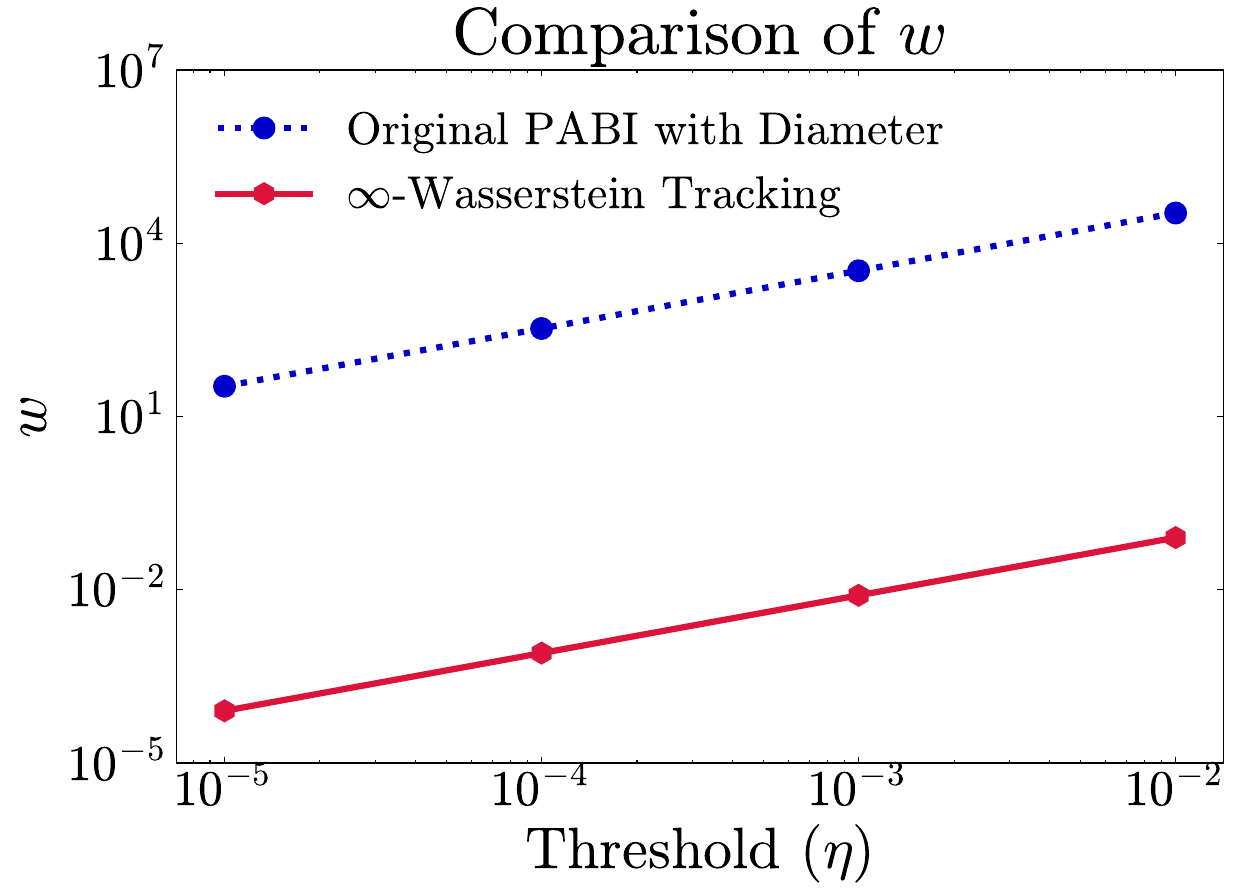}
    \vspace{-2mm}
    \small{\caption{\textbf{Setting:} Graph Diffusion with $\gamma_{1, k} = 0.8, \gamma_{2, k} = 0, \gamma_{3, k} = 0.2$.\label{fig.wasserstein}}}
    \vspace{-1mm}
\end{wrapfigure}

\textbf{Step 4: Upper Bounding PABI with $\boldsymbol{\infty}$-Wasserstein distance tracking.} To perform tight privacy analysis for the second term in Eq.~\eqref{eq.main_privacy_bound_separation}, we develop a novel $\infty$-Wasserstein distance tracking method for coupled CNIs, where we denote the $\infty$-Wasserstein distance at step $\tau$ by $w_\tau$. This method discards the original boundedness condition in PABI (Eq.~\eqref{eq.PABI} Second Term), which relies on the diameter $D$.%
\begin{lemma}[PABI with $\infty$-Wasserstein Distance Tracking]\label{lm.main_PABI_wasserstein}
        Given two coupled graph diffusions mentioned above, for any $\tau \in \{0, 1, ..., K-1\}$, any noise realization $\zeta$, we have
        {\small
        \begin{align}
            \Dcal_\alpha(\s_K | \xi_{\tau + 1 : K}^{(2)} =  \zeta\| \s_K' | \tilde\xi_{\tau + 1 : K}'^{(2)} = \zeta) \leq g_\alpha(\sigma, \gamma_{\text{max}}^{K - \tau}w_\tau) \label{eq.proof_sktech_tracking}
        \end{align}}
        where the tracked $\infty$-Wasserstein distance over coupled CNIs is given by $w_\tau = \frac{\rhodf \cdot (1 - \gamma_{\text{max}}^{\tau})}{1 - \gamma_{\text{max}}}$ and is naturally upper bounded by $\frac{\rhodf}{1 - \gamma_{\text{max}}} := w$.
    \end{lemma}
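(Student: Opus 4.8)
The plan is to prove Lemma~\ref{lm.main_PABI_wasserstein} in two movements: first bound the $\infty$-Wasserstein distance $w_\tau$ between the two conditional iterate laws at the intermediate step $\tau$, and then show that the remaining $K-\tau$ steps — which, under the conditioning on $\zeta$, act through the \emph{identical} contraction $\psi_k := \phi_k \circ f$ with the \emph{identical} additive shift $\zeta_k$ — contract that distance by a factor $\gamma_{\text{max}}$ per step before the last Laplace increment converts it into a Rényi divergence. A preliminary observation used throughout is that $\psi_k$ is $\gamma_{\text{max}}$-Lipschitz on $(\mathbb{R}^{|\VV|},\|\cdot\|_1)$: the coordinate-wise clipping $f(\x)=\min(\max(\x,-\eta\dd),\eta\dd)$ is $1$-Lipschitz in $\ell_1$, and $\phi_k(\x)-\phi_k(\y)=(\gamma_{1,k}\Pbf+\gamma_{2,k})(\x-\y)$ has $\ell_1$ operator norm at most $|\gamma_{1,k}|+|\gamma_{2,k}|\le\gamma_{\text{max}}$ since $\Pbf=\A\D^{-1}$ is column-stochastic.

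For the first movement, let $w_k$ be the $\infty$-Wasserstein distance between the laws of $\s_k$ and $\s_k'$. I would build inductively a coupling of the two processes under which $\|\s_k-\s_k'\|_1\le w_k$ almost surely: at step $k$ take the $W_\infty$-optimal coupling of $(\s_{k-1},\s_{k-1}')$ together with the \emph{same} noise realization, so that
\[
 \|\s_k-\s_k'\|_1 \le \|\psi_k(\s_{k-1})-\psi_k(\s_{k-1}')\|_1 + \|\psi_k(\s_{k-1}')-\psi_k'(\s_{k-1}')\|_1 \le \gamma_{\text{max}}\,w_{k-1} + \rhodf,
\]
where $\sup_{\x}\|\psi_k(\x)-\psi_k'(\x)\|_1\le\rhodf$ is precisely the single-step distortion bound $\tilde\rho\le\rhodf$ established in Lemma~\ref{lm.main_shift_absorption}. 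Since $\s_0=\s_0'$ is the shared personalization seed, $w_0=0$, and unrolling the recursion gives $w_\tau\le\rhodf\sum_{j=0}^{\tau-1}\gamma_{\text{max}}^{\,j}=\rhodf(1-\gamma_{\text{max}}^\tau)/(1-\gamma_{\text{max}})$, which is at most $w:=\rhodf/(1-\gamma_{\text{max}})$ because $\gamma_{\text{max}}<1$.

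For the second movement, condition on $\xi_{\tau+1:K}^{(2)}=\tilde\xi_{\tau+1:K}'^{(2)}=\zeta$. By the conditional-CNI construction of Step~1 the two processes then run, from step $\tau$ onward, under the common maps $\psi_{\tau+1},\dots,\psi_K$ and common additive shifts $\zeta_{\tau+1},\dots,\zeta_K$, the only remaining randomness being the i.i.d.\ $\lap(\0,\sigma)$ first-noise components. Coupling these equal at steps $\tau+1,\dots,K-1$ (and the corresponding iterates), the $\gamma_{\text{max}}$-Lipschitzness of each $\psi_k$ yields $W_\infty(\s_{K-1},\s_{K-1}')\le\gamma_{\text{max}}^{\,K-1-\tau}w_\tau$ and hence, applying $\psi_K$, a coupling with $\|\psi_K(\s_{K-1})-\psi_K(\s_{K-1}')\|_1\le\gamma_{\text{max}}^{\,K-\tau}w_\tau$ almost surely. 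Now keep the step-$K$ components $\xi_K^{(1)},\xi_K'^{(1)}$ independent: conditionally on $(\s_{K-1},\s_{K-1}')$ the two outputs are $\lap(\psi_K(\s_{K-1})+\zeta_K,\sigma)$ and $\lap(\psi_K(\s_{K-1}')+\zeta_K,\sigma)$, whose Rényi divergence is $g_\alpha(\sigma,\|\psi_K(\s_{K-1})-\psi_K(\s_{K-1}')\|_1)\le g_\alpha(\sigma,\gamma_{\text{max}}^{\,K-\tau}w_\tau)$ by monotonicity of $g_\alpha(\sigma,\cdot)$; joint quasi-convexity of the Rényi divergence over the common mixing measure (the coupling) then removes the conditioning and gives Eq.~\eqref{eq.proof_sktech_tracking}.

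The step I expect to be the main obstacle is justifying, cleanly, that conditioning on the absorbing-noise realization $\zeta$ does not inflate the $\infty$-Wasserstein distance at step $\tau$ — equivalently, that the distance between the \emph{conditional} laws at step $\tau$ is still governed by the unconditional recursion above. This is exactly where the conditional-CNI reparametrization of Step~1, i.e.\ the shifted-Rényi-divergence / shift-reduction machinery of Feldman et al.~\cite{feldman2018privacy} and Altschuler et al.~\cite{altschuler2022privacy} specialized to the Laplace kernel, has to be invoked with care: the second noise component at each step $>\tau$ is what absorbs the per-step distortion $\rhodf$ (contributing the $(K-\tau)g_\alpha(\sigma,\rhodf)$ term of Lemma~\ref{lm.main_shift_absorption}), and one must verify that fixing its value makes the post-$\tau$ dynamics identical while leaving intact the law of $\s_\tau$, which depends only on the noise through step $\tau$. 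Once this structural point is secured, the remaining estimates are a routine chain of contraction bounds and elementary properties of the Laplace mechanism.
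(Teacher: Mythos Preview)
Your two-movement plan matches the paper's proof almost exactly: the paper splits the argument into Lemma~B.4 (your first movement, tracking $W_\infty(\s_\tau,\s_\tau')$ via the shared-noise coupling and the recursion $w_k\le\gamma_{\text{max}}w_{k-1}+\rhodf$) and Lemma~B.3 (your second movement, contracting from step $\tau$ to $K$ and paying only the final Laplace shift $g_\alpha(\sigma,\gamma_{\text{max}}^{K-\tau}w_\tau)$). The only substantive stylistic difference is that for the second movement the paper does not argue via an explicit coupling of the $\xi^{(1)}$ noise as you do; instead it iterates the \emph{shifted} R\'enyi divergence $\Dcal_\alpha^{(z)}$ together with the shift-reduction and contraction-reduction lemmas (Lemmas~A.1 and~A.2, specialized to Laplace noise) to peel off one step at a time, choosing the intermediate slacks $a_{\tau+1}=\cdots=a_{K-1}=0$ and $a_K=\gamma_{\text{max}}^{K-\tau}w_\tau$, which lands on $\Dcal_\alpha^{(w_\tau)}(\s_\tau\|\s_\tau')+g_\alpha(\sigma,\gamma_{\text{max}}^{K-\tau}w_\tau)$; Lemma~B.4 then kills the first term by taking $\mu_\tau'=\nu_\tau$ in the shifted-divergence infimum.

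The obstacle you single out is exactly the delicate point, and your phrasing of it is slightly one-sided: you note that conditioning on $\xi_{\tau+1:K}^{(2)}=\zeta$ leaves the law of $\s_\tau$ intact (true, by independence), but the asymmetric issue is that conditioning on $\tilde\xi_{\tau+1:K}'^{(2)}=\zeta$ need \emph{not} leave the law of $\s_\tau'$ intact, since $\tilde\xi_{\tau+1}'^{(2)}$ is centered at a functional of $\s_\tau'$. Your direct-coupling argument implicitly assumes you can still use the unconditional $w_\tau$ as the starting distance after conditioning. The paper sidesteps this by working throughout with the shifted R\'enyi divergence: each application of contraction-reduction (Lemma~A.2) passes from the conditional law at step $k$ to the conditional law at step $k-1$ with one fewer conditioning variable, and the slack in the ``shift'' parameter absorbs the discrepancy, so that at the end one is comparing the \emph{unconditional} laws of $\s_\tau$ and $\s_\tau'$. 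If you want to keep your coupling presentation, the clean fix is to replace your final mixture/quasi-convexity step by the shift-reduction lemma (which is precisely the formal version of ``convolve with Laplace and pay $g_\alpha$''), and the ``contract under identical maps'' step by the contraction-reduction lemma; then your two movements become literally the paper's Lemmas~B.4 and~B.3.
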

We argue that using $w_\tau$ (or the upper bound $w$) instead of the default diameter $D$ is crucial to make the algorithm practically useful. There is no numerical evaluation in the previous study \cite{altschuler2022privacy}. Numerical comparison between $w$ and the diameter of thresholding function $D = \eta \cdot \sum_{i = 1}^{|\VV|}d_i$, using the real-world \textit{BlogCatalog} dataset (detailed in Sec.~\ref{sec.exp}), is illustrated in Fig.\ref{fig.wasserstein}.  $w$ achieves orders-of-magnitude improvement, which is still significant even if $D$ impacts privacy loss via a logarithmic term.
Further empirical validations demonstrating significant utility improvements are detailed in Sec.~\ref{subsec.ablation_study}.

By substituting the bounds from Eq.~\eqref{eq.proof_sktech_distortion} and Eq.~\eqref{eq.proof_sktech_tracking} into Eq.~\eqref{eq.main_privacy_bound_separation}, we establish Theorem~\ref{thm.privacy_noisy_graph_diffusion}.

\vspace{-2mm}
\subsection{Personalized Graph Diffusion Algorithms with Application in PPR Diffusion}
\label{subsec.ppr_diffusion}
\vspace{-2mm}
In practice, graph diffusions often originate from a single node $\e_i$, personalizing the algorithm to this seed node (user). Since the output is provided only to the seed node, protecting its edge connections (one-hop neighbors) becomes unnecessary, ensuring no privacy leakage in the first diffusion step under personalized privacy guarantees. Consequently, the thresholding function is tailored as follows: $f(\mathbf{x}) = \min(\max(\mathbf{x}, -\eta \cdot \tilde{\mathbf{d}}), \eta \cdot \tilde{\mathbf{d}})$, where $[\tilde{\mathbf{d}}]_j = [\mathbf{d}]_j$ for $j \neq i$ and $[\tilde{\mathbf{d}}]_i$ can be set to any positive threshold, i.e., no control is needed for the diffusion over seed node. We employ personalized edge-level RDP (Definition~\ref{def.personalized_edge_rdp}), caring two adjacent graphs with a difference only in a single edge not linked directly to the seed node. This approach is encapsulated in the following theorem:

\begin{theorem}[Privacy Guarantees for Personalized Noisy Graph Diffusions] \label{thm.privacy_personalized}
    Given a graph $\GG$, an associate graph diffusion $\Dscr = \{\phi_k\}_{k = 1}^\infty$, then personalized noisy graph diffusion mechanism $\Dscr_{K, \sigma}$ with corresponding $f(\mathbf{x}) = \min(\max(\mathbf{x}, -\eta \cdot \tilde{\mathbf{d}}), \eta \cdot \tilde{\mathbf{d}})$ ensures personalized edge-level $(\alpha, \epsilon)$-RDP with $\epsilon$ satisfies:
    \begin{align}
        \epsilon \leq \min_{\tau \in \{0, 1, ..., K-1\}} \left[(K-\tau) \cdot g_\alpha\left(\sigma, \rhodf \cdot \mathbbm{1}_{\tau \neq 0}\right) + g_\alpha\left(\sigma, \frac{\rhodf \cdot (1 - \gamma_{\text{max}}^{\tau})}{1 - \gamma_{\text{max}}} \cdot \gamma_{\text{max}}^{K-\tau}\right) \right] \label{eq.personalize_main}
    \end{align}
    where $\mathbbm{1}$ denote indicator function.
\end{theorem}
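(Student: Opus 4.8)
The plan is to follow the four‑step argument behind Theorem~\ref{thm.privacy_noisy_graph_diffusion} essentially verbatim; the only new ingredient is that, under the personalized adjacency hypothesis of Definition~\ref{def.personalized_edge_rdp}, the \emph{first} noisy diffusion map is distribution‑identical on the two adjacent graphs. Write $s$ for the seed node, so $\seed=\e_s$, and let $\GG'$ differ from $\GG$ in one edge $(a,b)$ with $a,b\neq s$. Couple the processes $\{\s_k\}$ and $\{\s_k'\}$ by the same Laplace noise as in Step~1 of Section~\ref{subsec.proof_sketch}, using the tailored maps $f,f'$ built from $\tilde\dd,\tilde\dd'$; note $\tilde\dd$ and $\tilde\dd'$ agree in every coordinate except possibly $a,b$, and in particular agree in the seed coordinate $s$ (both assigned the same arbitrary positive threshold).

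First I would isolate the statement that the single‑step distortion at $k=1$ vanishes, i.e. $\delta_1:=\phi_1'(f'(\s_0'))-\phi_1(f(\s_0'))=\0$. Since $\s_0'=\e_s$ is supported only on coordinate $s$, where $f$ and $f'$ coincide, $f(\e_s)=f'(\e_s)=c\,\e_s$ for a common scalar $c$; and $\phi_1(c\,\e_s)-\phi_1'(c\,\e_s)=c\,\gamma_{1,1}(\Pbf-\Pbf')\e_s=\0$ because $\Pbf=\A\D^{-1}$ and $\Pbf'=\A'\D'^{-1}$ differ only in the columns indexed by $a$ and $b$, and $s\notin\{a,b\}$. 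Hence $\s_1$ and $\s_1'$ can be coupled to be equal, so from time $1$ on the two trajectories are CNIs with a common (random) initial state; equivalently, in the conditional‑CNI language of Step~1, the shifted second noise at $k=1$ has zero shift and thus contributes to neither the distortion‑absorption term nor the $\infty$‑Wasserstein gap.

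From here I would re‑run Steps~1--4 of the proof of Theorem~\ref{thm.privacy_noisy_graph_diffusion}: express each iterate as an identical CNI through $\phi_k\circ f$ plus a shifted second‑noise term; split $\Dcal_\alpha(\s_K\|\s_K')$ at an intermediate index $\tau$ into a distortion‑absorption term and a conditional‑CNI term via the post‑processing and strong‑composition rules of Rényi divergence, as in Eq.~\eqref{eq.main_privacy_bound_separation}; bound the first by Lemma~\ref{lm.main_shift_absorption}; and bound the second by the $\infty$‑Wasserstein tracking of Lemma~\ref{lm.main_PABI_wasserstein}. Only two bookkeeping changes are needed. (i) The distortion‑absorption term runs over steps $\tau+1,\dots,K$, but the $k=1$ contribution is $g_\alpha(\sigma,\|\delta_1\|_1)=g_\alpha(\sigma,0)=0$, so the term is $(K-\tau)g_\alpha(\sigma,\rhodf)$ for $\tau\ge1$ and only counts steps $2,\dots,K$ when $\tau=0$; this is what the factor $\mathbbm{1}_{\tau\neq0}$ records in Eq.~\eqref{eq.personalize_main}. (ii) The $\infty$‑Wasserstein recursion $w_k\le\gamma_{\text{max}}w_{k-1}+\rhodf$ is now initialized at $w_1=0$, giving $w_\tau\le\rhodf(1-\gamma_{\text{max}}^{\tau-1})/(1-\gamma_{\text{max}})\le\rhodf(1-\gamma_{\text{max}}^{\tau})/(1-\gamma_{\text{max}})$ for $\tau\ge1$ and $w_0=0$, which is the quantity inside the PABI term of Eq.~\eqref{eq.personalize_main}. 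Taking the minimum over $\tau\in\{0,\dots,K-1\}$ and applying Definition~\ref{def.personalized_edge_rdp} finishes the proof.

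\textbf{Main obstacle.} The only step that requires genuine work rather than re‑indexing is re‑certifying that $\rhodf=\max(4\gaone,2\gamma_{\text{max}})\eta$ is still a valid single‑step distortion bound for every $k\ge2$ \emph{under the modified thresholding}, whose seed coordinate is left unclipped and can therefore be arbitrarily large. The key is to show that the step‑$k$ distortion $\delta_k=\gamma_{1,k}\big(\Pbf'f'(\s_{k-1}')-\Pbf f(\s_{k-1}')\big)+\gamma_{2,k}\big(f'(\s_{k-1}')-f(\s_{k-1}')\big)$ reads only the coordinates $a,b$ of $f(\s_{k-1}')$: $\Pbf-\Pbf'$ is supported on columns $a,b$ with column‑$\ell_1$‑norms of order $1/d_a$ and $1/d_b$, and $f-f'$ is supported on coordinates $a,b$ with $|[f(\x)-f'(\x)]_a|\le\eta$ and $|[f(\x)-f'(\x)]_b|\le\eta$. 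Because $a,b\neq s$, the controlled estimates $|[f(\s_{k-1}')]_a|\le\eta d_a$ and $|[f(\s_{k-1}')]_b|\le\eta d_b$ apply, the unnormalized seed coordinate never enters, and the computation underlying Lemma~\ref{lm.main_shift_absorption}/Theorem~\ref{thm.privacy_noisy_graph_diffusion} goes through unchanged. With this localization verified, the personalized bound follows from the unpersonalized one together with the two bookkeeping adjustments above.
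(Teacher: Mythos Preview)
Your proposal is correct and follows essentially the same approach as the paper: observe that under the personalized adjacency (the perturbed edge avoids the seed), the first-step distortion vanishes because $(\Pbf-\Pbf')\e_s=\0$, then re-run the proof of Theorem~\ref{thm.privacy_noisy_graph_diffusion} with this single bookkeeping change. Your ``main obstacle'' paragraph---localizing the step-$k$ distortion to the columns/coordinates $a,b$ of the perturbed edge so that the unclipped seed coordinate never enters the $\rhodf$ computation---is in fact more explicit than the paper's own treatment, which simply asserts that relaxing the threshold at the seed is admissible and restates the bound.
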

Note that a key difference from Theorem~\ref{thm.privacy_noisy_graph_diffusion} is that in personalized privacy settings, there is no privacy leakage in the first diffusion step ($K = 1$).

\textbf{PPR Application.} Among various graph diffusions, PPR stands out as a prevalent node proximity metric extensively used in graph mining and network analysis. We may apply our noisy diffusion framework to PPR diffusion. We consider PPR with lazy random walk as follows:
\begin{align}
    \Dscr(\seed) = (1 - \beta)\sum_{k = 0}^\infty \beta^k \W \seed = \lim_{K \to \infty} \phi_K \circ \cdots \circ \phi_1 (\seed),\; \text{where } \phi_k(\x) = \beta \W \mathbf{x} + (1 - \beta)\seed. \label{eq.ppr_formula}
\end{align}
where lazy random walk matrix $\W = \frac{1}{2}(\mathbf{P} + \I)$ and $(1 - \beta)$ represents teleport probability with $\beta \in (0, 1]$. 

Our framework incorporates noise into the diffusion process of each step of PPR. The privacy guarantees for this noisy PPR are derived from Theorem~\ref{thm.privacy_personalized} with $\rhodf = 2\beta \eta$ and $\gamma_{\text{max}} = \beta$. Note that, since $\gamma_{j, k} > 0$ for all $j \in \{1, 2, 3\}$ in PPR scenarios, all signals propagating among nodes should be non-negative. Consequently, the degree-based thresholding function $f$ can be further modified as $f(\mathbf{x}) = \min(\max(\mathbf{x}, \mathbf{0}), \eta \cdot \tilde{\mathbf{d}})$.
\vspace{-2mm}
\section{Experiments}\label{sec.exp}
\vspace{-2mm}
In this section, we present empirical evaluations to support our theoretical findings\footnote{Our implementation is available at \url{https://github.com/Jesson-Wei/DP_GraphDiffusion}}. Specifically, we apply the widely-used PPR algorithm (Sec.~\ref{subsec.ppr_diffusion}) to real-world graphs. In practice, we also include a projection step onto the unit $\ell_1$ ball after injecting Laplace noise at each diffusion step. This adjustment has been observed to slightly improve the utility of the resulting PPR without impacting our theoretical analysis (see App.~\ref{app.main_proof} for details). We focus on the accuracy of noisy PPR in ranking tasks under personalized edge-level DP due to its practicality as noted in~\cite{epasto2022differentially}.

\textbf{Benchmark Datasets.} We conduct experiments on the following datasets: \textit{BlogCatalog}~\cite{Zafarani+Liu:2009}, a social network of bloggers with 10,312 nodes and 333,983 edges; \textit{Flickr}~\cite{Zafarani+Liu:2009}, a photo-sharing social network with 80,513 nodes and 5,899,882 edges; and \textit{TheMarker}~\cite{nr}, an online social network with 69,400 nodes and 1,600,000 edges.

\textbf{Baselines.} Our experimental study includes two baselines. \texttt{DP-PUSHFLOWCAP}~\cite{epasto2022differentially} is the only private PPR method using Laplace output perturbation, adapting the approximate PPR algorithm with push operations~\cite{andersen2006local}. \texttt{Edge-Flipping} is the other baseline, which uses a randomized response mechanism~\cite{dwork2014algorithmic} on the adjacency matrix, excluding seed node-connected edges in personalized scenarios. Entries are replaced with values in $\{0, 1\}$ uniformly at random with probability $p$ (detailed in App.~\ref{app.RDP_DP}), or retained otherwise. This method requires $\mathcal{O}(|\VV|^2)$ time to generate a private adjacency matrix and increases its edge density, which limits its practicality. Both our approach and \texttt{DP-PUSHFLOWCAP} offer better scalability. A comparison of running times between different approaches is provided in App.~\ref{app.running_time}. In all experiments, we only report results if a single trial can be completed within 12 hours on an AMD EPYC 7763 64-Core Processor, and thus \texttt{Edge-Flipping} cannot be run on \textit{Flickr}.

\textbf{Metrics.} For utility, we employ two ranking-based metrics: normalized discounted cumulative gain at R (NDCG@R) and Recall@R~\cite{jarvelin2002cumulated}, where R denotes the cutoff point for the top-ranked items in the list. In our experiments, R is set to 100. For privacy assessments, we utilize the personalized edge-level $(\epsilon, \delta)$-DP, with $\delta$ set to $\frac{1}{\#\text{edges}}$ following \cite{chien2024differentially}.  To ensure a fair comparison, both \texttt{DP-PUSHFLOWCAP} and \texttt{Edge-Flipping} are analyzed using RDP. The privacy budgets for all methods are subsequently converted to DP from RDP results, as elaborated in App.~\ref{app.RDP_DP}.
All results are reported as averages over 100 independent trials, with 95\% confidence intervals.

\vspace{-2mm}
\subsection{Evaluating Privacy-Utility Tradeoffs on Ranking Tasks} \label{subsec.exp_personalized_edgedp}
\vspace{-2mm}
In this series of experiments, we aim to assess the ranking performance of our noisy graph diffusion (as delineated in Sec.~\ref{subsec.ppr_diffusion}) compared with baselines on real-world graphs. We specifically examine privacy budget $\epsilon$ ranging from $10^{-2}$ to $1$, PPR with parameter $\beta = 0.8$. Considering that both our approach and \texttt{DP-PUSHFLOWCAP} employ a thresholding parameter $\eta$ to balance the privacy-utility trade-off, we select $\eta$ from a set of seven values spanning orders of magnitude from $10^{-10}$ to $10^{-4}$, a range empirically determined to be optimal across various datasets for both methods ($\eta = 10^{-6}$ is chosen for \texttt{DP-PUSHFLOWCAP} in \cite{epasto2022differentially}). For each experiment, we randomly choose an initial seed for diffusion and execute PPR for $100$ iterations following \cite{epasto2022differentially}. We report the average NDCG@100 score and Recall@100 compared to the standard noise-free PPR diffusion (Eq.~\eqref{eq.ppr_formula}) over 100 independent trials in Fig.~\ref{fig.utility_performance} and Fig.~\ref{fig.utility_performance_recall} respectively.

\begin{figure}[h]
    \centering
    \includegraphics[width=0.3\textwidth]{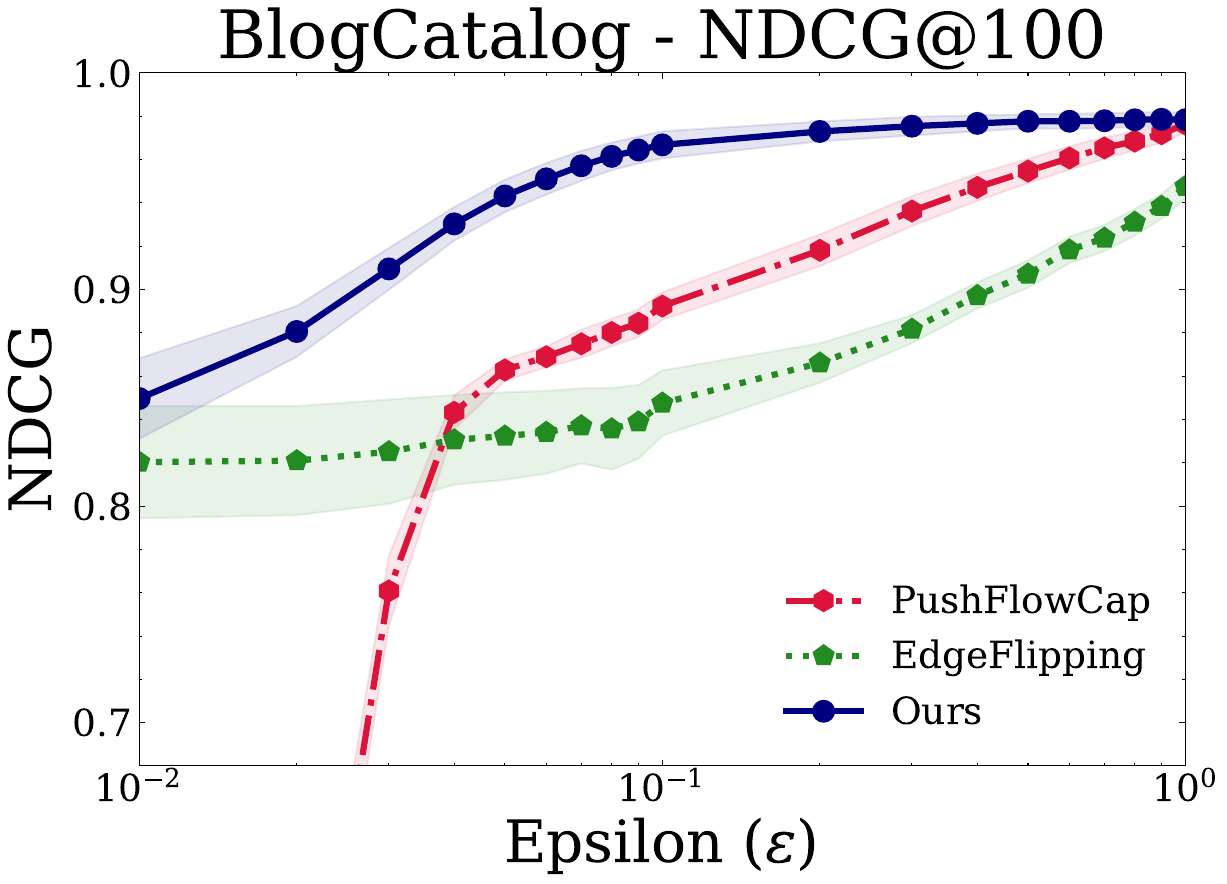}
    \includegraphics[width=0.3\textwidth]{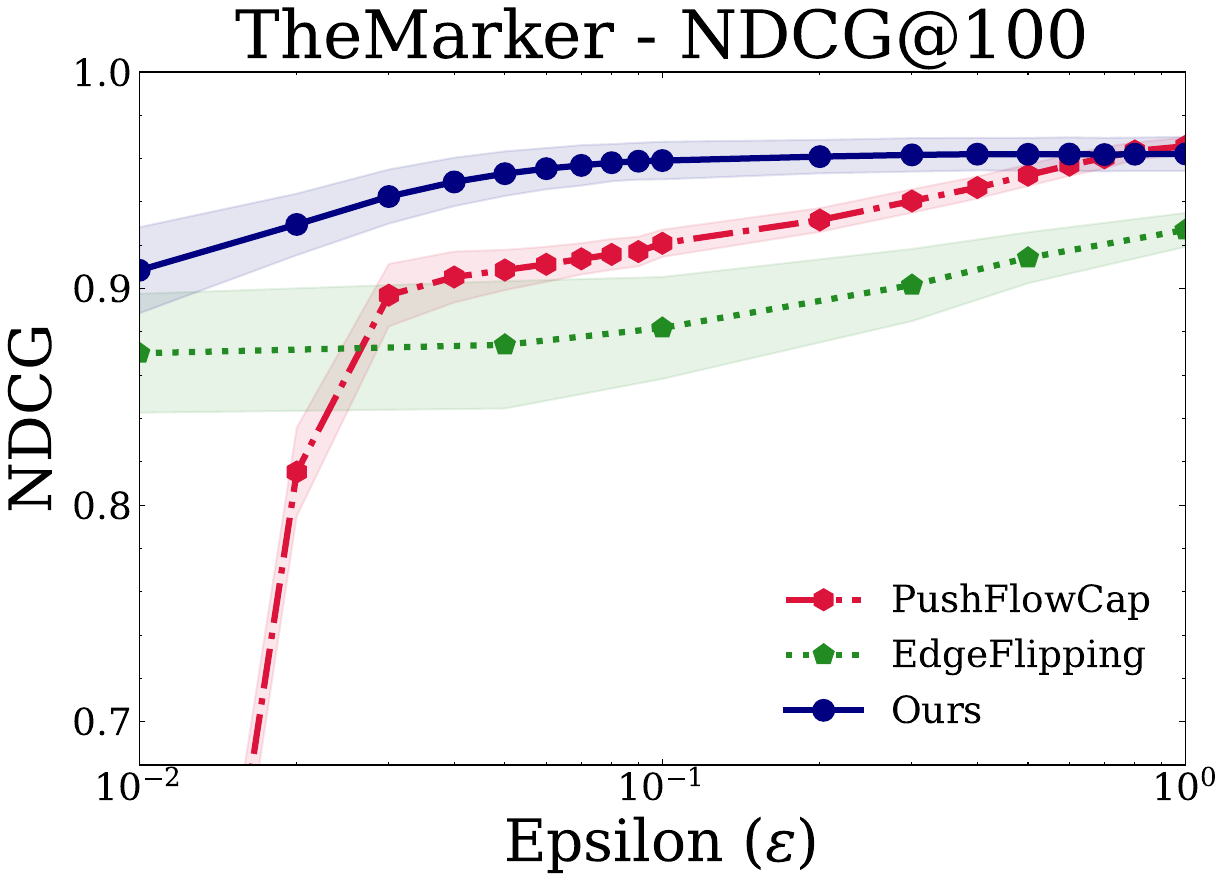}
    \includegraphics[width=0.3\textwidth]{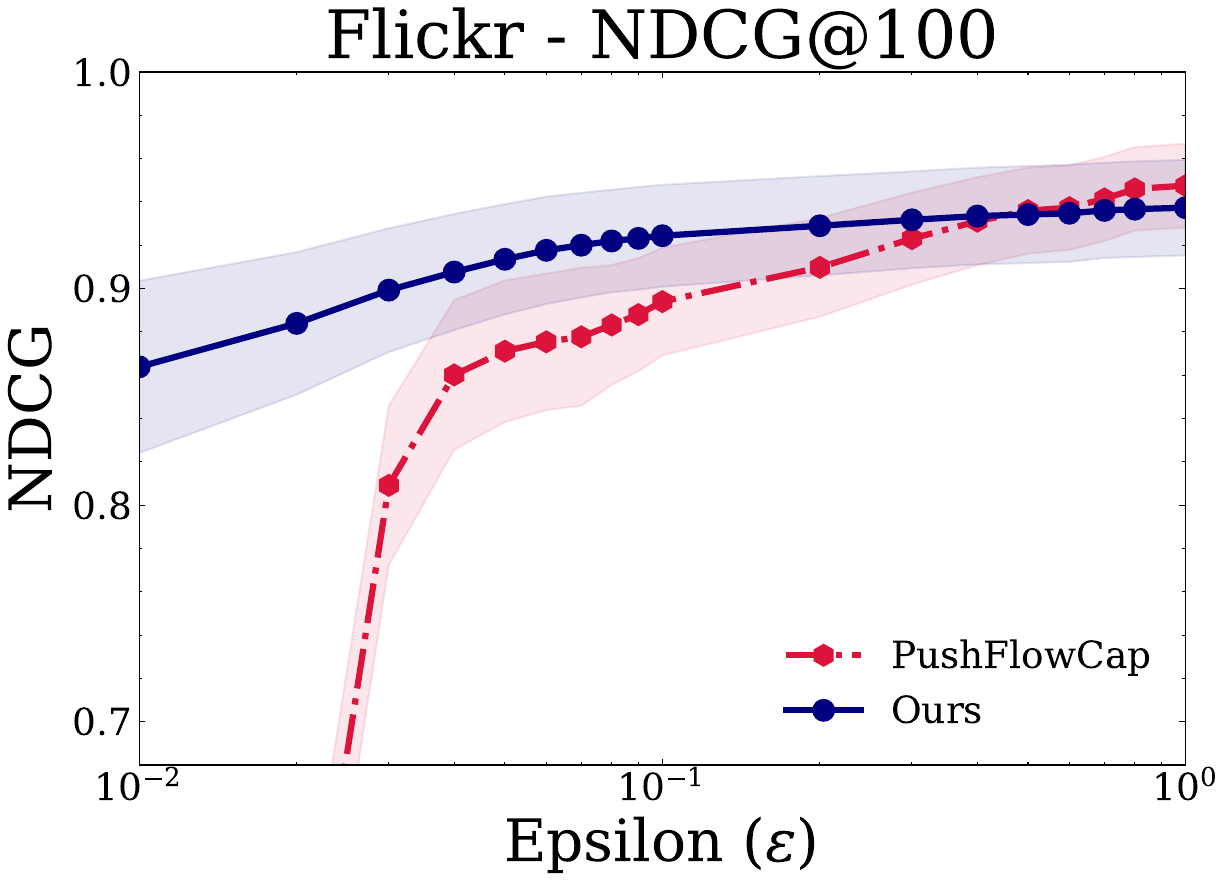}
    \caption{\small{Trade-off between NDCG and Personalized Edge-level Privacy.\label{fig.utility_performance}}}
\end{figure}

\textbf{Results for NDCG@100.} As illustrated in Fig.~\ref{fig.utility_performance}, our noisy graph diffusion surpasses both baselines across all three datasets, where values below 0.7 are ignored. In a strong privacy regime ($\epsilon \leq 0.5$), our approach demonstrates significant improvement over \texttt{DP-PUSHFLOWCAP}, which relies on output perturbation. This validates our claim that a noisy process achieves a superior privacy-utility trade-off in stringent privacy settings.

\textbf{Results for Recall@100.} Fig.~\ref{fig.utility_performance_recall} illustrates the overlap of the top-100 predictions of privacy-preserving PPR variants with standard PPR. Across all datasets, our method outperforms two baselines for $\epsilon$ values ranging from $10^{-2}$ to $1$, further substantiating the advantages of our framework.

\begin{figure}[h]
    \centering
    \includegraphics[width=0.3\textwidth]{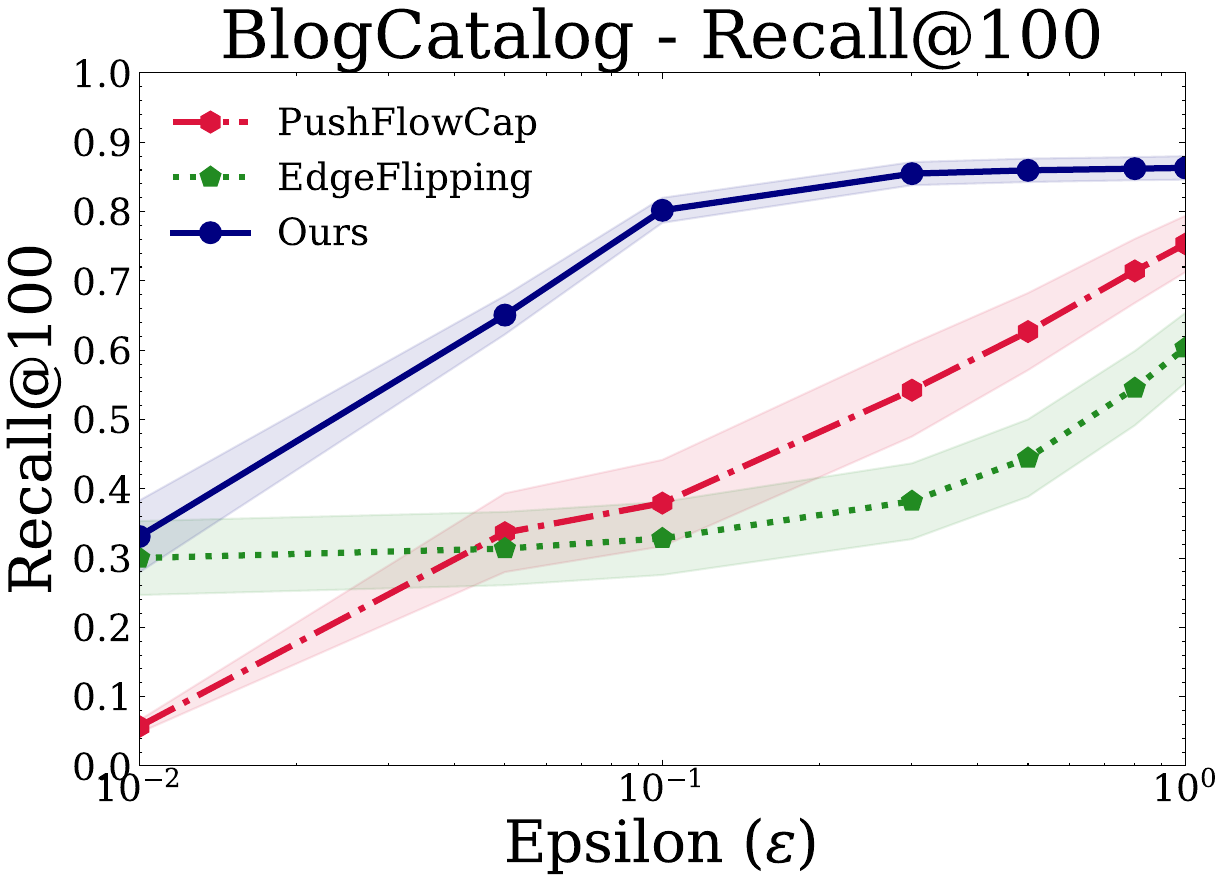}
    \includegraphics[width=0.3\textwidth]{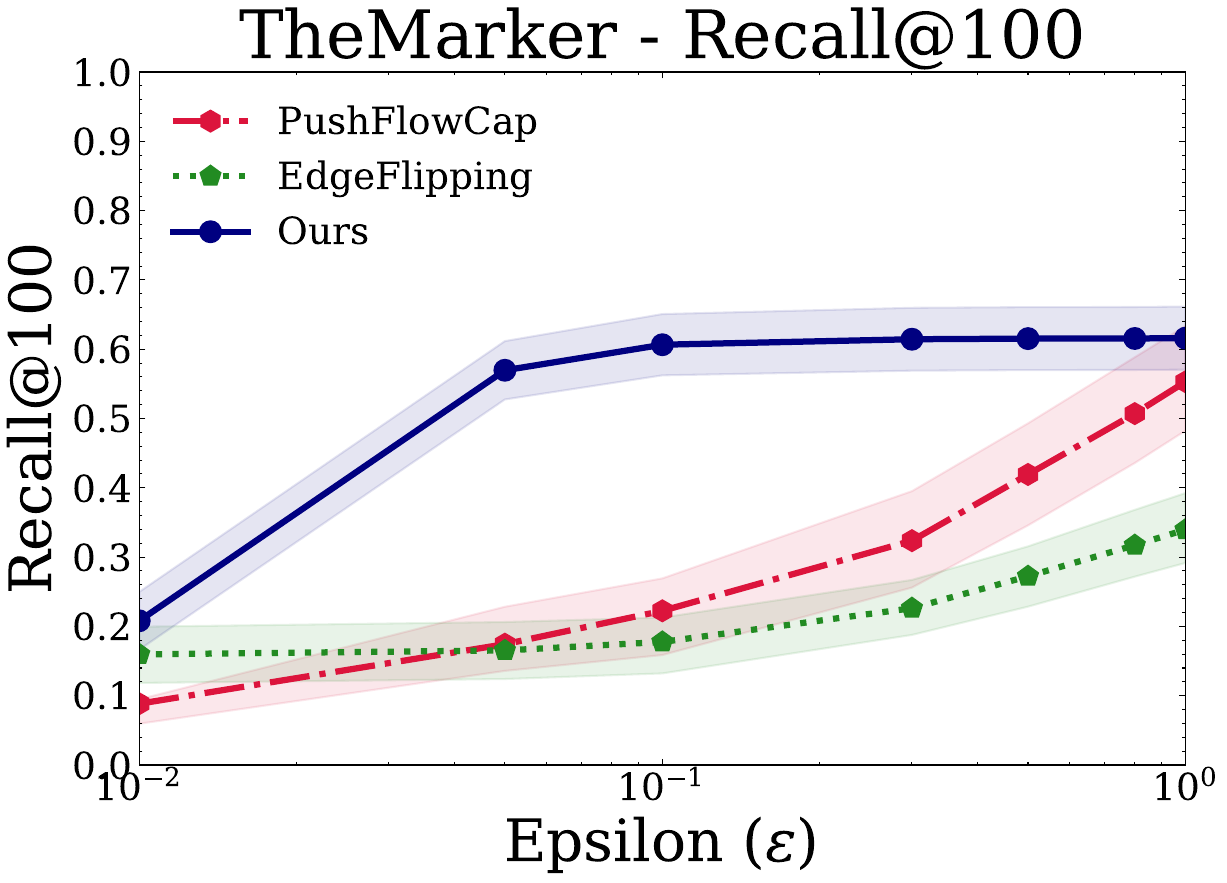}
    \includegraphics[width=0.3\textwidth]{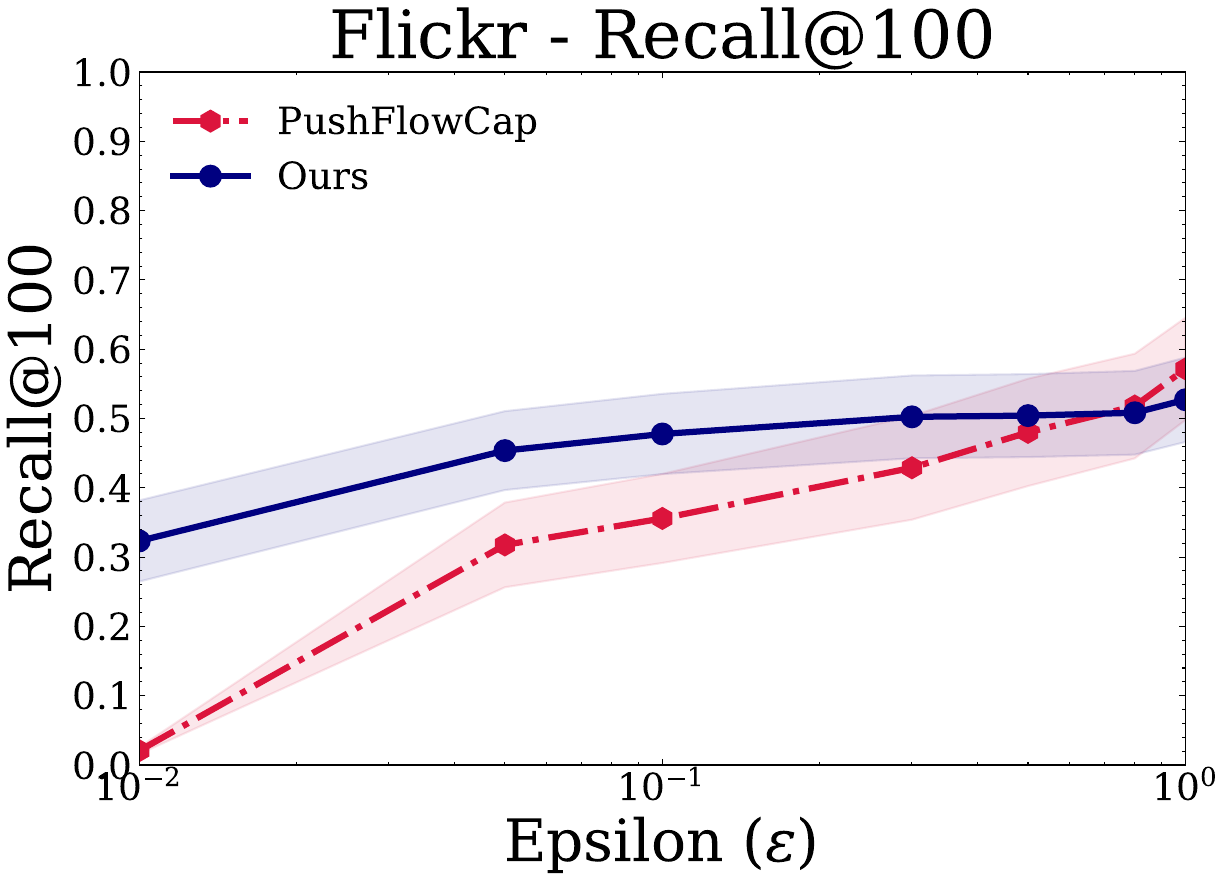}
    \caption{\small{Trade-off between Recall and Personalized Edge-level Privacy.\label{fig.utility_performance_recall}}}
\end{figure}

Additional experiments on the sensitivity of ranking performance over variations in $\eta$ are deferred to App.~\ref{app.subsec.exp_sensitivity}, which demonstrate that our approach is significantly more robust on the choice of the hyperparameter $\eta$.

\subsection{Ablation Study} \label{subsec.ablation_study} 
In this section, we conduct an ablation study to verify the effectiveness of our theory-guided designs, including the degree-based thresholding function $f$, the $\infty$-Wasserstein distance tracking tool. Experiments were conducted on BlogCatalog, utilizing PPR diffusion with parameter $\beta = 0.8$, and a total of $K = 100$ diffusion steps. Additional ablation studies focusing on variations in noise type and comparative analyses of noise scales across different methods are detailed in App.~\ref{app.exp_ablation}.

\textbf{Degree-based Thresholding Function \& $\boldsymbol{\infty}$-Wasserstein Tracking.}
We evaluate the effectiveness of our graph-dependent thresholding function $f$ and $\infty$-Wasserstein distance tracking. 
We vary $\eta$ uniformly across seven discrete values in from $10^{-4}$ to $10^{-10}$ and report the optimal performance for all methods in Fig.~\ref{fig.exp_ablation_wasserstein}. 
Firstly, we compare the degree-based thresholding function $f(\s) = \min(\max(\x, 0), \eta \cdot \dd)$ (Red $\bullet$) against its graph-independent variant $\tilde f(\s) = \min(\max(\x, 0), \eta \cdot \boldsymbol{1})$ (Red $\times$), which uniformly clips over nodes regardless of their degree. As $\epsilon$ varies from $0.1$ to $3$, the degree-based thresholding function consistently outperforms its graph-independent counterpart, with a margin of $0.15$ to $0.2$ in the NDCG score. This verifies the effectiveness of graph-dependent thresholding function $f$, which can balance the privacy-utility trade-off more effectively by focusing less on sensitive low-degree nodes. 
\begin{wrapfigure}{r}{5cm}
    \centering
    \includegraphics[height=3.5cm]{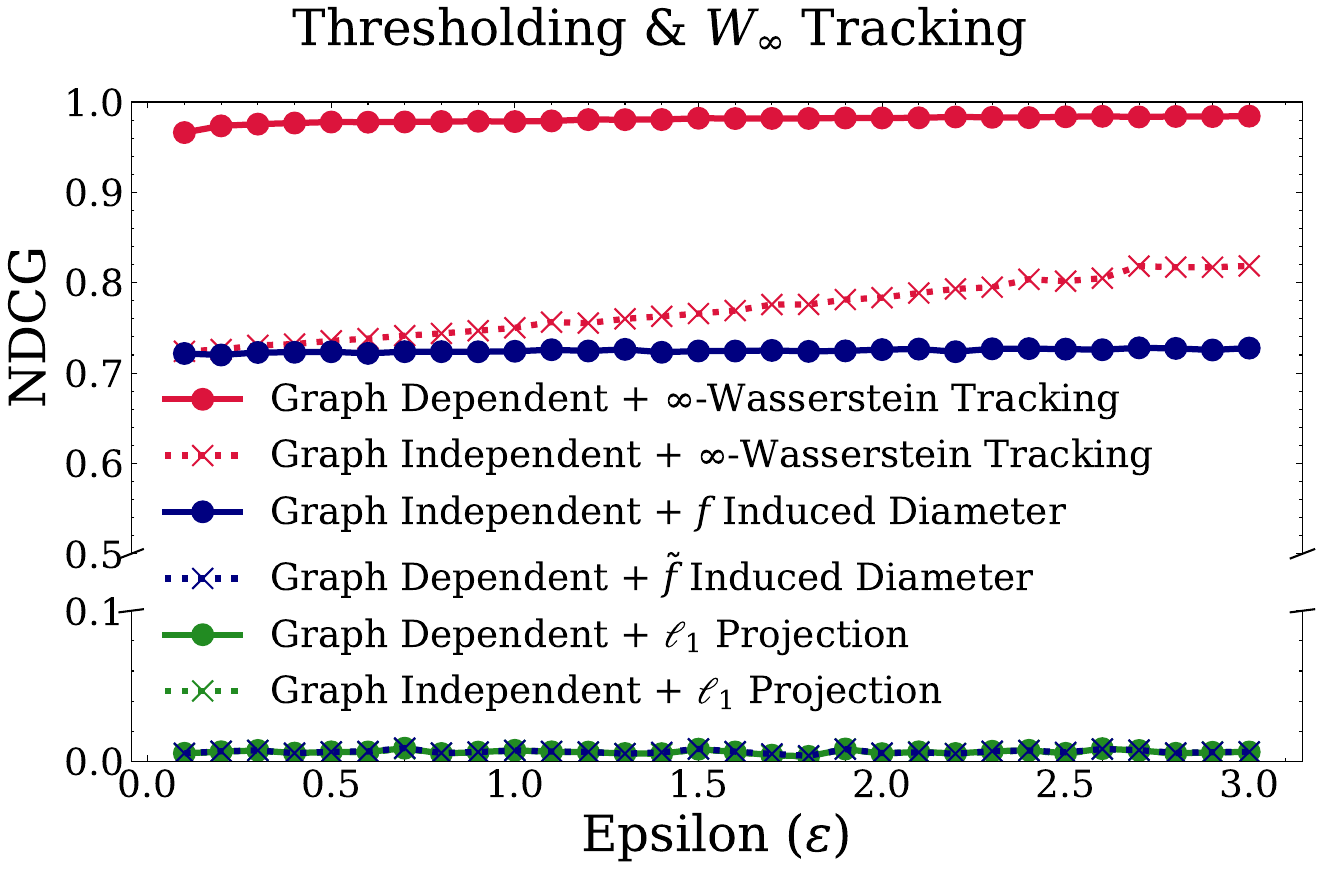}
    \vspace{-4mm}
    \small{\caption{Effectiveness of degree-based thresholding function $f$ and $\infty$-Wasserstein distance tracking.\label{fig.exp_ablation_wasserstein}}}
    \vspace{-4mm}
\end{wrapfigure}
Note that the graph-dependent (in)dependent thresholding function \( f \) (\( \tilde{f} \)) naturally induces a diameter \( \eta |\VV| \) (\( 2\eta |\EE| \)), respectively, which can be utilized in the modification of Theorem~\ref{thm.privacy_noisy_graph_diffusion} for privacy accounting. However, the graph dependent thresholding function $f$ results in a significantly larger diameter, which substantially degrades utility. This underscores the importance of leveraging \(\infty\)-Wasserstein tracking.

We showcase the benefits of our $\infty$-Wasserstein distance tracking analysis (Red Lines) on the privacy-utility tradeoff 
against diameter induced bounds derived from $\ell_1$ projection (Green Lines) or thresholding function-induced diameter (Blue Lines). The privacy bound of the $\ell_1$ projection is established based on a Laplace modification of Eq.~(\ref{eq.PABI}) with a diameter of 1. 
As shown in Fig.~\ref{fig.exp_ablation_wasserstein}, methods employing $\infty$-Wasserstein distance tracking analysis consistently outperform those based on $\ell_1$ projection and thresholding-induced diameter. We conclude that our $\infty$-Wasserstein tracking analysis and the design of graph-dependent thresholding function $f$ markedly enhance the privacy-utility tradeoff.
\section{Conclusion}\label{sec.Conclusion}
\vspace{-2mm}
In summary, we introduce a noisy graph diffusion framework for edge-level privacy protection, utilizing a novel application of PABI in $\ell_1$ space. By incorporating a theory-guided design for a graph-dependent thresholding function and employing a new $\infty$-Wasserstein distance tracking tool, we outperform SOTA methods in ranking performance on benchmark graph datasets.

\textbf{Societal Impact and Limitations.} Our work advances the development of DP graph algorithms, offering strong privacy protection when properly implemented. 
For its limitations, we refer readers to standard textbooks on the subject \cite{dwork2014algorithmic}. 
The effectiveness of our noisy graph diffusion framework has been demonstrated primarily in the context of PPR diffusion. Extending these results to other types of graph diffusions, such as heat kernel diffusion, is a promising direction for future research. This paper focuses on edge-level privacy protections; exploring extensions to node-level DP protections could further broaden the scope of our research.

\section*{Acknowledge} 
This work is supported by NSF awards CIF-2402816, IIS-2239565, and JPMC faculty award 2023. The authors would like to thank Haoyu Wang, Haoteng Yin for the valuable discussion. The authors also would like to thank  Alessandro Epasto for the discussion about the implementation of the output-perturbation-based differentially private Personalized Pagerank.  

\bibliographystyle{ieeetr}
\bibliography{neurips_references}

\begin{thebibliography}{10}

\bibitem{page1999pagerank}
L.~Page, S.~Brin, R.~Motwani, T.~Winograd, {\em et~al.}, ``The pagerank citation ranking: Bringing order to the web,'' 1999.

\bibitem{chung2007heat}
F.~Chung, ``The heat kernel as the pagerank of a graph,'' {\em Proceedings of the National Academy of Sciences}, vol.~104, no.~50, pp.~19735--19740, 2007.

\bibitem{cho2007rankmass}
J.~Cho and U.~Schonfeld, ``Rankmass crawler: A crawler with high pagerank coverage guarantee.,'' in {\em VLDB}, vol.~7, pp.~23--28, Citeseer, 2007.

\bibitem{andersen2007using}
R.~Andersen, F.~Chung, and K.~Lang, ``Using pagerank to locally partition a graph,'' {\em Internet Mathematics}, vol.~4, no.~1, pp.~35--64, 2007.

\bibitem{fortunato2010community}
S.~Fortunato, ``Community detection in graphs,'' {\em Physics reports}, vol.~486, no.~3-5, pp.~75--174, 2010.

\bibitem{kloster2014heat}
K.~Kloster and D.~F. Gleich, ``Heat kernel based community detection,'' in {\em Proceedings of the 20th ACM SIGKDD international conference on Knowledge discovery and data mining}, pp.~1386--1395, 2014.

\bibitem{li2019optimizing}
P.~Li, I.~Chien, and O.~Milenkovic, ``Optimizing generalized pagerank methods for seed-expansion community detection,'' {\em Advances in Neural Information Processing Systems}, vol.~32, 2019.

\bibitem{ivan2011web}
G.~Iv{\'a}n and V.~Grolmusz, ``When the web meets the cell: using personalized pagerank for analyzing protein interaction networks,'' {\em Bioinformatics}, vol.~27, no.~3, pp.~405--407, 2011.

\bibitem{gleich2015pagerank}
D.~F. Gleich, ``Pagerank beyond the web,'' {\em siam REVIEW}, vol.~57, no.~3, pp.~321--363, 2015.

\bibitem{gasteiger2018predict}
J.~Gasteiger, A.~Bojchevski, and S.~G{\"u}nnemann, ``Predict then propagate: Graph neural networks meet personalized pagerank,'' {\em arXiv preprint arXiv:1810.05997}, 2018.

\bibitem{gasteiger2019diffusion}
J.~Gasteiger, S.~Wei{\ss}enberger, and S.~G{\"u}nnemann, ``Diffusion improves graph learning,'' {\em Advances in neural information processing systems}, vol.~32, 2019.

\bibitem{chien2020adaptive}
E.~Chien, J.~Peng, P.~Li, and O.~Milenkovic, ``Adaptive universal generalized pagerank graph neural network,'' {\em arXiv preprint arXiv:2006.07988}, 2020.

\bibitem{chamberlain2021grand}
B.~Chamberlain, J.~Rowbottom, M.~I. Gorinova, M.~Bronstein, S.~Webb, and E.~Rossi, ``Grand: Graph neural diffusion,'' in {\em International Conference on Machine Learning}, pp.~1407--1418, PMLR, 2021.

\bibitem{hoskins2018inferring}
J.~Hoskins, C.~Musco, C.~Musco, and B.~Tsourakakis, ``Inferring networks from random walk-based node similarities,'' {\em Advances in Neural Information Processing Systems}, vol.~31, 2018.

\bibitem{backstrom2007wherefore}
L.~Backstrom, C.~Dwork, and J.~Kleinberg, ``Wherefore art thou r3579x? anonymized social networks, hidden patterns, and structural steganography,'' in {\em Proceedings of the 16th international conference on World Wide Web}, pp.~181--190, 2007.

\bibitem{ullah2020privacy}
I.~Ullah, R.~Boreli, and S.~S. Kanhere, ``Privacy in targeted advertising: A survey,'' {\em arXiv preprint arXiv:2009.06861}, 2020.

\bibitem{zhong2020financial}
Q.~Zhong, Y.~Liu, X.~Ao, B.~Hu, J.~Feng, J.~Tang, and Q.~He, ``Financial defaulter detection on online credit payment via multi-view attributed heterogeneous information network,'' in {\em Proceedings of the web conference 2020}, pp.~785--795, 2020.

\bibitem{dwork2006calibrating}
C.~Dwork, F.~McSherry, K.~Nissim, and A.~Smith, ``Calibrating noise to sensitivity in private data analysis,'' in {\em Theory of Cryptography: Third Theory of Cryptography Conference, TCC 2006, New York, NY, USA, March 4-7, 2006. Proceedings 3}, pp.~265--284, Springer, 2006.

\bibitem{liu2016dependence}
C.~Liu, S.~Chakraborty, and P.~Mittal, ``Dependence makes you vulnberable: Differential privacy under dependent tuples.,'' in {\em NDSS}, vol.~16, pp.~21--24, 2016.

\bibitem{epasto2022differentially}
A.~Epasto, V.~Mirrokni, B.~Perozzi, A.~Tsitsulin, and P.~Zhong, ``Differentially private graph learning via sensitivity-bounded personalized pagerank,'' {\em Advances in Neural Information Processing Systems}, vol.~35, pp.~22617--22627, 2022.

\bibitem{gupta2010differentially}
A.~Gupta, K.~Ligett, F.~McSherry, A.~Roth, and K.~Talwar, ``Differentially private combinatorial optimization,'' in {\em Proceedings of the twenty-first annual ACM-SIAM symposium on Discrete Algorithms}, pp.~1106--1125, SIAM, 2010.

\bibitem{arora2019differentially}
R.~Arora and J.~Upadhyay, ``On differentially private graph sparsification and applications,'' {\em Advances in neural information processing systems}, vol.~32, 2019.

\bibitem{wang2022privacy}
S.~Wang, Y.~Zheng, X.~Jia, and X.~Yi, ``Privacy-preserving analytics on decentralized social graphs: The case of eigendecomposition,'' {\em IEEE Transactions on Knowledge and Data Engineering}, 2022.

\bibitem{wang2013differential}
Y.~Wang, X.~Wu, and L.~Wu, ``Differential privacy preserving spectral graph analysis,'' in {\em Advances in Knowledge Discovery and Data Mining: 17th Pacific-Asia Conference, PAKDD 2013, Gold Coast, Australia, April 14-17, 2013, Proceedings, Part II 17}, pp.~329--340, Springer, 2013.

\bibitem{abadi2016deep}
M.~Abadi, A.~Chu, I.~Goodfellow, H.~B. McMahan, I.~Mironov, K.~Talwar, and L.~Zhang, ``Deep learning with differential privacy,'' in {\em Proceedings of the 2016 ACM SIGSAC conference on computer and communications security}, pp.~308--318, 2016.

\bibitem{feldman2018privacy}
V.~Feldman, I.~Mironov, K.~Talwar, and A.~Thakurta, ``Privacy amplification by iteration,'' in {\em 2018 IEEE 59th Annual Symposium on Foundations of Computer Science (FOCS)}, pp.~521--532, IEEE, 2018.

\bibitem{altschuler2022privacy}
J.~Altschuler and K.~Talwar, ``Privacy of noisy stochastic gradient descent: More iterations without more privacy loss,'' {\em Advances in Neural Information Processing Systems}, vol.~35, pp.~3788--3800, 2022.

\bibitem{ji2016graph}
S.~Ji, P.~Mittal, and R.~Beyah, ``Graph data anonymization, de-anonymization attacks, and de-anonymizability quantification: A survey,'' {\em IEEE Communications Surveys \& Tutorials}, vol.~19, no.~2, pp.~1305--1326, 2016.

\bibitem{abawajy2016privacy}
J.~H. Abawajy, M.~I.~H. Ninggal, and T.~Herawan, ``Privacy preserving social network data publication,'' {\em IEEE communications surveys \& tutorials}, vol.~18, no.~3, pp.~1974--1997, 2016.

\bibitem{li2023private}
Y.~Li, M.~Purcell, T.~Rakotoarivelo, D.~Smith, T.~Ranbaduge, and K.~S. Ng, ``Private graph data release: A survey,'' {\em ACM Computing Surveys}, vol.~55, no.~11, pp.~1--39, 2023.

\bibitem{nissim2007smooth}
K.~Nissim, S.~Raskhodnikova, and A.~Smith, ``Smooth sensitivity and sampling in private data analysis,'' in {\em Proceedings of the thirty-ninth annual ACM symposium on Theory of computing}, pp.~75--84, 2007.

\bibitem{karwa2011private}
V.~Karwa, S.~Raskhodnikova, A.~Smith, and G.~Yaroslavtsev, ``Private analysis of graph structure,'' {\em Proceedings of the VLDB Endowment}, vol.~4, no.~11, pp.~1146--1157, 2011.

\bibitem{zhang2015private}
J.~Zhang, G.~Cormode, C.~M. Procopiuc, D.~Srivastava, and X.~Xiao, ``Private release of graph statistics using ladder functions,'' in {\em Proceedings of the 2015 ACM SIGMOD international conference on management of data}, pp.~731--745, 2015.

\bibitem{mcsherry2007mechanism}
F.~McSherry and K.~Talwar, ``Mechanism design via differential privacy,'' in {\em 48th Annual IEEE Symposium on Foundations of Computer Science (FOCS'07)}, pp.~94--103, IEEE, 2007.

\bibitem{hay2009accurate}
M.~Hay, C.~Li, G.~Miklau, and D.~Jensen, ``Accurate estimation of the degree distribution of private networks,'' in {\em 2009 Ninth IEEE International Conference on Data Mining}, pp.~169--178, IEEE, 2009.

\bibitem{kasiviswanathan2013analyzing}
S.~P. Kasiviswanathan, K.~Nissim, S.~Raskhodnikova, and A.~Smith, ``Analyzing graphs with node differential privacy,'' in {\em Theory of Cryptography: 10th Theory of Cryptography Conference, TCC 2013, Tokyo, Japan, March 3-6, 2013. Proceedings}, pp.~457--476, Springer, 2013.

\bibitem{task2012guide}
C.~Task and C.~Clifton, ``A guide to differential privacy theory in social network analysis,'' in {\em 2012 IEEE/ACM International Conference on Advances in Social Networks Analysis and Mining}, pp.~411--417, IEEE, 2012.

\bibitem{wang2013learning}
Y.~Wang, X.~Wu, J.~Zhu, and Y.~Xiang, ``On learning cluster coefficient of private networks,'' {\em Social network analysis and mining}, vol.~3, pp.~925--938, 2013.

\bibitem{ahmed2019publishing}
F.~Ahmed, A.~X. Liu, and R.~Jin, ``Publishing social network graph eigenspectrum with privacy guarantees,'' {\em IEEE Transactions on Network Science and Engineering}, vol.~7, no.~2, pp.~892--906, 2019.

\bibitem{li2017differential}
X.~Li, J.~Yang, Z.~Sun, J.~Zhang, {\em et~al.}, ``Differential privacy for edge weights in social networks,'' {\em Security and Communication Networks}, vol.~2017, 2017.

\bibitem{andersen2006local}
R.~Andersen, F.~Chung, and K.~Lang, ``Local graph partitioning using pagerank vectors,'' in {\em 2006 47th Annual IEEE Symposium on Foundations of Computer Science (FOCS'06)}, pp.~475--486, IEEE, 2006.

\bibitem{dwork2014algorithmic}
C.~Dwork, A.~Roth, {\em et~al.}, ``The algorithmic foundations of differential privacy,'' {\em Foundations and Trends{\textregistered} in Theoretical Computer Science}, vol.~9, no.~3--4, pp.~211--407, 2014.

\bibitem{kairouz2015composition}
P.~Kairouz, S.~Oh, and P.~Viswanath, ``The composition theorem for differential privacy,'' in {\em International conference on machine learning}, pp.~1376--1385, PMLR, 2015.

\bibitem{mironov2017renyi}
I.~Mironov, ``R{\'e}nyi differential privacy,'' in {\em 2017 IEEE 30th computer security foundations symposium (CSF)}, pp.~263--275, IEEE, 2017.

\bibitem{chien2024stochastic}
E.~Chien, H.~Wang, Z.~Chen, and P.~Li, ``Stochastic gradient langevin unlearning,'' {\em arXiv preprint arXiv:2403.17105}, 2024.

\bibitem{chien2024convergent}
E.~Chien and P.~Li, ``Convergent privacy loss of noisy-sgd without convexity and smoothness,'' {\em arXiv preprint arXiv:2410.01068}, 2024.

\bibitem{ye2022graph}
W.~Ye, Z.~Huang, Y.~Hong, and A.~Singh, ``Graph neural diffusion networks for semi-supervised learning,'' {\em arXiv preprint arXiv:2201.09698}, 2022.

\bibitem{page1998pagerank}
L.~Page, S.~Brin, R.~Motwani, and T.~Winograd, ``The pagerank citation ranking: Bring order to the web,'' in {\em Proc. of the 7th International World Wide Web Conf}, 1998.

\bibitem{kondor2004diffusion}
R.~Kondor and J.-P. Vert, ``Diffusion kernels,'' {\em kernel methods in computational biology}, pp.~171--192, 2004.

\bibitem{goksedef2010combination}
M.~G{\"o}ksedef and {\c{S}}.~G{\"u}nd{\"u}z-{\"O}{\u{g}}{\"u}d{\"u}c{\"u}, ``Combination of web page recommender systems,'' {\em Expert Systems with Applications}, vol.~37, no.~4, pp.~2911--2922, 2010.

\bibitem{hwang2021identifying}
S.~Hwang and Y.~Lee, ``Identifying customer priority for new products in target marketing: Using rfm model and textrank,'' {\em Innovative Marketing}, vol.~17, no.~2, p.~125, 2021.

\bibitem{sajadmanesh2023gap}
S.~Sajadmanesh, A.~S. Shamsabadi, A.~Bellet, and D.~Gatica-Perez, ``$\{$GAP$\}$: Differentially private graph neural networks with aggregation perturbation,'' in {\em 32nd USENIX Security Symposium (USENIX Security 23)}, pp.~3223--3240, 2023.

\bibitem{chien2024differentially}
E.~Chien, W.-N. Chen, C.~Pan, P.~Li, A.~Ozgur, and O.~Milenkovic, ``Differentially private decoupled graph convolutions for multigranular topology protection,'' {\em Advances in Neural Information Processing Systems}, vol.~36, 2024.

\bibitem{kearns2014mechanism}
M.~Kearns, M.~Pai, A.~Roth, and J.~Ullman, ``Mechanism design in large games: Incentives and privacy,'' in {\em Proceedings of the 5th conference on Innovations in theoretical computer science}, pp.~403--410, 2014.

\bibitem{jeh2003scaling}
G.~Jeh and J.~Widom, ``Scaling personalized web search,'' in {\em Proceedings of the 12th international conference on World Wide Web}, pp.~271--279, 2003.

\bibitem{sajadmanesh2024progap}
S.~Sajadmanesh and D.~Gatica-Perez, ``Progap: Progressive graph neural networks with differential privacy guarantees,'' in {\em Proceedings of the 17th ACM International Conference on Web Search and Data Mining}, pp.~596--605, 2024.

\bibitem{Zafarani+Liu:2009}
R.~Zafarani and H.~Liu, ``Social computing data repository at {ASU},'' 2009.

\bibitem{nr}
R.~A. Rossi and N.~K. Ahmed, ``The network data repository with interactive graph analytics and visualization,'' in {\em AAAI}, 2015.

\bibitem{jarvelin2002cumulated}
K.~J{\"a}rvelin and J.~Kek{\"a}l{\"a}inen, ``Cumulated gain-based evaluation of ir techniques,'' {\em ACM Transactions on Information Systems (TOIS)}, vol.~20, no.~4, pp.~422--446, 2002.

\bibitem{van2014renyi}
T.~Van~Erven and P.~Harremos, ``R{\'e}nyi divergence and kullback-leibler divergence,'' {\em IEEE Transactions on Information Theory}, vol.~60, no.~7, pp.~3797--3820, 2014.

\bibitem{gil2013renyi}
M.~Gil, F.~Alajaji, and T.~Linder, ``R{\'e}nyi divergence measures for commonly used univariate continuous distributions,'' {\em Information Sciences}, vol.~249, pp.~124--131, 2013.

\bibitem{duchi2008efficient}
J.~Duchi, S.~Shalev-Shwartz, Y.~Singer, and T.~Chandra, ``Efficient projections onto the l 1-ball for learning in high dimensions,'' in {\em Proceedings of the 25th international conference on Machine learning}, pp.~272--279, 2008.

\end{thebibliography}

\newpage
\onecolumn
\appendix
{\huge \bfseries Appendix}
\etocdepthtag.toc{mtappendix}
\etocsettagdepth{mtchapter}{none}
\etocsettagdepth{mtappendix}{subsection}
\tableofcontents

\newpage

\section{Preliminaries}\label{apx.prelim}
In this section, definitions and lemmas are introduced to facilitate the presentation of proofs. Recall that the contraction mapping denote a self-mapping $l$-Lipschitz continuous function with $l \in [0, 1]$, and the \(\infty\)-Wasserstein distance is defined as \(W_{\infty} (\mu, \nu) = \inf_{\gamma \in \Gamma(\mu, \nu)} \mathrm{ess\,sup}_{(x, y) \sim \gamma} \|x - y\|\), where \(\Gamma(\mu, \nu)\) represents the collection of all couplings between \(\mu\) and \(\nu\). Further, define \(\mathcal{R}_\alpha(\sigma, \rho) := \sup_{\mathbf{r}: \|\mathbf{r}\| \leq \rho} \mathcal{D}_\alpha(\xi + \mathbf{r} \| \xi)\), where $\xi \sim \lap(\0,\sigma)$, measures the extent to which the Laplacian distribution can absorb shifts of magnitude $\rho$. For clarity, \(\mathcal{D}_\alpha(\mu \| \nu)\) and \(\mathcal{D}_\alpha(X \| Y)\), with \(X \sim \mu\) and \(Y \sim \nu\), are used interchangeably when the context is clear. 
Besides, we denote $X_{i:j}, i, j \in \mathbb{Z}_+, i \leq j$ as joint couple of $(X_i, X_{i+1}, \ldots, X_j)$. 
We introduce Shifted R\'enyi Divergence as follows:
\begin{definitionA}[Shifted R\'enyi Divergence]\label{def.shifted_renyi}
    Let $\mu$ and $\nu$ be distributions defined over a Banach space $(\Xcal, \|\cdot\|)$. For parameter $z \geq 0$ and $\alpha \geq 1$, the $z$-shifted R\'enyi divergence between $\mu$ and $\nu$ is defined as
    \begin{align}
        \Dcal_\alpha^{(z)}(\mu \| \nu) = \inf_{\mu': W_{\infty}(\mu, \mu')\leq z}\Dcal_\alpha(\mu' \| \nu)
    \end{align}
\end{definitionA}

The following two lemmas illustrate how shifted R\'enyi divergence behaves under noise convolution and contraction mapping.

\begin{lemmaA}[Shift-reduction lemma~\cite{feldman2018privacy}]\label{lm.shift_reduction}
    Let \(\mu, \nu, \varrho\) be probability distributions on \(\mathbb{R}^n\). For any \(\rho \geq 0\),
    \begin{align}
    \Dcal_{\alpha}^{(z)}(\mu * \varrho \| \nu * \varrho) \leq \Dcal_{\alpha}^{(z+\rho)}(\mu \| \nu) + \sup_{\mathbf{r}: \|\mathbf{r}\| \leq \rho} \mathcal{D}_\alpha(\varrho + \mathbf{r} \| \varrho).
    \end{align}
\end{lemmaA}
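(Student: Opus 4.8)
The plan is to argue directly from the definition of the shifted R\'enyi divergence (Definition~\ref{def.shifted_renyi}): I will exhibit a concrete near-optimal witness distribution $\lambda$ for $\Dcal_\alpha^{(z)}(\mu * \varrho \| \nu * \varrho)$ and then bound $\Dcal_\alpha(\lambda \| \nu * \varrho)$ by splitting it into a transport part, paid for by the $W_\infty$ shift budget, and a noise-absorption part, paid for by $\mathcal{R}_\alpha(\rho;\varrho) := \sup_{\|\mathbf r\|\le\rho}\Dcal_\alpha(\varrho + \mathbf r\|\varrho)$. (If the right-hand side is $+\infty$ there is nothing to prove, so assume it is finite.)

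First I would fix $\epsilon>0$ and pick $\mu'$ with $W_\infty(\mu,\mu')\le z+\rho$ and $\Dcal_\alpha(\mu'\|\nu)\le \Dcal_\alpha^{(z+\rho)}(\mu\|\nu)+\epsilon$, together with a coupling of $X\sim\mu$ and $X'\sim\mu'$ satisfying $\|X-X'\|\le z+\rho$ almost surely (allowing a further $\epsilon$ if the $W_\infty$-infimum is not attained). Decompose the displacement $D:=X'-X$ as $D=D_z+D_\rho$ with $D_z:=\tfrac{z}{z+\rho}D$ and $D_\rho:=\tfrac{\rho}{z+\rho}D$, so that $\|D_z\|\le z$ and $\|D_\rho\|\le\rho$ pointwise (the case $z+\rho=0$ is trivial). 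Draw $W\sim\varrho$ independently of $(X,X')$ and set $\lambda:=\mathrm{law}(X+D_z+W)$. Since $\mathrm{law}(X+W)=\mu*\varrho$ and $\|D_z\|\le z$, the pair $(X+W,\,X+D_z+W)$ is a coupling certifying $W_\infty(\mu*\varrho,\lambda)\le z$, hence $\Dcal_\alpha^{(z)}(\mu*\varrho\|\nu*\varrho)\le\Dcal_\alpha(\lambda\|\nu*\varrho)$.

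Next I would use the identity $X+D_z=X'-D_\rho$ to rewrite $\lambda=\mathrm{law}(X'-D_\rho+W)$, and bound $\Dcal_\alpha(\lambda\|\nu*\varrho)$ by viewing both $\lambda$ and $\nu*\varrho$ as outputs of a two-stage process: stage one emits $X'\sim\mu'$ (resp.\ $Y\sim\nu$); stage two adds noise, namely $W-D_\rho$ in the first case and $W'\sim\varrho$ in the second. Conditioning on the stage-one value $x'$, the regular conditional law of $D_\rho$ is supported on $\{\|\mathbf r\|\le\rho\}$ and $W$ is independent, so the conditional stage-two output law is a mixture of translates of $x'+\varrho$ by vectors of norm at most $\rho$; by quasi-convexity of $\Dcal_\alpha$ in its first argument together with translation invariance, this conditional divergence is at most $\mathcal{R}_\alpha(\rho;\varrho)$, uniformly in $x'$. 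Applying the adaptive composition property of R\'enyi divergence across the two stages bounds $\Dcal_\alpha$ of the joint laws by $\Dcal_\alpha(\mu'\|\nu)+\mathcal{R}_\alpha(\rho;\varrho)$, and the data-processing inequality for the projection onto the stage-two output gives $\Dcal_\alpha(\lambda\|\nu*\varrho)\le\Dcal_\alpha(\mu'\|\nu)+\mathcal{R}_\alpha(\rho;\varrho)\le\Dcal_\alpha^{(z+\rho)}(\mu\|\nu)+\epsilon+\mathcal{R}_\alpha(\rho;\varrho)$. Sending $\epsilon\downarrow 0$ completes the argument.

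\textbf{Main obstacle.} Since R\'enyi divergence obeys no triangle inequality, one cannot simply chain ``$\mu*\varrho$ is $W_\infty$-close to $\mu'*\varrho$'' with ``$\mu'*\varrho$ is $\Dcal_\alpha$-close to $\nu*\varrho$''. The decisive step is therefore the two-stage/composition reformulation above: it repackages the absorption of the residual shift $D_\rho$ as a bound on a \emph{conditional} divergence, which is exactly what the adaptive composition theorem for $\Dcal_\alpha$ consumes. The remaining points are routine: non-attainment of the $W_\infty$-infimum (absorbed into $\epsilon$), existence of the regular conditional law of $D_\rho$ given $X'$ in the Banach setting, and the degenerate case $z+\rho=0$.
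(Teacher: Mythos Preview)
Your proposal is correct and follows the standard argument from \cite{feldman2018privacy}: choose a near-optimal $(z+\rho)$-shifted witness $\mu'$, split the displacement $X'-X$ into a $z$-part (used to build the witness $\lambda$ for the left-hand side) and a $\rho$-part (absorbed by the noise via quasi-convexity and strong composition), and finish with data processing. One small remark: the quasi-convexity step is applied to a mixture in the first argument of $\Dcal_\alpha$, which is exactly right, and your handling of the conditional law of $D_\rho$ given $X'$ is the standard Markov-kernel formulation.

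Note, however, that the paper does not give its own proof of this lemma: it is stated in Appendix~\ref{apx.prelim} as a preliminary and attributed to \cite{feldman2018privacy}, then invoked (as step $(a)$) in the proof of Lemma~\ref{lm.PABI_laplacian}. So there is no paper-side argument to compare against beyond the cited reference, with which your approach coincides.
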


\begin{lemmaA}[Contraction-reduction lemma~\cite{altschuler2022privacy}]\label{lm.contraction-reduction}
    Suppose \(\psi, \psi'\) are random functions from \(\mathbb{R}^n\) to \(\mathbb{R}^n\) such that (i) each is a strict $c$-contraction almost surely, and (ii) there exists a coupling of \((\psi, \psi')\) under which \(\sup_\x \|\psi(\x) - \psi'(\x)\| \leq \rho\) almost surely. Then for any probability distributions \(\mu\) and \(\nu\) on \(\mathbb{R}^n\),
    \begin{align}
        \Dcal_{\alpha}^{(cz+\rho)}(\psi_{\#}\mu \| \psi'_{\#}\nu) \leq \Dcal_{\alpha}^{(z)}(\mu \| \nu).
    \end{align}
\end{lemmaA}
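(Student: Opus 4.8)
The plan is to unwind the definition of the $z$-shifted R\'enyi divergence, replace $\mu$ by a (near-)optimal witness for the right-hand side, and then \emph{push that witness through the random maps} to obtain an admissible witness for the left-hand side. Fix $\varepsilon > 0$. By Definition~\ref{def.shifted_renyi}, choose $\mu''$ with $W_\infty(\mu,\mu'') \leq z$ and $\Dcal_\alpha(\mu'' \| \nu) \leq \Dcal_\alpha^{(z)}(\mu\|\nu) + \varepsilon$. The proposed witness for $\Dcal_\alpha^{(cz+\rho)}(\psi_\#\mu \| \psi'_\#\nu)$ is $\lambda := \psi'_\#\mu''$, the pushforward of $\mu''$ through the random map $\psi'$. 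It then suffices to verify (a) $\Dcal_\alpha(\lambda \| \psi'_\#\nu) \leq \Dcal_\alpha^{(z)}(\mu\|\nu) + \varepsilon$ and (b) $W_\infty(\psi_\#\mu,\lambda) \leq cz+\rho$; letting $\varepsilon \downarrow 0$ afterwards yields the lemma.

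Part (a) is the data-processing (post-processing) inequality for R\'enyi divergence: applying the same Markov kernel to both arguments cannot increase $\Dcal_\alpha$. Here the kernel is ``draw the internal randomness of $\psi'$, then apply $\psi'$'', so $\Dcal_\alpha(\psi'_\#\mu'' \| \psi'_\#\nu) \leq \Dcal_\alpha(\mu'' \| \nu) \leq \Dcal_\alpha^{(z)}(\mu\|\nu)+\varepsilon$. For part (b) I would exhibit a single explicit coupling of $\psi_\#\mu$ and $\psi'_\#\mu''$. Let $(X,X')$ be a $W_\infty$-optimal (or, if the infimum is not attained, an $\varepsilon$-near-optimal) coupling of $(\mu,\mu'')$, so $\|X-X'\| \leq z$ almost surely; let $(\psi,\psi')$ be the coupling guaranteed by hypothesis~(ii), taken independent of $(X,X')$; and output $(\psi(X),\psi'(X'))$. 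Its marginals are $\psi_\#\mu$ and $\psi'_\#\mu''$, and almost surely
\[
  \|\psi(X)-\psi'(X')\| \leq \|\psi(X)-\psi(X')\| + \|\psi(X')-\psi'(X')\| \leq c\|X-X'\| + \rho \leq cz + \rho,
\]
where the first summand is controlled by the almost-sure $c$-contractivity of $\psi$ from~(i), and the second by the almost-sure distortion bound $\sup_\x \|\psi(\x)-\psi'(\x)\| \leq \rho$ from~(ii). Hence $W_\infty(\psi_\#\mu,\psi'_\#\mu'') \leq cz+\rho$ (up to the harmless $c\varepsilon$ that is absorbed in the limit).

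Combining, $\lambda = \psi'_\#\mu''$ is admissible in the infimum defining $\Dcal_\alpha^{(cz+\rho)}(\psi_\#\mu \| \psi'_\#\nu)$, whence that quantity is at most $\Dcal_\alpha(\psi'_\#\mu''\|\psi'_\#\nu) \leq \Dcal_\alpha(\mu''\|\nu) \leq \Dcal_\alpha^{(z)}(\mu\|\nu) + \varepsilon$; sending $\varepsilon \to 0$ finishes the proof. The non-routine points are purely measure-theoretic bookkeeping: the $W_\infty$-infima need not be attained, so one must carry $\varepsilon$-approximate witnesses and couplings throughout (or invoke existence of $W_\infty$-optimal couplings on Polish spaces with finite $W_\infty$), and one needs the ``random function'' $(\x,\omega)\mapsto\psi(\x;\omega)$ to be jointly measurable so that $\psi_\#\mu$ and the coupling in part~(b) are well defined; I would state these as a blanket regularity assumption. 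I expect the step that needs the most care to be the construction in~(b) of a \emph{single} coupling realizing both the $z$-shift reduction (through $\psi$) and the $\rho$-distortion (between $\psi$ and $\psi'$) simultaneously, and in particular the verification that $(\psi(X),\psi'(X'))$ genuinely has marginals $\psi_\#\mu$ and $\psi'_\#\mu''$ when $\psi,\psi'$ carry their own shared randomness.
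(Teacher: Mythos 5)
Your proof is correct. The paper imports this lemma from Altschuler--Talwar without proof, and your argument---pushing an $\varepsilon$-optimal witness $\mu''$ through $\psi'$, invoking data processing for the kernel induced by $\psi'$, and building the coupling $(\psi(X),\psi'(X'))$ with the triangle-inequality split $\|\psi(X)-\psi(X')\|+\|\psi(X')-\psi'(X')\|\le c z+\rho$---is exactly the standard proof of that result; the measure-theoretic caveats you flag (attainment of the $W_\infty$ infimum, joint measurability of the random maps) are genuine but routinely assumed away in this literature.
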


Further, we recall two properties of RDP in our analysis. 
\begin{lemmaA}[Post-processing Property of R\'enyi Divergence]\label{lm.post-processing}
    For any R\'enyi parameter $\alpha \geq 1$, any random function $f$, and any probability measures $\mu, \nu$, we have
    \begin{align}
        \Dcal_\alpha(f_{\#}\mu \| f_{\#}\nu) \leq \Dcal_\alpha(\mu \| \nu)
    \end{align}
\end{lemmaA}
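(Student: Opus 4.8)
The plan is to recognize $\Dcal_\alpha$ as a strictly increasing reparametrization of the $f$-divergence $D_{t^\alpha}(\mu\|\nu):=\mathbb{E}_{x\sim\nu}\big[(\tfrac{d\mu}{d\nu}(x))^\alpha\big]$, and then to prove the data-processing inequality for $D_{t^\alpha}$ by a convexity (Jensen) argument. I may assume $\alpha\in(1,\infty)$: the case $\alpha=1$ is the classical data-processing inequality for KL divergence and $\alpha=\infty$ follows by continuity (or the standard max-divergence argument). I may also assume $\mu\ll\nu$, since otherwise the right-hand side is $+\infty$ and there is nothing to prove (and in that case $f_\#\mu\ll f_\#\nu$ as well, by pushing forward a $\nu$-null set). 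Because $t\mapsto\frac1{\alpha-1}\log t$ is strictly increasing on $(0,\infty)$, it suffices to establish $D_{t^\alpha}(f_\#\mu\|f_\#\nu)\le D_{t^\alpha}(\mu\|\nu)$, after which applying $\frac1{\alpha-1}\log(\cdot)$ to both sides yields Lemma~\ref{lm.post-processing}.

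\textbf{Step 1 (randomized $\to$ deterministic).} First I would remove the randomness of $f$. Realize $f$ as $f(x)=g(x,U)$ for a deterministic measurable $g$ and an auxiliary $U\sim\mathrm{Unif}[0,1]$ independent of the input, and set $\tilde\mu=\mu\otimes\mathrm{Unif}[0,1]$ and $\tilde\nu=\nu\otimes\mathrm{Unif}[0,1]$. Then $f_\#\mu=g_\#\tilde\mu$ and $f_\#\nu=g_\#\tilde\nu$, while $\tfrac{d\tilde\mu}{d\tilde\nu}(x,u)=\tfrac{d\mu}{d\nu}(x)$ gives $D_{t^\alpha}(\tilde\mu\|\tilde\nu)=D_{t^\alpha}(\mu\|\nu)$. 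Hence it is enough to treat a deterministic measurable map $g$.

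\textbf{Step 2 (deterministic case).} For deterministic $g$ I would use the disintegration identity $\tfrac{dg_\#\mu}{dg_\#\nu}(y)=\mathbb{E}_{X\sim\nu}\big[\tfrac{d\mu}{d\nu}(X)\,\big|\,g(X)=y\big]$ for $g_\#\nu$-almost every $y$, which one checks by verifying $\int_B \tfrac{dg_\#\mu}{dg_\#\nu}\,d(g_\#\nu)=g_\#\mu(B)$ for every measurable $B$, i.e. $\int_{g^{-1}(B)}\tfrac{d\mu}{d\nu}\,d\nu=\mu(g^{-1}(B))$. Since $t\mapsto t^\alpha$ is convex on $[0,\infty)$ for $\alpha\ge1$, conditional Jensen gives $\big(\tfrac{dg_\#\mu}{dg_\#\nu}(y)\big)^\alpha\le\mathbb{E}_{X\sim\nu}\big[(\tfrac{d\mu}{d\nu}(X))^\alpha\,\big|\,g(X)=y\big]$; integrating in $y\sim g_\#\nu$ and using the tower rule yields $D_{t^\alpha}(g_\#\mu\|g_\#\nu)\le D_{t^\alpha}(\mu\|\nu)$. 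Chaining this with Step 1 and taking $\frac1{\alpha-1}\log$ of both sides completes the proof.

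\textbf{Main obstacle.} The only non-routine points are the disintegration identity for the Radon--Nikodym derivative of a pushforward (which requires measurable existence of conditional expectations / a regular conditional distribution, standard on Polish spaces) and the reduction of a randomized channel to a deterministic function on an enlarged probability space; once these are in place the estimate is pure convexity. A fully self-contained alternative is simply to invoke the general data-processing inequality for $f$-divergences together with the observation that $\Dcal_\alpha$ is an increasing function of the Hellinger-type $f$-divergence $D_{t^\alpha}$, which delivers the claim for all $\alpha\ge1$ at once and makes the monotonicity transparent.
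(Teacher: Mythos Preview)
Your proposal is correct and follows the standard route: reduce to the deterministic case by absorbing the randomness of $f$ into an auxiliary coordinate, identify the pushforward Radon--Nikodym derivative as a conditional expectation, and apply Jensen's inequality to the convex map $t\mapsto t^\alpha$, then transfer to $\Dcal_\alpha$ via the monotone reparametrization $\tfrac{1}{\alpha-1}\log(\cdot)$.

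There is nothing to compare against: the paper does not prove Lemma~\ref{lm.post-processing}. It is listed in the preliminaries as a known property of R\'enyi divergence (alongside the strong composition rule) and is invoked directly in the main proof at step $(a)$ of Eq.~\eqref{eq.privacy_bound_separation}. Your argument is the textbook one and would serve as a self-contained justification; the caveats you flag (existence of regular conditional distributions, Polish-space assumptions) are the right ones to mention, and for the purposes of this paper the underlying spaces are $\mathbb{R}^n$, so they are unproblematic.
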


\begin{lemmaA}[Strong composition for RDP]\label{lm.strong_composition}
    For any R\'enyi parameter \(\alpha \geq 1\), and any two sequences of random variables \(X_1, \ldots, X_k\) and \(Y_1, \ldots, Y_k\),
    \[
    \Dcal_\alpha(\mathbb{P}_{X_{1:k}} \| \mathbb{P}_{Y_{1:k}}) \leq \sum_{i=1}^k \sup \Dcal_\alpha(\mathbb{P}_{X_i|X_{i-1}=x_{i-1}} \| \mathbb{P}_{Y_i|Y_{i-1}=x_{i-1}}).
    \]
\end{lemmaA}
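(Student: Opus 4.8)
The plan is to reduce the claim to a peeling argument on the moment (exponentiated) form of the Rényi divergence, using the chain rule for the joint laws. Fix $\alpha\in(1,\infty)$; the endpoints $\alpha=1$ and $\alpha=\infty$ will follow by continuity, or directly from the exact chain rule for relative entropy in the former case. Write $\mu=\PP_{X_{1:k}}$ and $\nu=\PP_{Y_{1:k}}$, and let $L_i(z_{1:i}):=\dfrac{p_{X_i\mid X_{1:i-1}}(z_i\mid z_{1:i-1})}{p_{Y_i\mid Y_{1:i-1}}(z_i\mid z_{1:i-1})}$ denote the conditional likelihood ratios (built from regular conditional distributions, which exist on the standard Borel spaces relevant here; in the Euclidean-with-noise setting of the paper these are genuine density ratios). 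By the chain rule, $\dfrac{d\mu}{d\nu}(z_{1:k})=\prod_{i=1}^k L_i(z_{1:i})$, and hence
\[
 \exp\!\big((\alpha-1)\,\Dcal_\alpha(\mu\|\nu)\big)=\EP_{Z_{1:k}\sim\nu}\Big[\textstyle\prod_{i=1}^k L_i(Z_{1:i})^{\alpha}\Big].
\]

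\textbf{Peeling.} I would bound the right-hand side by integrating out coordinates one at a time, starting from $Z_k$. Conditionally on $Z_{1:k-1}=z_{1:k-1}$, the variable $Z_k$ has law $\PP_{Y_k\mid Y_{1:k-1}=z_{1:k-1}}$, so
\[
 \EP\big[L_k(Z_{1:k})^{\alpha}\,\big|\,Z_{1:k-1}=z_{1:k-1}\big]=\exp\!\big((\alpha-1)\,\Dcal_\alpha(\PP_{X_k\mid X_{1:k-1}=z_{1:k-1}}\,\|\,\PP_{Y_k\mid Y_{1:k-1}=z_{1:k-1}})\big)\le e^{(\alpha-1)\epsilon_k},
\]
where $\epsilon_k:=\sup_{z_{1:k-1}}\Dcal_\alpha(\PP_{X_k\mid X_{1:k-1}=z_{1:k-1}}\,\|\,\PP_{Y_k\mid Y_{1:k-1}=z_{1:k-1}})$ is exactly the $k$-th summand in the claimed bound. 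Since $\prod_{i<k}L_i(Z_{1:i})^{\alpha}$ is a function of $Z_{1:k-1}$ only, the tower property pulls $e^{(\alpha-1)\epsilon_k}$ outside and leaves $\EP_{Z_{1:k-1}\sim\PP_{Y_{1:k-1}}}\big[\prod_{i<k}L_i^{\alpha}\big]$. Iterating (formally, an induction on $k$) gives $\EP_\nu[\prod_{i=1}^k L_i^{\alpha}]\le\prod_{i=1}^k e^{(\alpha-1)\epsilon_i}=\exp\!\big((\alpha-1)\sum_{i=1}^k\epsilon_i\big)$; taking logarithms and dividing by $\alpha-1>0$ yields $\Dcal_\alpha(\mu\|\nu)\le\sum_{i=1}^k\epsilon_i$, which is the statement. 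In the paper's notation the conditioning is written $X_{i-1}=x_{i-1}$; this coincides with the full-history version above when the chains are Markov — as they are for the CNI iterates to which the lemma is applied — and otherwise one reads $x_{i-1}$ as shorthand for the prefix $x_{1:i-1}$ and $\sup$ as the supremum over that prefix.

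\textbf{Order of steps, and the main obstacle.} The steps are: (1) reduce to $\alpha\in(1,\infty)$ and write $\Dcal_\alpha$ in moment form; (2) factor $d\mu/d\nu$ by the chain rule; (3) compute $\EP[L_k^{\alpha}\mid Z_{1:k-1}]$ as an exponentiated conditional Rényi divergence and bound it uniformly by $e^{(\alpha-1)\epsilon_k}$; (4) peel the factors off one at a time via the tower property and induct; (5) take logarithms. The main obstacle is not conceptual but measure-theoretic bookkeeping in steps (3)–(4): one must fix coherent regular conditional distributions, verify that $\EP[L_k^{\alpha}\mid Z_{1:k-1}]$ is indeed the claimed conditional divergence and is measurable in the conditioning value, and confirm that $\prod_{i<k}L_i^{\alpha}$ depends only on $Z_{1:k-1}$ so the induction hypothesis applies; the bound is trivially true whenever some $\epsilon_k=\infty$. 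For $\alpha=\infty$ one instead bounds $\mathrm{ess\,sup}_\nu\prod_i L_i\le\prod_i e^{\epsilon_i}$ pointwise; for $\alpha=1$ the exact chain rule $\Dcal_1(\mu\|\nu)=\sum_i\EP[\Dcal_1(\PP_{X_i\mid X_{1:i-1}}\|\PP_{Y_i\mid Y_{1:i-1}})]$ followed by the supremum over the conditioning gives the claim immediately.
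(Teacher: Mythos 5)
Your proof is correct. The paper states this lemma as a standard preliminary without giving a proof (it is imported from the PABI literature), and your argument — moment form of $\Dcal_\alpha$, chain-rule factorization of $d\mu/d\nu$ into conditional likelihood ratios, peeling the last coordinate via the tower property, and induction — is exactly the standard derivation used in the cited references. Your observation that the conditioning written as $X_{i-1}=x_{i-1}$ must be read as conditioning on the full prefix $x_{1:i-1}$ for non-Markov sequences (the two coinciding for the Markov CNI iterates to which the lemma is applied) is a correct and worthwhile clarification of the statement as printed.
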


\section{Privacy Guarantees for Noisy Graph Diffusion: Analytical Proofs}\label{apx.thm_proof}
In this section, we present the main proof for the results in Theorem~\ref{thm.privacy_noisy_graph_diffusion} as detailed in Sec.~\ref{sec.main}.

\subsection{Main Proof} \label{app.main_proof}
\begin{proof}[Proof of Theorem~\ref{thm.privacy_noisy_graph_diffusion}]
    Given a seed $\seed$, a total diffusion step $K$, a noise scale $\sigma$, and degree-based thresholding functions $f, f'$, we consider two coupled graph diffusions $\Dscr_{K, \sigma}(\seed), \Dscr_{K, \sigma}'(\seed)$ which propogate on two joint edge-level adjacent graphs $\GG, \GG'$ respectively with diffusion mapping $\phi_k, \phi_k', k \in [K]$. Specifically, we have
    \begin{align}
        \Dscr_{K, \sigma}(\seed): \quad &\s_k = \phi_k(f(\s_{k-1})) + \xi_k^{(1)} + \xi_k^{(2)}, 1 \leq k \leq K, \label{eq.diffusion_process}\\
        \Dscr_{K, \sigma}'(\seed): \quad &\s_k' = \phi_k'(f'(\s_{k-1}')) + \xi_k'^{(1)} + \xi_k'^{(2)}, 1 \leq k \leq K.\label{eq.adjacent_diffusion_process} \\
    \end{align}
    where $\xi_k^{(1)}, \xi_k'^{(1)}, \xi_k^{(2)}, \xi_k'^{(2)}$ are distributed according to the law $\lap(\0, \sigma)$.

    \textbf{Step 1: Distortion of Single-Step Graph Diffusion} In each diffusion step, a distortion over the diffusion mappings $\phi_k \circ f$ is introduced due to the graph adjacency between the graph diffusions $\Dscr_{K, \sigma}(\seed)$ and $\Dscr_{K, \sigma}'(\seed)$. This distortion, which is crucial for the subsequent analysis, is characterized by the following lemma:
    \begin{lemmaB}[Distortion of Graph Diffusion]\label{lm.phi_property}
        Given two graph diffusions $\Dscr_{K, \sigma}(\seed), \Dscr_{K, \sigma}'(\seed)$ mentioned above. Let $f, f'$ denote corresponding graph-dependent degree-based thresholding operators, i.e., $f(\mathbf{x}) = \min(\max(\mathbf{x}, -\eta \cdot \mathbf{d}), \eta \cdot \mathbf{d})$. 
        The diffusion distortion satisfies:
        \begin{align}
            \sup_{\x \in \mathbb{R}^n}\|\phi_k(f(\x)) - \phi_k'(f'(\x))\|_1 \leq \max(4\gaone, 2\gamma_{\text{max}}) \cdot \eta
        \end{align}
        where Lipschitz constant $\gamma_{\text{max}} = \max_{k}|\gamma_{1, k}| + |\gamma_{2, k}|$, and maximum diffusion coefficient $\gaone = \max_{k}|\gamma_{1, k}|$.
    \end{lemmaB}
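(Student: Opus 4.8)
\textbf{Proof proposal for Lemma~\ref{lm.phi_property}.}

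The plan is to bound the single-step distortion by splitting it into a ``same-map'' part and a ``different-map'' part via the triangle inequality:
\[
\|\phi_k(f(\x)) - \phi_k'(f'(\x))\|_1 \leq \|\phi_k(f(\x)) - \phi_k(f'(\x))\|_1 + \|\phi_k(f'(\x)) - \phi_k'(f'(\x))\|_1.
\]
For the first term, I would use that $\phi_k$ is $\gamma_{\max}$-Lipschitz in $\|\cdot\|_1$ (this is exactly the Lipschitz constant introduced after Eq.~\eqref{eq.graph_diffusion}, since the seed term $\gamma_{3,k}\seed$ cancels in a difference), so it reduces to bounding $\|f(\x) - f'(\x)\|_1$. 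Since $f$ and $f'$ clip coordinate $i$ to $[-\eta d_i, \eta d_i]$ and $[-\eta d_i', \eta d_i']$ respectively, and $\GG, \GG'$ differ in exactly one edge $\{u,v\}$, the degree vectors $\dd$ and $\dd'$ differ only in coordinates $u$ and $v$, each by $1$. Coordinatewise, $|[f(\x)]_i - [f'(\x)]_i| \le \eta |d_i - d_i'|$, so the sum is at most $2\eta$; hence the first term is at most $2\gamma_{\max}\eta$.

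For the second term, I would expand $\phi_k(\y) - \phi_k'(\y) = \gamma_{1,k}(\Pbf - \Pbf')\y$ where $\y = f'(\x)$ and $\Pbf = \A\D^{-1}$, $\Pbf' = \A'\D'^{-1}$ are the two random-walk matrices (the $\gamma_{2,k}$ and $\gamma_{3,k}$ terms are graph-independent and cancel). So the second term is at most $\gaone \, \|(\Pbf - \Pbf')\y\|_1$. The key point is that $\y = f'(\x)$ is clipped: $|y_i| \le \eta d_i' $ for every $i$. Now I must analyze $(\Pbf - \Pbf')\y$ when $\A, \A'$ differ in the single edge $\{u,v\}$. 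The columns of $\Pbf - \Pbf'$ are supported on indices in $\{u,v\}$: column $j \notin \{u,v\}$ changes only because the normalizing degree $d_j$ is unchanged (so no change) — wait, more carefully, column $j$ of $\Pbf$ is $\A_{:,j}/d_j$; this differs from $\Pbf'$ only if $j \in \{u,v\}$ (degree changes) or row entry $(u,v)/(v,u)$ flips. So $(\Pbf-\Pbf')\y = \sum_{j} (\Pbf_{:,j} - \Pbf'_{:,j}) y_j$, and only $j \in \{u,v\}$ contribute, plus I should track the $(u,v)$ and $(v,u)$ entries separately. Writing it out: the difference vector has a contribution from re-normalizing columns $u$ and $v$ (changing $1/d_u$ to $1/(d_u\pm 1)$ over $d_u$ nonzero entries, contributing $\ell_1$ mass $|y_u| \cdot |1 - d_u/(d_u\pm1)| \cdot 1 \le |y_u|/d_u$ roughly, bounded using $|y_u|\le \eta d_u'$), plus the added/removed edge itself contributing entries of size $|y_v|/d_v$ and $|y_u|/d_u$. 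Carefully accounting gives $\|(\Pbf-\Pbf')\y\|_1 \le 4\eta$ (the factor $4$ absorbing the two columns $\times$ two effects each, with each $|y_i|/d_i \le \eta$ up to the $\pm 1$ degree shift), so the second term is at most $4\gaone\eta$.

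Combining, $\|\phi_k(f(\x)) - \phi_k'(f'(\x))\|_1 \le 2\gamma_{\max}\eta + 4\gaone\eta \le 2\max(4\gaone, 2\gamma_{\max})\eta$, which is weaker than claimed by a factor of $2$; so the actual argument must be sharper — presumably by \emph{not} splitting naively but instead distinguishing the two regimes $\gamma_{1,k} \ge \gamma_{2,k}$ (fast diffusion, where the $4\gaone\eta$ term dominates and the clipping/degree-shift bound $\|f(\x)-f'(\x)\|_1 \le 2\eta$ can be folded in more tightly because the relevant coordinates overlap with the edge support) versus $\gamma_{1,k} < \gamma_{2,k}$ (slow diffusion, where $\gamma_{1,k}$ is small so only the $2\gamma_{\max}\eta$ piece matters). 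I expect the main obstacle to be exactly this tight bookkeeping of $(\Pbf - \Pbf')\y$: one has to carefully enumerate which matrix entries change when a single edge is toggled, bound each by $\eta$ using the clipping $|y_i| \le \eta d_i'$ together with $d_i' \in \{d_i - 1, d_i, d_i+1\}$, and show the total is $\le 4\gaone\eta$ while simultaneously accounting for the clipping mismatch $f$ vs $f'$ in a way that does not double-count — so that the final constant is $\max(4\gaone, 2\gamma_{\max})$ rather than the sum. A clean way to organize this is to bound $\sup_\x \|\phi_k(f(\x)) - \phi_k'(f'(\x))\|_1$ directly by writing $\phi_k(f(\x)) - \phi_k'(f'(\x)) = \gamma_{1,k}(\Pbf f(\x) - \Pbf' f'(\x)) + \gamma_{2,k}(f(\x) - f'(\x))$ and bounding $\|\Pbf f(\x) - \Pbf' f'(\x)\|_1 \le \|\Pbf(f(\x)-f'(\x))\|_1 + \|(\Pbf-\Pbf')f'(\x)\|_1 \le 2\eta + 2\eta = 4\eta$ (using that $\Pbf$ is an $\ell_1$-contraction for the first piece) and $\|f(\x)-f'(\x)\|_1 \le 2\eta$, then taking the max over the two coefficient regimes.
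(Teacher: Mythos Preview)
Your overall strategy is reasonable, and you correctly identify that a naive triangle-inequality split only yields the \emph{sum} $4\gaone\eta + 2\gamma_{\max}\eta$ rather than the claimed $\max$. However, the ``clean way'' in your final paragraph does not actually close this gap. First, the intermediate claim $\|(\Pbf-\Pbf')f'(\x)\|_1 \le 2\eta$ is false: a direct column computation gives $\|(\Pbf-\Pbf')\y\|_1 \le 2(|y_u|/d_u + |y_v|/d_v)$, and with $|y_j| \le \eta d_j'$ this only yields $4\eta$ in general (the double-star example with no common neighbors saturates it asymptotically). Second, even granting your bounds $\|\Pbf f(\x) - \Pbf' f'(\x)\|_1 \le 4\eta$ and $\|f(\x)-f'(\x)\|_1 \le 2\eta$, the combination is still $4|\gamma_{1,k}|\eta + 2|\gamma_{2,k}|\eta$, which strictly exceeds $\max(4|\gamma_{1,k}|, 2|\gamma_{1,k}|+2|\gamma_{2,k}|)\eta$ whenever both coefficients are nonzero. ``Taking the max over the two coefficient regimes'' does not follow from these separate bounds.

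The missing idea is that the two sources of distortion --- the random-walk-matrix change $(\Pbf-\Pbf')f(\x)$ and the clipping mismatch $f(\x)-f'(\x)$ --- are supported on the \emph{same} coordinates (nodes $u,v$ and their neighbors) and cannot both be maximized at the same $\x$. The paper exploits this by \emph{not} splitting: it writes the full difference vector explicitly (organizing the neighbors of $u$ and $v$ into disjoint-only and common-neighbor sets), reduces the $\ell_1$ norm to the scalar quantity $\sum_{j\in\{u,v\}} |\gamma_{1,k}|\bigl(f_j' - \tfrac{(d_j-2)f_j}{d_j}\bigr) + |\gamma_{2,k}|(f_j - f_j')$, and then does a case analysis on where $x_j$ sits relative to the two thresholds $\eta(d_j-1)$ and $\eta d_j$. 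The point is that $f_j - f_j'$ (which feeds the $\gamma_{2,k}$ term) is nonzero only when $x_j \in ((d_j-1)\eta, d_j\eta]$, and on that interval the $\gamma_{1,k}$ contribution is already below its maximum; this case analysis is what produces the $\max$ instead of the sum. Any approach that applies the triangle inequality before exploiting this coordinatewise interaction will lose the constant.
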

    We denote this single-step graph diffusion distortion bound as $\rhodf$, i.e., $\rhodf = \max(4\gaone, 2\gamma_{\text{max}}) \cdot \eta$. Moreover, incorporating an $\ell_1$-ball projection $\Pscr_{\Bcal}$ into the pipeline, i.e., adopting 
$\phi_k \circ f \circ \Pscr_{\Bcal}$, does not alter this distortion bound.
    
    \begin{minipage}{0.6\textwidth}
        According to the proof, this distortion analysis is tight and conveys several key insights. First, when the graph diffusion process diffuses relatively slow, i.e., \(\gamma_{1,k} < \gamma_{2,k}\), the distortion is tight and primarily governed by the Lipschitz constant of the diffusion mapping. In contrast, when the diffusion is relatively fast, the distortion is controlled by the maximum diffusion coefficient \(\gamma_{1, k}\). In this latter scenario, the bound becomes asymptotically tight when the graph structure satisfies the condition that the nodes connected by the perturbed edge have no common neighbors, and their degrees tend to infinity. We raise a double star graph as an toy example that satisfies the above no common neighbor condition (see Fig.~\ref{fig:toy_example}).
    \end{minipage}
    \hspace{3mm}
    \begin{minipage}{0.3\textwidth}
        \begin{figure}[H]
            \vspace{-3mm}
            \centering
            \includegraphics[height=2.5cm]{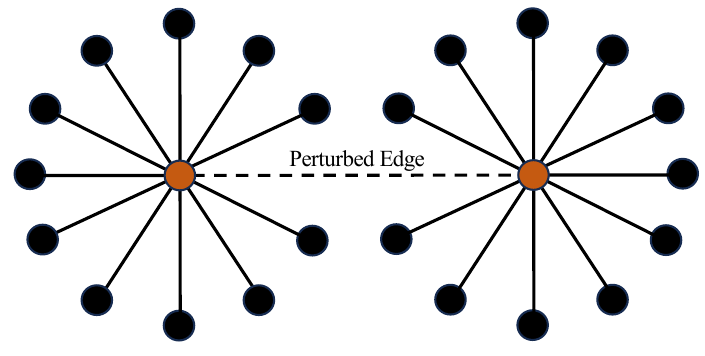}
            \vspace{-2mm}
            \caption{Double Star Graph.}
            \label{fig:toy_example}
        \end{figure}
    \end{minipage}

    \textbf{Step 3: Bounding the Privacy Loss via Shift Absorption and PABI.} 
    Next, we consider the noisy diffusion process, where Laplacian noise is introduced during graph propagation. Note that, we focus on injecting noise at the initial step of the diffusion process, although the analysis can be directly extended to include noise injection at intermediate steps. For $k \geq 1$, drawing upon the noise-splitting approach outlined in \cite{altschuler2022privacy}, we mitigate the shifts caused by diffusion and thresholding by integrating these factors into the noise distribution. Consequently, we construct conditional CNI sequences with identical diffusion mappings for both processes. Specifically, we reformulate the diffusion process $\Dscr_{K, \Pi}'(\s)$ as follows:
    \begin{align}
       \s_k' = \phi_k(f(\s_{k-1}')) + \xi_k'^{(1)} + \tilde\xi_k'^{(2)}, \; \text{where } \tilde\xi_k' \sim \lap(\phi_k'(f'(\s_{k-1}')) - \phi_k(f(\s_{k-1}')), \sigma_k)
    \end{align}
    where we introduce $\tilde\xi_k'^{(2)}$ as shifted Laplacian noise and noise scale $\sigma_k = \sigma$. Therefore, the coupled graph diffusion processes can be summarized as follows:
    \begin{align}
        \s_k = \underbrace{\phi_k(f(\s_{k-1})) + \xi_k^{(1)}}_{\text{Identical CNI}} + \xi_k^{(2)}, \; \s_k' = \underbrace{\phi_k(f(\s_{k-1}')) + \xi_k'^{(1)}}_{\text{Identical CNI}} + \tilde\xi_k'^{(2)}, \; \forall k \geq 1.
    \end{align}
    
    Note that our objective is to establish an upper bound for $\Dcal_\alpha(\s_K \| \s_K')$, we claim that for any parameter $\tau \in \{0, 1, ..., K-1\}$,
    \begin{align}
        \Dcal_\alpha(\s_K \| \s_K') \stackrel{(a)}{\leq} & \Dcal_\alpha(\s_K, \xi_{\tau + 1 : K}^{(2)} \| \s_K', \tilde\xi_{\tau + 1 : K}'^{(2)}) \\
        \stackrel{(b)}{\leq} & \underbrace{\Dcal_\alpha(\xi_{\tau + 1 : K}^{(2)} \| \tilde\xi_{\tau + 1 : K}'^{(2)})\vphantom{\sup_{\zeta_{\tau + 1:K}}}}_{\text{Shift Absorption}} + \underbrace{\sup_{\zeta_{\tau + 1:K}} \Dcal_\alpha(\s_K | \xi_{\tau + 1 : K}^{(2)} =  \zeta_{\tau + 1:K}\| \s_K' | \tilde\xi_{\tau + 1 : K}'^{(2)} = \zeta_{\tau + 1:K})}_{\text{PABI}} \label{eq.privacy_bound_separation}
    \end{align}
    where $\zeta_{\tau + 1:K}$ is a noise realization, $(a)$ is by the post-processing inequality of R\'enyi Divergence (Lemma~\ref{lm.post-processing}), and $(b)$ is from the strong composition rule for R\'enyi divergence (Lemma~\ref{lm.strong_composition}).

    As demonstrated in Eq.~\ref{eq.privacy_bound_separation}, the privacy leakage can be upper bounded by the R\'enyi divergence across joint Laplacian distributions with a shift (shift absorption term) and the divergence across conditional CNIs employing identical transformations $\phi_k \circ f$ (PABI term). The parameter $\tau$ is introduced to balance the privacy leakage from shifts between noise distributions against the leakage from CNIs starting from different initial conditions. Both terms can be further bounded as detailed in the following lemmas. It is important to emphasize that, for the latter, we leverage $\infty$-Wasserstein distance tracking method that get rid of the diameter requirement in original PABI analysis. Besides, for each lemma below, we state that considering projection onto the $\ell_1$ ball does not affect the conclusions.
    
    \textbf{(1) Upper bounding shift absorption term:}
    \begin{lemmaB}[Absorption of Shift in Laplacian Distribution]\label{lm.shift_absorption}
        Given two coupled graph diffusions mentioned above, for any $\tau \geq 0$, the shift absorption term can be upper bounded by
        \begin{align}
            \Dcal_\alpha(\xi_{\tau + 1 : K}^{(2)} \| \tilde\xi_{\tau + 1 : K}'^{(2)}) \leq \sum_{k = \tau + 1}^{K} g_\alpha(\sigma_k, \rhodf )
        \end{align}
        where distortion $\rhodf = \max(4\gaone, 2\gamma_{\text{max}}) \cdot \eta$, and shifted Laplace function $g_\alpha (\sigma, \rho) = \frac{1}{\alpha - 1} \ln(\frac{\alpha}{2\alpha - 1}\exp(\frac{\alpha - 1}{\sigma}\rho) + \frac{\alpha - 1}{2\alpha - 1}\exp(-\frac{\alpha}{\sigma}\rho))$.
    \end{lemmaB}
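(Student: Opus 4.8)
\textbf{Proof proposal for Lemma~\ref{lm.shift_absorption}.}
The plan is to reduce the joint R\'enyi divergence over the sequences $\xi_{\tau+1:K}^{(2)}$ and $\tilde\xi_{\tau+1:K}'^{(2)}$ to a sum of single-step divergences via the strong composition rule (Lemma~\ref{lm.strong_composition}), and then to bound each single-step term by the worst-case shift absorption of a $d$-dimensional product Laplace distribution. First I would observe that $\xi_k^{(2)} \sim \lap(\0,\sigma_k)$ while $\tilde\xi_k'^{(2)} \sim \lap(\mathbf{m}_k, \sigma_k)$ with mean shift $\mathbf{m}_k = \phi_k'(f'(\s_{k-1}')) - \phi_k(f(\s_{k-1}'))$. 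Crucially, although $\mathbf{m}_k$ depends on the random history $\s_{k-1}'$, Lemma~\ref{lm.phi_property} (Step 1) guarantees $\|\mathbf{m}_k\|_1 \leq \rhodf$ \emph{uniformly over all realizations}, so conditioning on any value of the prefix does not enlarge the per-step shift beyond $\rhodf$. This is exactly what the supremum in the strong composition rule needs: applying Lemma~\ref{lm.strong_composition} to the two sequences gives
\begin{align}
    \Dcal_\alpha(\xi_{\tau+1:K}^{(2)} \| \tilde\xi_{\tau+1:K}'^{(2)}) \leq \sum_{k=\tau+1}^{K} \sup \Dcal_\alpha\bigl(\xi_k^{(2)} \,\|\, \tilde\xi_k'^{(2)} \mid \text{history}\bigr) \leq \sum_{k=\tau+1}^{K} \sup_{\|\mathbf{r}\|_1 \leq \rhodf} \Dcal_\alpha(\lap(\mathbf{r},\sigma_k) \| \lap(\0,\sigma_k)).
\end{align}

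Next I would bound $\sup_{\|\mathbf{r}\|_1 \leq \rhodf} \Dcal_\alpha(\lap(\mathbf{r},\sigma_k) \| \lap(\0,\sigma_k))$ by $g_\alpha(\sigma_k, \rhodf)$, where $g_\alpha$ is the closed-form one-dimensional Laplace R\'enyi divergence quoted in the statement. The key structural fact is that a product of independent Laplace coordinates tensorizes under R\'enyi divergence: $\Dcal_\alpha(\lap(\mathbf{r},\sigma)\|\lap(\0,\sigma)) = \sum_{i} \Dcal_\alpha(\lap(r_i,\sigma)\|\lap(0,\sigma))$, and each one-dimensional term equals the standard formula $\frac{1}{\alpha-1}\ln\!\bigl(\frac{\alpha}{2\alpha-1}e^{(\alpha-1)|r_i|/\sigma} + \frac{\alpha-1}{2\alpha-1}e^{-\alpha|r_i|/\sigma}\bigr)$, which depends only on $|r_i|/\sigma$. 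I would then argue that this sum, subject to $\sum_i |r_i| \leq \rhodf$, is maximized by concentrating all the mass on a single coordinate, i.e.\ $|r_1| = \rhodf$ and the rest zero. This follows from convexity/superadditivity of the map $t \mapsto \frac{1}{\alpha-1}\ln(\frac{\alpha}{2\alpha-1}e^{(\alpha-1)t} + \frac{\alpha-1}{2\alpha-1}e^{-\alpha t})$ in $t \geq 0$ together with the fact that it vanishes at $t=0$: a superadditive function that is zero at the origin has its sum over a simplex maximized at a vertex. Hence the per-step bound is $g_\alpha(\sigma_k,\rhodf)$, and summing over $k = \tau+1,\dots,K$ yields the claim.

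The main obstacle I anticipate is the worst-case shift argument — precisely, showing that the maximum over $\mathbf{r}$ in the $\ell_1$-ball of radius $\rhodf$ of the product-Laplace R\'enyi divergence is attained at a corner of the ball, so that the $d$-dimensional problem collapses to the scalar formula $g_\alpha(\sigma,\rhodf)$. This requires verifying the convexity (or at least superadditivity) of the scalar function $h_\alpha(t) = \frac{1}{\alpha-1}\ln(\frac{\alpha}{2\alpha-1}e^{(\alpha-1)t} + \frac{\alpha-1}{2\alpha-1}e^{-\alpha t})$ on $[0,\infty)$, and then invoking the standard fact that for $h$ with $h(0)=0$ and $h$ convex (hence superadditive on the nonnegative reals), $\sum_i h(|r_i|) \leq h\bigl(\sum_i |r_i|\bigr) \leq h(\rhodf)$. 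A secondary, more routine concern is the measure-theoretic care needed because the shift $\mathbf{m}_k$ is itself random: one must make sure the supremum in Lemma~\ref{lm.strong_composition} is taken over the right conditioning and that the uniform bound $\|\mathbf{m}_k\|_1 \leq \rhodf$ from Lemma~\ref{lm.phi_property} applies pathwise, which it does since that lemma bounds $\sup_{\x}\|\phi_k(f(\x)) - \phi_k'(f'(\x))\|_1$. Finally, I would note that inserting an $\ell_1$-ball projection $\Pscr_\Bcal$ before $\phi_k \circ f$ leaves the distortion bound $\rhodf$ unchanged (as already asserted in Lemma~\ref{lm.phi_property}), so the conclusion is unaffected by that optional step.
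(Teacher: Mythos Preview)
Your proposal is correct and follows the same overall architecture as the paper: apply the strong composition rule (Lemma~\ref{lm.strong_composition}) to decompose the joint divergence into per-step terms, invoke the uniform distortion bound $\|\mathbf{m}_k\|_1 \leq \rhodf$ from Lemma~\ref{lm.phi_property} to control each conditional shift, tensorize the product-Laplace R\'enyi divergence coordinatewise, and then argue the sup over the $\ell_1$-ball is attained at a vertex.

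The only substantive difference is in the vertex argument. The paper first verifies that the Hessian is diagonal with strictly positive entries (so the objective is convex), pushes the maximum to the boundary $\|\mathbf{h}\|_1=\rho$, and then runs a two-variable ``adjustment'' scheme: freezing all but $h_1,h_2$, it shows the constrained univariate problem has its maximum at an endpoint, and iterates this to collapse all mass onto a single coordinate. Your route---convexity of $h_\alpha$ together with $h_\alpha(0)=0$ implies superadditivity on $[0,\infty)$, hence $\sum_i h_\alpha(|r_i|) \leq h_\alpha(\sum_i |r_i|) \leq h_\alpha(\rhodf)$---is shorter and avoids the iterative adjustment entirely; the monotonicity needed for the last inequality follows because $h_\alpha \geq 0$ (it is a R\'enyi divergence) and superadditive. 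Both arguments rest on the same convexity computation, so neither is more general, but yours is tidier.
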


    \textbf{(2) Upper bounding PABI term:}

    \begin{lemmaB}[PABI with Laplacian Distribution]\label{lm.PABI_laplacian}
        Given two coupled graph diffusions mentioned above, for any $\tau \geq 0$, we have
        \begin{align}
            \sup_{\zeta_{\tau + 1:K}} \Dcal_\alpha(\s_K | \xi_{\tau + 1 : K}^{(2)} =  \zeta_{\tau + 1:K}\| \s_K' | \tilde\xi_{\tau + 1 : K}'^{(2)} = \zeta_{\tau + 1:K}) \leq \Dcal_\alpha^{(w_{\tau})}(\s_{\tau} \| \s_{\tau}') + g_\alpha(\sigma_K, \gamma_{\text{max}}^{K - \tau}w_\tau)
        \end{align}
    \end{lemmaB}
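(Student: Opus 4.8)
\textbf{Proof proposal for Lemma~\ref{lm.PABI_laplacian}.}
The plan is to observe that, after conditioning on the two shift-absorbing noise blocks $\xi_{\tau+1:K}^{(2)}$ and $\tilde\xi_{\tau+1:K}'^{(2)}$ taking a common realization $\zeta:=\zeta_{\tau+1:K}$, the unprimed and primed trajectories from time $\tau$ to time $K$ are generated by the \emph{same} noisy contractive recursion and differ only in their state at time $\tau$; one then runs the shifted-R\'enyi-divergence calculus of Lemmas~\ref{lm.shift_reduction} and~\ref{lm.contraction-reduction} backwards from $K$ to $\tau$, paying for Laplace noise only once.

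First I would reduce to a common contractive noisy iteration. With $\zeta$ fixed, conditioning $\xi_k^{(2)}=\zeta_k$ makes the unprimed update read $\s_k=\psi_k^\zeta(\s_{k-1})+\xi_k^{(1)}$ for $k>\tau$, where $\psi_k^\zeta:=\phi_k\circ f+\zeta_k$ and $\xi_k^{(1)}\sim\lap(\0,\sigma)$, and conditioning $\tilde\xi_k'^{(2)}=\zeta_k$ makes the primed update read $\s_k'=\psi_k^\zeta(\s_{k-1}')+\xi_k'^{(1)}$ with the \emph{same} map $\psi_k^\zeta$ and $\xi_k'^{(1)}\sim\lap(\0,\sigma)$. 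Because coordinatewise clipping $f$ is $1$-Lipschitz and $\phi_k$ is $\gamma_{\text{max}}$-Lipschitz in $\|\cdot\|_1$ with $\gamma_{\text{max}}<1$ (and the additive constant $\zeta_k$ is irrelevant to the Lipschitz constant), every $\psi_k^\zeta$ is a strict $\gamma_{\text{max}}$-contraction. Writing $\mu_k,\nu_k$ for the conditional laws of $\s_k,\s_k'$, this yields $\mu_k=(\psi_k^\zeta)_{\#}\mu_{k-1}*\lap(\0,\sigma)$ and $\nu_k=(\psi_k^\zeta)_{\#}\nu_{k-1}*\lap(\0,\sigma)$ for $\tau<k\le K$, so the quantity to bound is $\sup_\zeta\Dcal_\alpha(\mu_K\|\nu_K)$.

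Next I would telescope. Put $a:=\gamma_{\text{max}}^{K-\tau}w_\tau$. Starting from $\Dcal_\alpha(\mu_K\|\nu_K)=\Dcal_\alpha^{(0)}(\mu_K\|\nu_K)$, apply the shift-reduction lemma once at step $K$ with convolving law $\lap(\0,\sigma)$ and shift $a$: this leaves $\Dcal_\alpha^{(a)}\big((\psi_K^\zeta)_{\#}\mu_{K-1}\,\big\|\,(\psi_K^\zeta)_{\#}\nu_{K-1}\big)$ plus the term $\sup_{\|\rr\|_1\le a}\Dcal_\alpha\big(\lap(\0,\sigma)+\rr\,\big\|\,\lap(\0,\sigma)\big)$, and this last term equals $g_\alpha(\sigma_K,a)=g_\alpha(\sigma_K,\gamma_{\text{max}}^{K-\tau}w_\tau)$ because (by independence of coordinates and convexity of the scalar Laplace R\'enyi divergence in the shift, exactly as in Lemma~\ref{lm.shift_absorption}) the worst $\ell_1$-shift of norm $a$ lies at a vertex of the ball. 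For the remaining term I would iterate $K-\tau$ times the pair of moves: the contraction-reduction lemma applied with identical maps (so distortion $0$) and contraction $\gamma_{\text{max}}$, which strips off one pushforward $(\psi_k^\zeta)_{\#}$ at the price of multiplying the shift budget by $1/\gamma_{\text{max}}$, followed by the shift-reduction lemma with zero shift, which deletes the intermediate convolution $*\lap(\0,\sigma)$ for free. After these rounds the budget is $a/\gamma_{\text{max}}^{K-\tau}=w_\tau$, and one is left with $\Dcal_\alpha^{(w_\tau)}(\mu_\tau\|\nu_\tau)$; taking $\sup_\zeta$ and identifying $\mu_\tau,\nu_\tau$ with the laws of $\s_\tau,\s_\tau'$ gives the claimed bound. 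Adding the nonexpansive $\ell_1$-ball projection before each $\phi_k$ does not affect any step, since it only makes the per-iteration map more contractive.

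I expect the main obstacle to be a clean treatment of the conditioning on the primed side. Since $\tilde\xi_k'^{(2)}$ is by construction centered at the state-dependent vector $\phi_k'(f'(\s_{k-1}'))-\phi_k(f(\s_{k-1}'))$, conditioning on the whole block $\tilde\xi_{\tau+1:K}'^{(2)}=\zeta$ simultaneously reweights the conditional law of $\s_\tau'$ and correlates it with the later innovations $\xi_k'^{(1)}$, so the clean identity $\nu_k=(\psi_k^\zeta)_{\#}\nu_{k-1}*\lap(\0,\sigma)$ is not immediate. I would resolve this by conditioning \emph{sequentially} --- interleaving the states $\s_{\tau+1}',\dots,\s_K'$ with the noise coordinates $\tilde\xi_{\tau+1}'^{(2)},\dots,\tilde\xi_K'^{(2)}$ and invoking strong composition (Lemma~\ref{lm.strong_composition}) on this refined sequence --- so that at step $k$ the state $\s_k'$ is realized \emph{before} the event $\tilde\xi_{k+1}'^{(2)}=\zeta_{k+1}$ is imposed and $\xi_k'^{(1)}$ remains independent of the conditioning already used, which is exactly what keeps the contractive-noisy-iteration structure intact for the backward telescoping.
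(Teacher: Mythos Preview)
Your backward telescoping --- pay once at step $K$ with shift $a=\gamma_{\text{max}}^{K-\tau}w_\tau$, then strip off each earlier step by contraction-reduction (identical maps, hence distortion $0$) followed by zero-shift-reduction --- is precisely the paper's proof. The paper writes out the general recursion $w_{k-1}=(w_k+a_k)/\gamma_{\text{max}}$ and then specializes to $a_{\tau+1}=\cdots=a_{K-1}=0$, $a_K=\gamma_{\text{max}}^{K-\tau}w_\tau$, which is your choice made upfront; identifying $\sup_{\|\rr\|_1\le a}\Dcal_\alpha(\lap(\0,\sigma)+\rr\,\|\,\lap(\0,\sigma))$ with $g_\alpha(\sigma,a)$ via the vertex-of-the-$\ell_1$-ball argument is exactly what is done in the proof of Lemma~\ref{lm.shift_absorption}.

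The obstacle you flag in your last paragraph is genuine, and the paper does not resolve it either: in its chain of inequalities the primed conditioning is silently relaxed from $\tilde\xi_{\tau+1:K}'^{(2)}=\zeta_{\tau+1:K}$ to $\tilde\xi_{\tau+1:K-1}'^{(2)}=\zeta_{\tau+1:K-1}$ when passing from step $K$ to step $K-1$, which is precisely the unjustified move you worry about (since $\tilde\xi_K'^{(2)}$ depends on $\s_{K-1}'$, the two conditional laws of $\s_{K-1}'$ differ, and under the full conditioning $\xi_{K-1}'^{(1)}$ is no longer fresh Laplace, so shift-reduction cannot be reapplied). Your sequential-conditioning fix is reasonable in spirit, but note that it changes the strong-composition decomposition that \emph{produces} the left-hand side of Lemma~\ref{lm.PABI_laplacian} rather than the proof of the lemma as stated; pursuing it would recover Theorem~\ref{thm.privacy_noisy_graph_diffusion} via a slightly different intermediate statement. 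An alternative that keeps everything unconditional is to skip the noise-splitting device and apply contraction-reduction (Lemma~\ref{lm.contraction-reduction}) in its full form with \emph{different} maps $\phi_k\circ f$ and $\phi_k'\circ f'$ and per-step distortion $\rhodf$, absorbing the distortion into the shift budget instead of into a separate conditioned noise block.
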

    The lemma demonstrates that the PABI term is controlled by the shifted Rényi divergence at step $\tau$ and a privacy amplification term that decays exponentially at a rate of $\gamma_{\text{max}}^{K - \tau}w_\tau$. In previous analyses of the PABI for the Gaussian mechanism, \cite{altschuler2022privacy} utilized the diameter of the bounded parameter set to further constrain the shift $w_\tau$. However, we have found that this bound does not satisfactorily achieve a favorable privacy-utility trade-off in practice. Subsequently, we develop a $\infty$-Wasserstein distance tracking method to more effectively bound this PABI term, offering a more practical approach.

    \begin{lemmaB}[PABI with $\infty$-Wasserstein Distance Tracking]\label{lm.PABI_wasserstein}
        Given two coupled graph diffusions mentioned above, for any $\tau \geq 0$, we have
        \begin{align}
            W_\infty(\s_\tau \| \s_\tau') \leq \frac{\rhodf \cdot(1 - \gamma_{\text{max}}^{\tau})}{1 - \gamma_{\text{max}}} = w_\tau, \; \Dcal_\alpha^{(w_{\tau})}(\s_{\tau} \| \s_{\tau}') = 0
        \end{align}
    \end{lemmaB}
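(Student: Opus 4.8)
The plan is to track the $\infty$-Wasserstein distance $w_k := W_\infty(\s_k, \s_k')$ between the two coupled noisy diffusions by induction on $k$, using the structure $\s_k = \phi_k(f(\s_{k-1})) + \xi_k^{(1)} + \xi_k^{(2)}$ versus $\s_k' = \phi_k(f(\s_{k-1}')) + \xi_k'^{(1)} + \tilde\xi_k'^{(2)}$ as rewritten in Step 3. The key observation is that after the noise-splitting rewrite, \emph{both} processes apply the \emph{same} random map $\phi_k \circ f$ and the \emph{same} primary noise $\xi^{(1)}$; the only discrepancy is carried by the secondary noise, whose shift is exactly the diffusion distortion $\phi_k'(f'(\s_{k-1}')) - \phi_k(f(\s_{k-1}'))$, bounded by $\rhodf$ in $\ell_1$ by Lemma~\ref{lm.phi_property}. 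So I would construct an explicit coupling: couple $\xi_k^{(1)}$ with $\xi_k'^{(1)}$ identically, and couple $\xi_k^{(2)}$ with $\tilde\xi_k'^{(2)}$ by the trivial shift coupling (same underlying draw, shifted mean), which contributes at most $\rhodf$ to the essential-sup distance.

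The inductive step then runs as follows. Given a coupling achieving $\|\s_{k-1} - \s_{k-1}'\| \le w_{k-1}$ almost surely, apply $f$ (which is $1$-Lipschitz in $\|\cdot\|_1$, being a coordinatewise clip) and then $\phi_k$ (which is $\gamma_{\text{max}}$-Lipschitz in $\|\cdot\|_1$), so $\|\phi_k(f(\s_{k-1})) - \phi_k(f(\s_{k-1}'))\| \le \gamma_{\text{max}} w_{k-1}$; adding the identically-coupled $\xi_k^{(1)}$ changes nothing, and the shift coupling of the secondary noise adds at most $\rhodf$. Hence $w_k \le \gamma_{\text{max}} w_{k-1} + \rhodf$ with $w_0 = 0$ (identical seed). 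Unrolling the recursion gives $w_k \le \rhodf \sum_{j=0}^{k-1} \gamma_{\text{max}}^{j} = \rhodf \frac{1 - \gamma_{\text{max}}^{k}}{1 - \gamma_{\text{max}}}$, which is the claimed $w_\tau$ at $k = \tau$. The bound $\frac{\rhodf}{1-\gamma_{\text{max}}}$ follows since $\gamma_{\text{max}} < 1$. The second assertion, $\Dcal_\alpha^{(w_\tau)}(\s_\tau \| \s_\tau') = 0$, is then immediate from the definition of shifted Rényi divergence (Definition~\ref{def.shifted_renyi}): since $W_\infty(\s_\tau, \s_\tau') \le w_\tau$, we may take $\mu' = $ the law of $\s_\tau'$ itself as the infimizing distribution, giving $\Dcal_\alpha(\mu' \| \mu') = 0$.

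I would also need to note that inserting an $\ell_1$-ball projection $\Pscr_\Bcal$ into the pipeline does not break the argument, since projection onto a convex set is $1$-Lipschitz in the $\ell_1$ norm, so the recursion $w_k \le \gamma_{\text{max}} w_{k-1} + \rhodf$ is unaffected.

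The main obstacle I anticipate is not the recursion itself (which is routine) but \emph{justifying that the couplings compose correctly across steps} — i.e., that one can simultaneously maintain the inductive coupling of $(\s_{k-1}, \s_{k-1}')$ and add the fresh independent noise couplings at step $k$ without conflict. This requires care because $\tilde\xi_k'^{(2)}$ has a mean that depends on $\s_{k-1}'$, so the shift magnitude is a random quantity; one must invoke Lemma~\ref{lm.phi_property} to get the \emph{deterministic} uniform bound $\rhodf$ on $\|\phi_k'(f'(\s_{k-1}')) - \phi_k(f(\s_{k-1}'))\|_1$ that holds for every realization, which is precisely why that lemma was proved with a $\sup_{\x}$ over all inputs. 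Once that uniform bound is in hand, the composition of couplings is standard (gluing lemma for $W_\infty$), and the essential supremum bound propagates cleanly.
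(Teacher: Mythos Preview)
Your proposal is correct and matches the paper's own proof essentially step for step: the paper also constructs the explicit coupling that identifies all Laplace noise draws across the two processes, reduces $\|\s_k - \s_k'\|_1$ to $\|\phi_k(f(\s_{k-1})) - \phi_k'(f'(\s_{k-1}'))\|_1$, splits this via the triangle inequality into the $\gamma_{\text{max}}$-Lipschitz part and the $\rhodf$ distortion part (Lemma~\ref{lm.phi_property}), unrolls the recursion $w_k \le \gamma_{\text{max}} w_{k-1} + \rhodf$ from $w_0 = 0$, and concludes $\Dcal_\alpha^{(w_\tau)}(\s_\tau \| \s_\tau') = 0$ by choosing $\mu' = \nu_\tau$. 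Your framing via the shift-coupling of $\xi_k^{(2)}$ with $\tilde\xi_k'^{(2)}$ is just the rewritten-process version of the paper's identical-noise coupling on the original process and yields the identical arithmetic; the remark on the $\ell_1$-projection being nonexpansive is also present in the paper.
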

    where distortion $\rhodf = \max(4\gaone, 2\gamma_{\text{max}}) \cdot \eta$.
    
    By summarizing the above results (Lemma~\ref{lm.shift_absorption},~\ref{lm.PABI_laplacian}, and \ref{lm.PABI_wasserstein}), we conclude the final results:
    \begin{align}
        \Dcal_\alpha(\s_K \| \s_K') \leq & \Dcal_\alpha(\xi_{\tau + 1 : K}^{(2)} \| \tilde\xi_{\tau + 1 : K}'^{(2)})\vphantom{\sup_{\zeta_{\tau + 1:K}}} + \sup_{\zeta_{\tau + 1:K}} \Dcal_\alpha(\s_K | \xi_{\tau + 1 : K}^{(2)} =  \zeta_{\tau + 1:K}\| \s_K' | \tilde\xi_{\tau + 1 : K}'^{(2)} = \zeta_{\tau + 1:K}) \notag \\
        \leq & \sum_{k = \tau + 1}^{K} g_\alpha(\sigma_k, \rhodf) + g_\alpha(\sigma_K, \gamma_{\text{max}}^{K - \tau}w_\tau) \\
        = & (K-\tau) \cdot g_\alpha(\sigma, \rhodf) + g_\alpha(\sigma, \gamma_{\text{max}}^{K - \tau} \cdot \frac{\rhodf \cdot(1 - \gamma_{\text{max}}^{\tau})}{1 - \gamma_{\text{max}}}) 
    \end{align}
\end{proof}

\section{Discussion on Personalized Graph Diffusion} \label{app.personalized_graph_diffusion}
As discussed in Sec.~\ref{subsec.ppr_diffusion}, in many personalized scenarios, the definition of personalized edge-level privacy shifts to protect edges that are not incident to the source node. This adjustment offers additional benefits in distortion analysis (Lemma~\ref{lm.phi_property}): no distortion occurs in the first step. According to the proof of Lemma~\ref{lm.phi_property}, given a seed node $s$ with the corresponding initial vector $\s$, and letting $(u, v)$ denote the edge differing on adjacent graphs $\GG$ and $\GG'$ where $s \notin \{u, v\}$, we have:
\begin{align}
\|(\A\D^{-1} - \A'\D'^{-1})\s\|_1
\leq 2 \left(\frac{[\s]_u}{d_u} + \frac{[\s]_v}{d_v} \right) = 0 \label{eq.personalize_distortion}
\end{align}
since $[\s]_u = [\s]_v = 0$. Thus, when analyzing noisy graph diffusion in a personalized scenario, this distinction results in a better bound for Lemma~\ref{lm.shift_absorption}.
Further, Eq.~\eqref{eq.personalize_distortion} demonstrate that we can relax the thresholding over seed node to seek for better privacy-utility trade-offs. Therefore, for uniform noise scheduling, we obtain the corresponding results in Theorem~\ref{thm.privacy_personalized}:
\begin{align}
    \epsilon = \min_{\tau \in \{0, ..., K-1\}} \left[ g_\alpha(\sigma, \frac{\rhodf \cdot (1 - \gamma_{\text{max}}^{\tau}) \cdot \gamma_{\text{max}}^{K - \tau}}{1 - \gamma_{\text{max}}}) + (K - \tau) \cdot g_\alpha(\sigma, \rhodf \cdot \mathbbm{1}_{\tau \neq 0})\right] 
\end{align}

\section{Additional Experiments} \label{app.additional_exp}
\subsection{Datasets and Experimental Environment}
As mentioned in Section~\ref{sec.exp}, this paper includes three benchmark datasets: BlogCatalog, TheMarker, and Flickr. Their details are included in Table~\ref{tab:dataset}.
\begin{table}[h]
\centering
\begin{tabular}{lcccccc}
\toprule
\multirow{2}{*}{\textbf{Dataset}} & \multicolumn{3}{c}{\textbf{Size}} & \multicolumn{2}{c}{\textbf{Statistics}} \\
\cmidrule(r){2-4} \cmidrule(r){5-6}
 & Nodes $|\VV|$ & Edges $|\EE|$ & Classes $|\mathcal{C}|$ & Avg. Deg. & Density \\
\midrule
Blogcatalog & 10k & 334k & 39 & 64.8 & $6.3 \times 10^{-3}$ \\
TheMarker & 69.4k & 1.6M & NA & 47 & $6.8 \times 10^{-4}$ \\
Flickr & 80k & 5.8M & 195 & 146 & $1.5 \times 10^{-3}$\\
\bottomrule
\end{tabular}
\caption{Benchmark datasets and their statistics.}
\label{tab:dataset}
\vspace{-7mm}
\end{table}

Experiments were performed on a server with two AMD EPYC 7763 64-Core Processors, 2TB DRAM, six NVIDIA RTX A6000 GPUs (each with 48GB of memory).

\subsection{Running Time} \label{app.running_time}
We report the running time for a single trial of each method across all datasets in Fig.~\ref{fig.running_time}. As illustrated in the figure, all single trial experiments of our method and \texttt{DP-PUSHFLOWCAP} are completed within 1 minute. In contrast, the edgeflipping mechanism exhibits significantly longer running times, ranging from 16 minutes to over 12 hours as the size of the graph increases.

\begin{figure}[h]
    \centering
    \includegraphics[width = 0.6\linewidth]{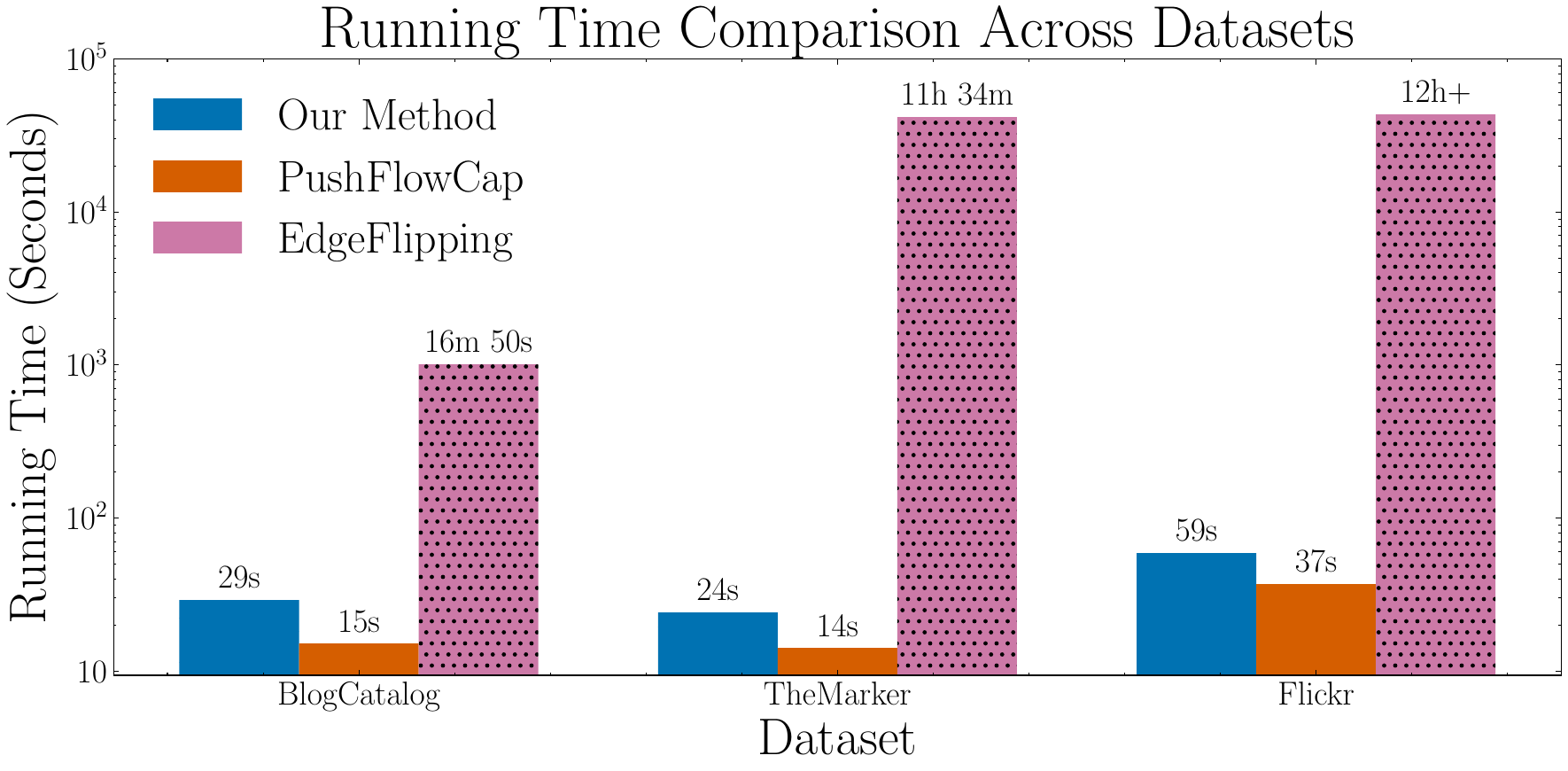}
    \caption{Running Time for a Single Trial of Experiments with a Privacy Budget of $\epsilon = 0.1$.}
    \label{fig.running_time}
\end{figure}

\subsection{Sensitivity of Ranking Performance to Variations in Threshold Parameter $\eta$.} \label{app.subsec.exp_sensitivity}
In this series of experiments, we further investigate the sensitivity of the NDCG ranking performance with respect to the selection of $\eta$. Specifically, we enhance the granularity of the $\eta$ values within the existing range from $1 \times 10^{-10}$ to $1 \times 10^{-4}$ by selecting 20 equidistant points for each dataset within this interval. The NDCG ranking performance is depicted in Fig.~\ref{fig.transition_curve_combined}. The left column represents the transition curve of our method, the middle column denotes the performance of \texttt{DP-PUSHFLOWCAP}, and the right column illustrates the performance gap between the two methods. For each privacy budget $\epsilon$, we highlight the optimal $\eta$ corresponding to the best performance (yellow for our method, cyan for \texttt{DP-PUSHFLOWCAP}). As demonstrated in the right column, when $\eta$ varies from $1 \times 10^{-10}$ to $1 \times 10^{-4}$, our method consistently outperforms \texttt{DP-PUSHFLOWCAP} up to $1 \times 10^{-6}$. This indicates the superior stability of hyperparameter selection for our method in terms of ranking performance.

\begin{figure}[h]
    \centering
    \begin{minipage}{0.3\textwidth}
        \centering
        \includegraphics[width=\textwidth]{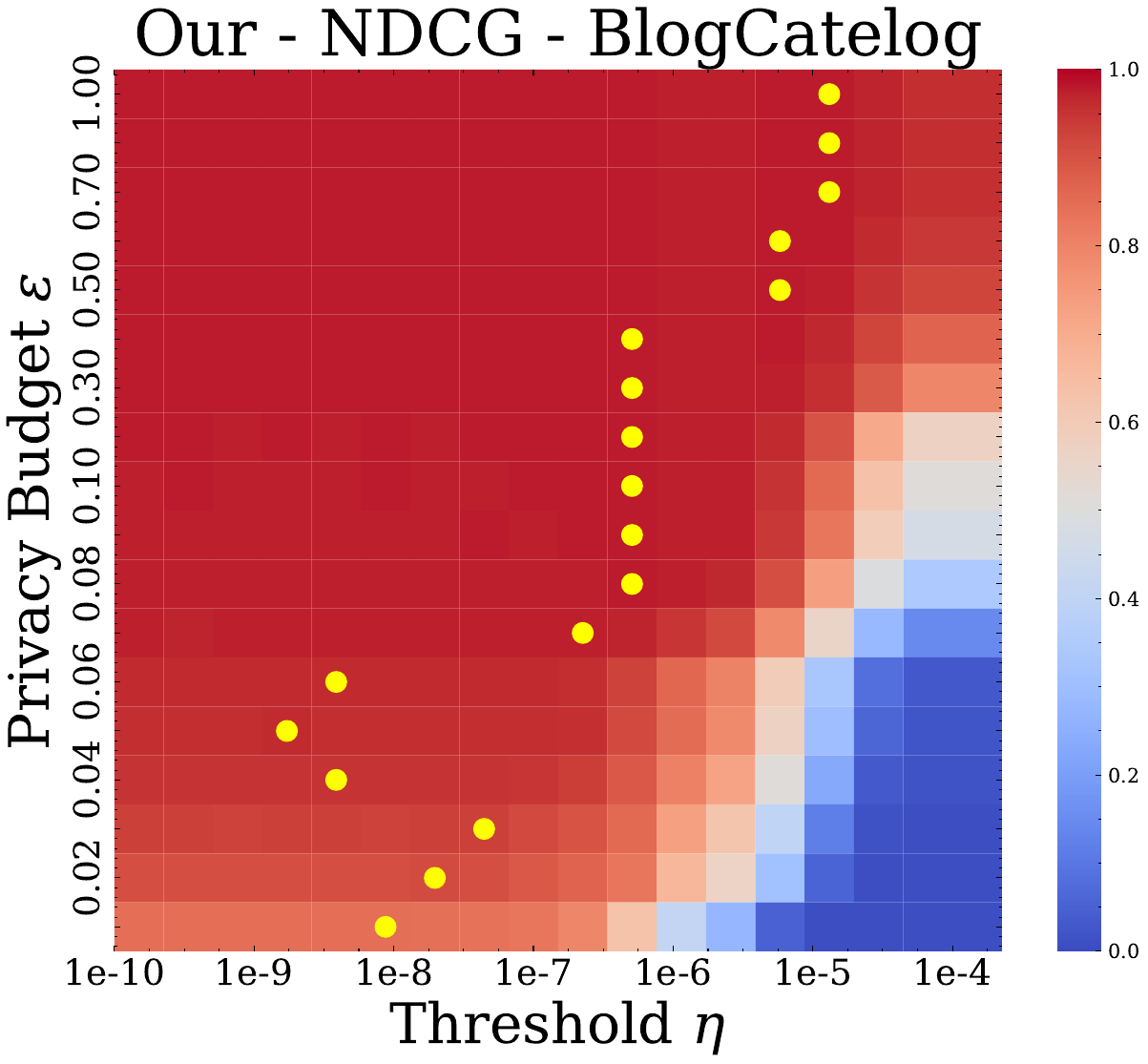}
    \end{minipage}
    \begin{minipage}{0.3\textwidth}
        \centering
        \includegraphics[width=\textwidth]{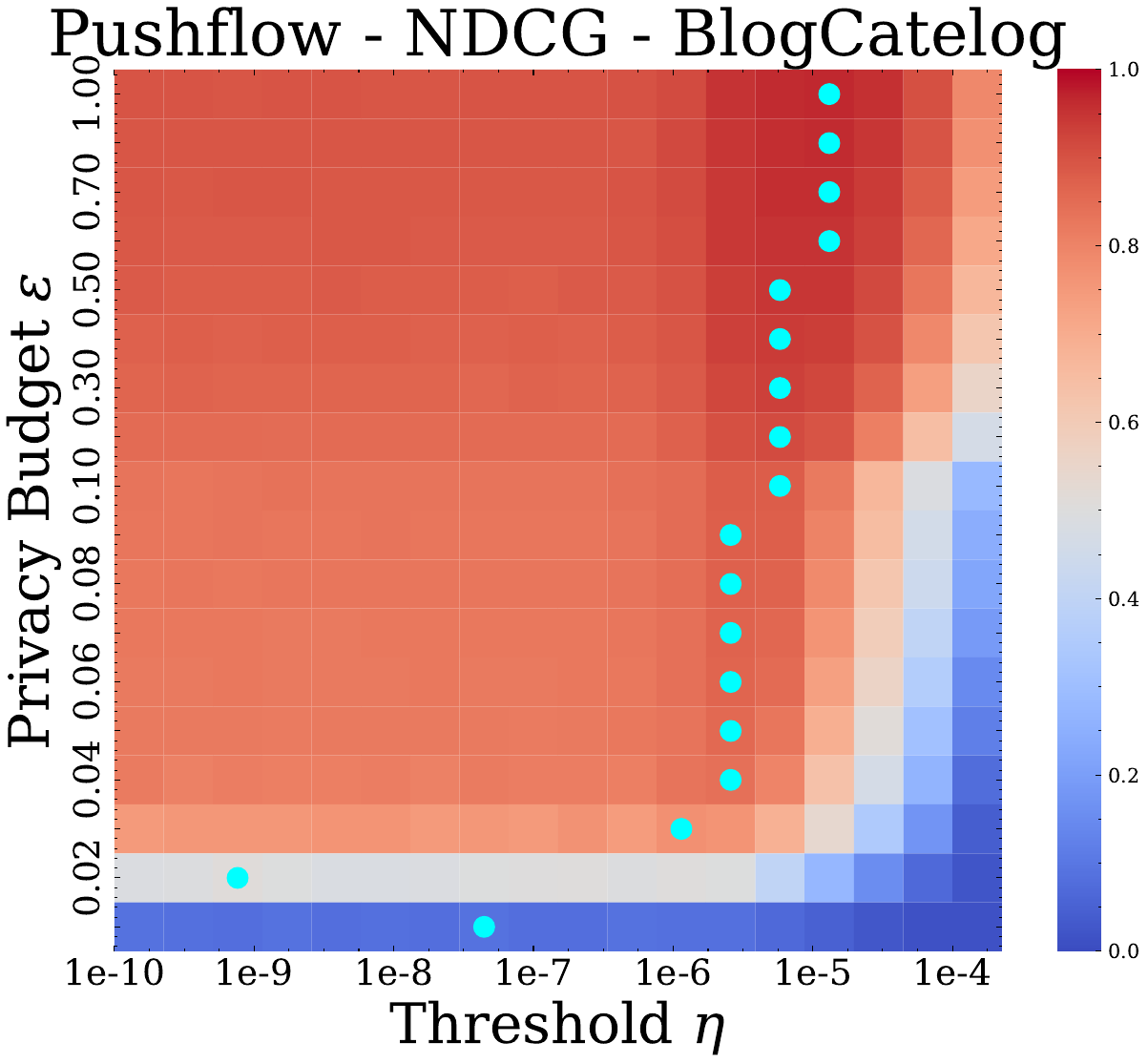}
    \end{minipage}
    \begin{minipage}{0.34\textwidth}
        \centering
        \includegraphics[width=\textwidth]{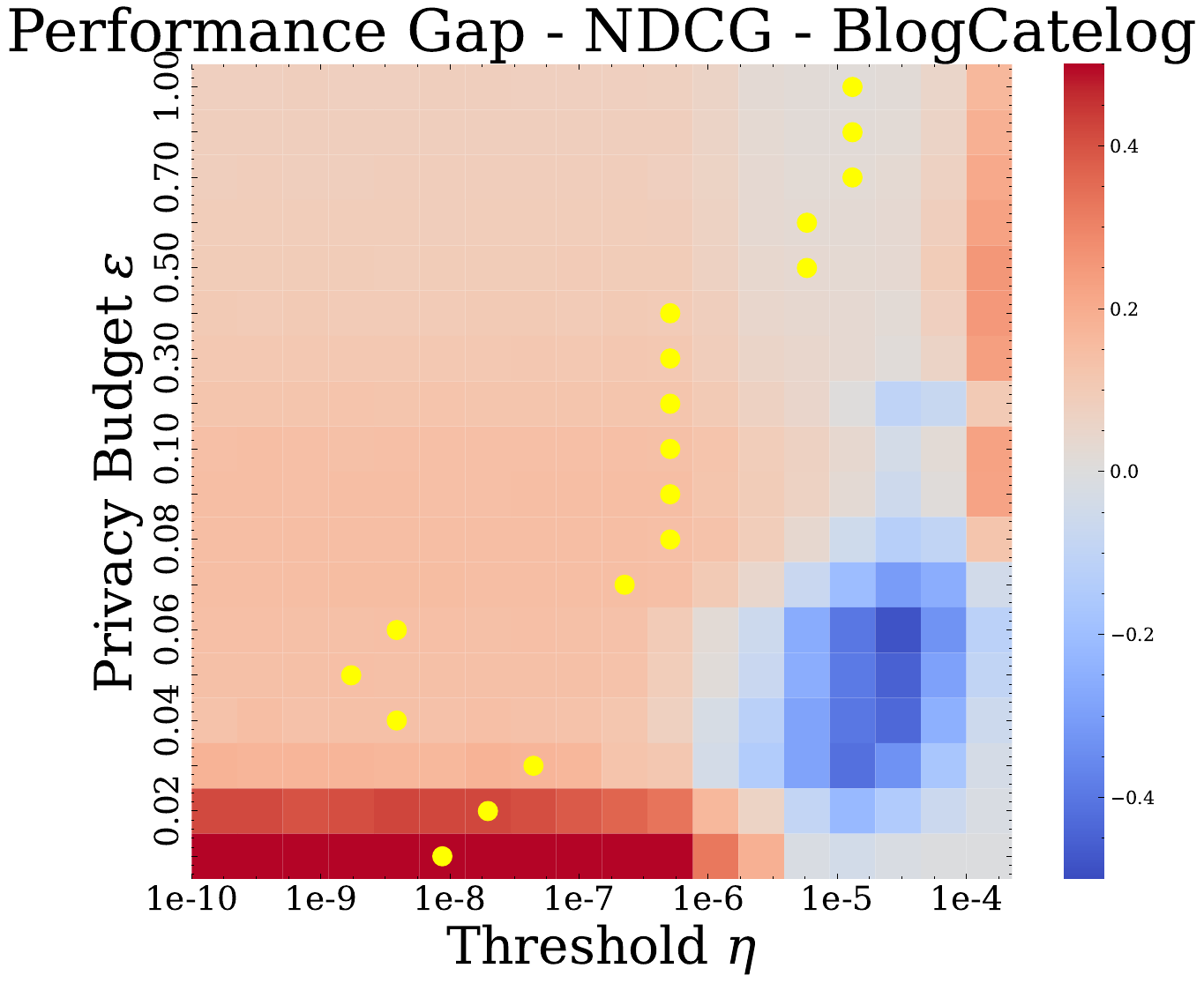}
    \end{minipage}

    \begin{minipage}{0.3\textwidth}
        \centering
        \includegraphics[width=\textwidth]{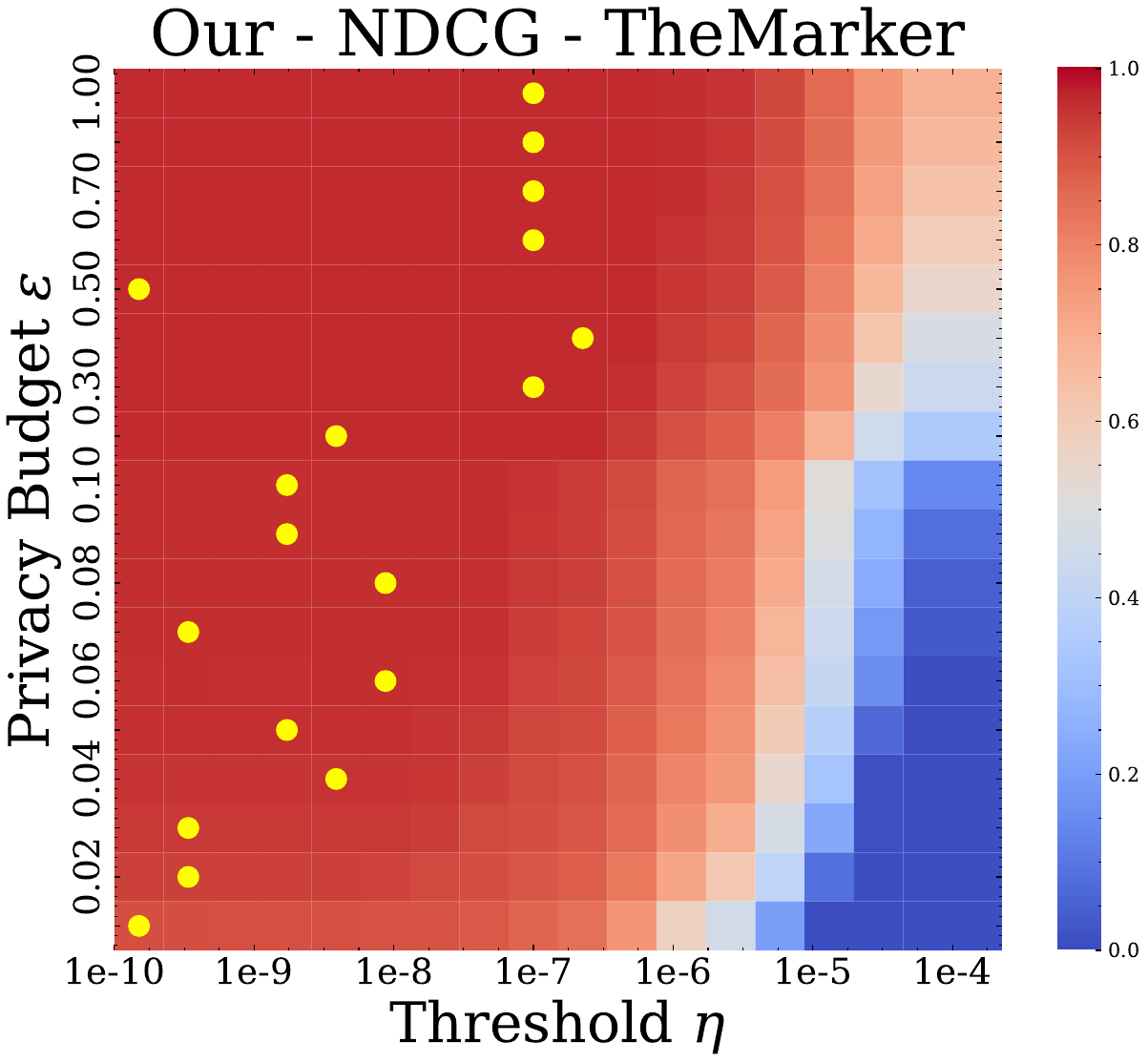}
    \end{minipage}
    \begin{minipage}{0.3\textwidth}
        \centering
        \includegraphics[width=\textwidth]{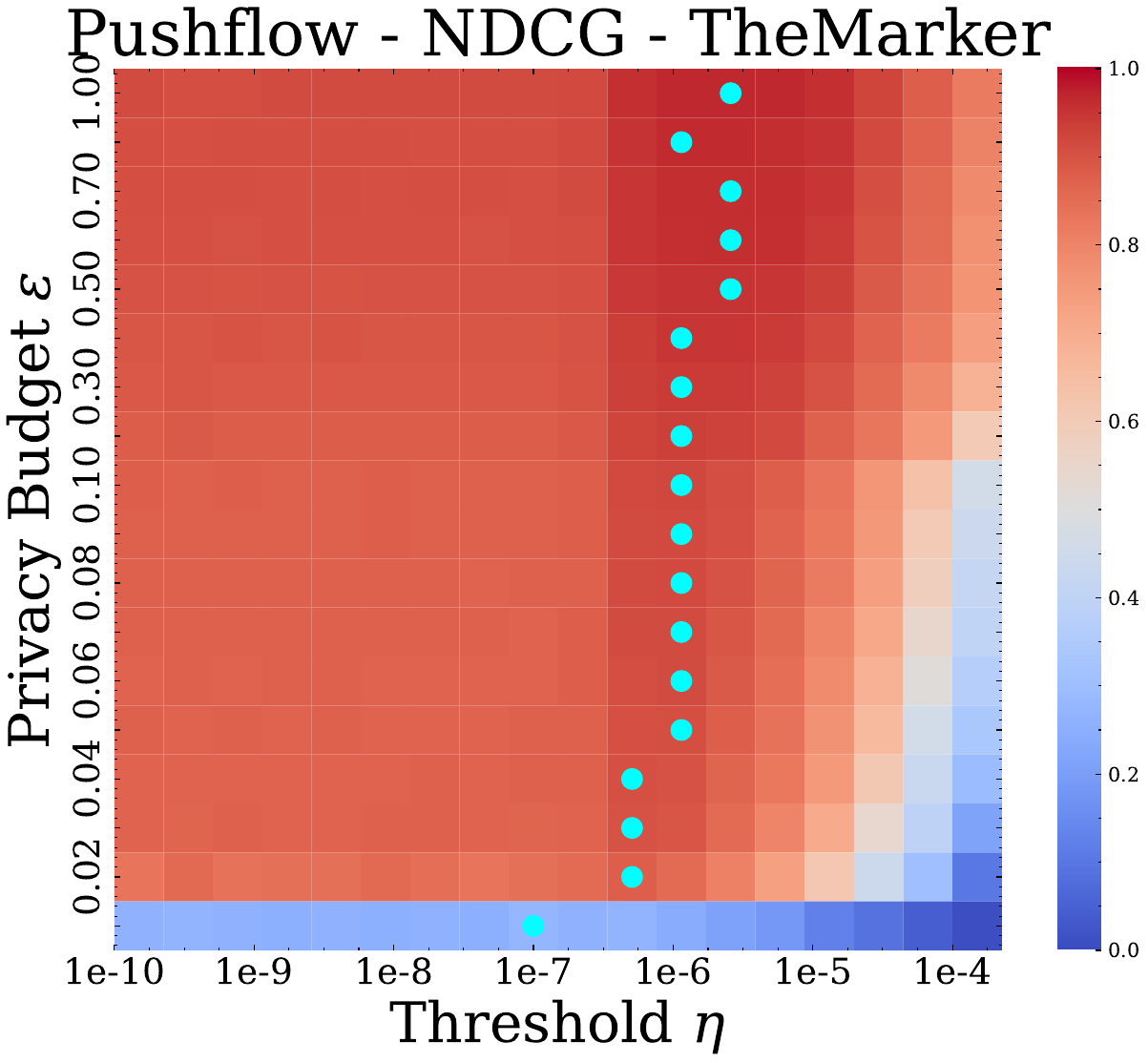}
    \end{minipage}
    \begin{minipage}{0.34\textwidth}
        \centering
        \includegraphics[width=\textwidth]{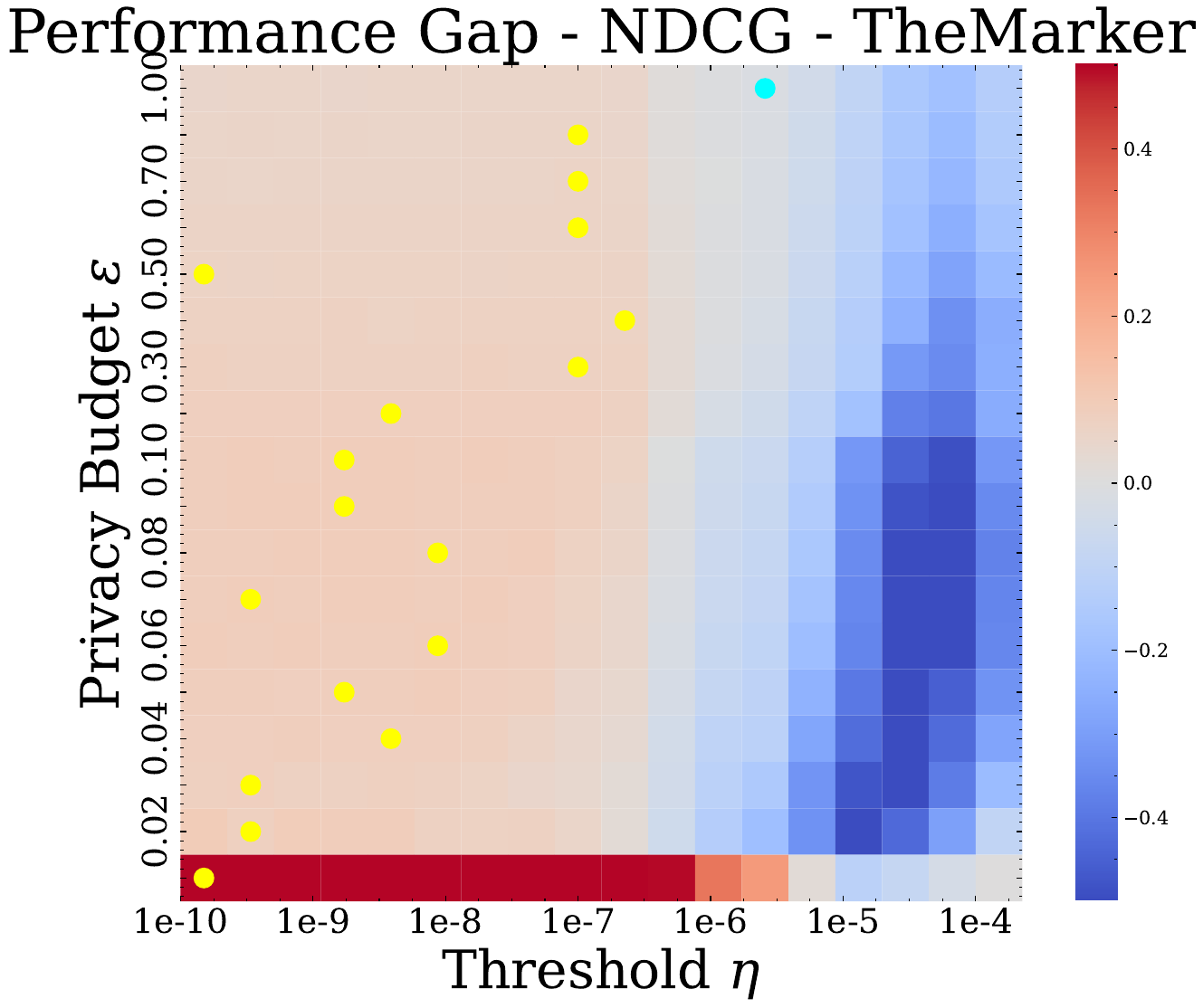}
    \end{minipage}

    \begin{minipage}{0.3\textwidth}
        \centering
        \includegraphics[width=\textwidth]{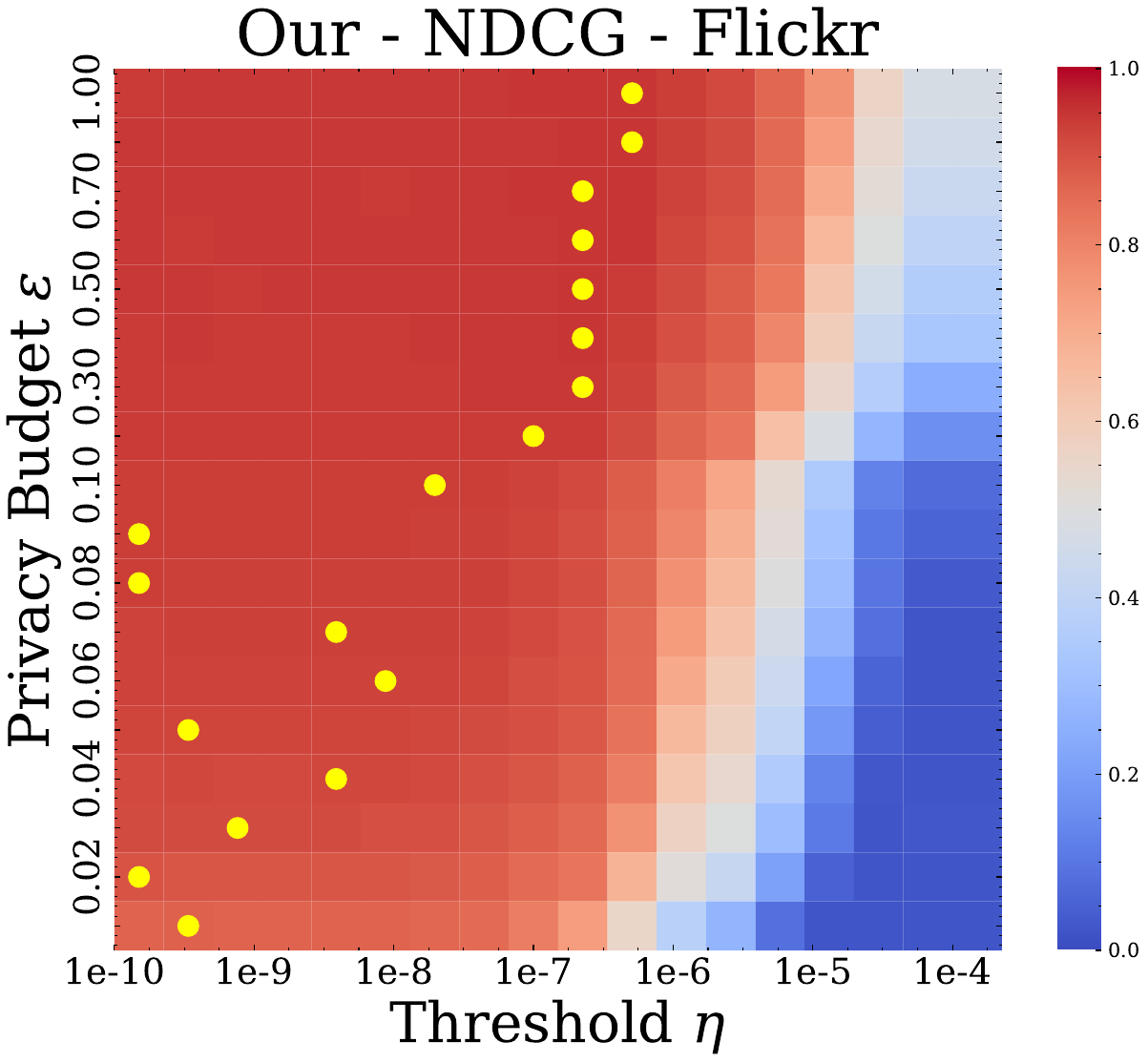}
    \end{minipage}
    \begin{minipage}{0.3\textwidth}
        \centering
        \includegraphics[width=\textwidth]{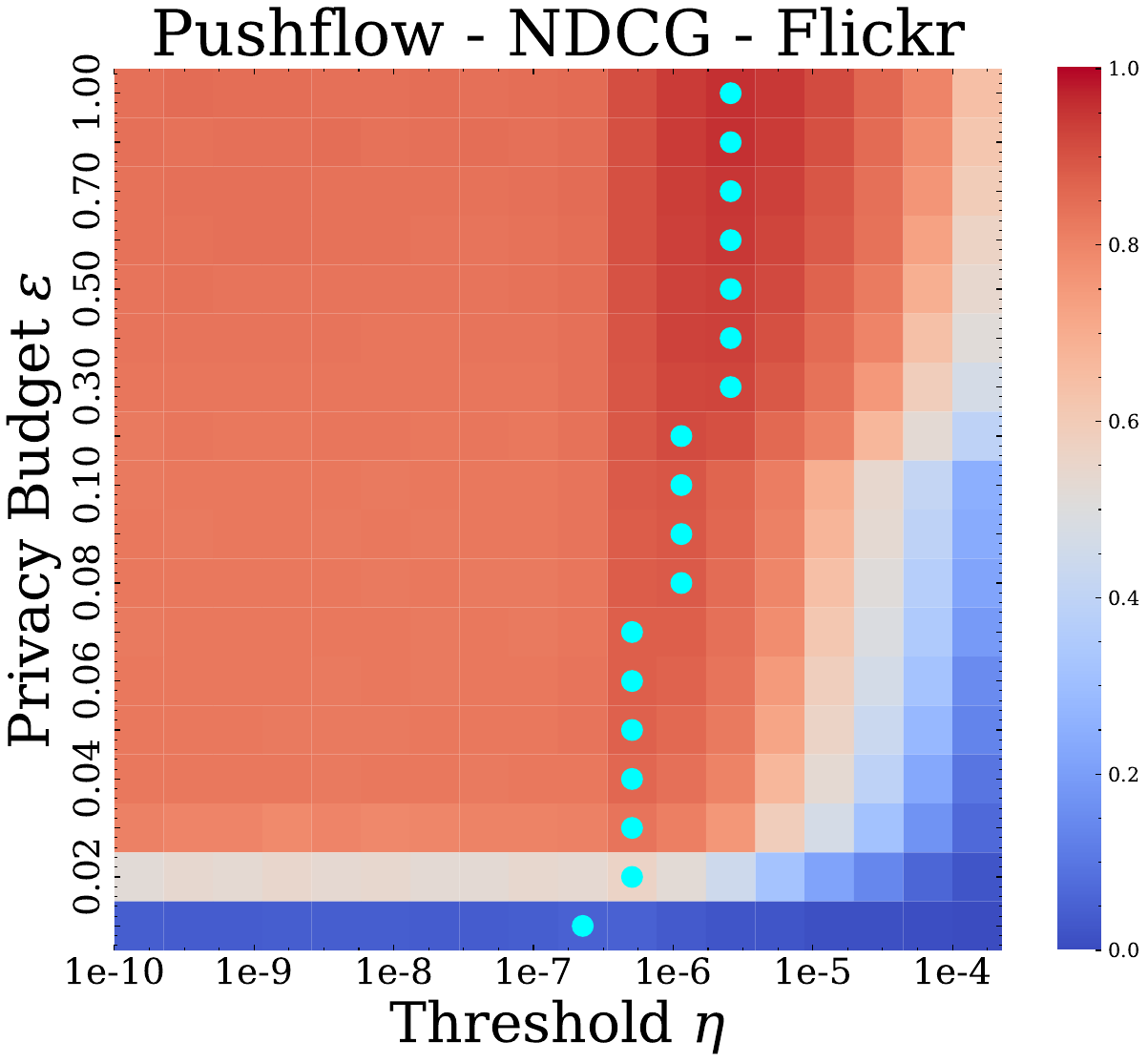}
    \end{minipage}
    \begin{minipage}{0.3\textwidth}
        \centering
        \includegraphics[width=\textwidth]{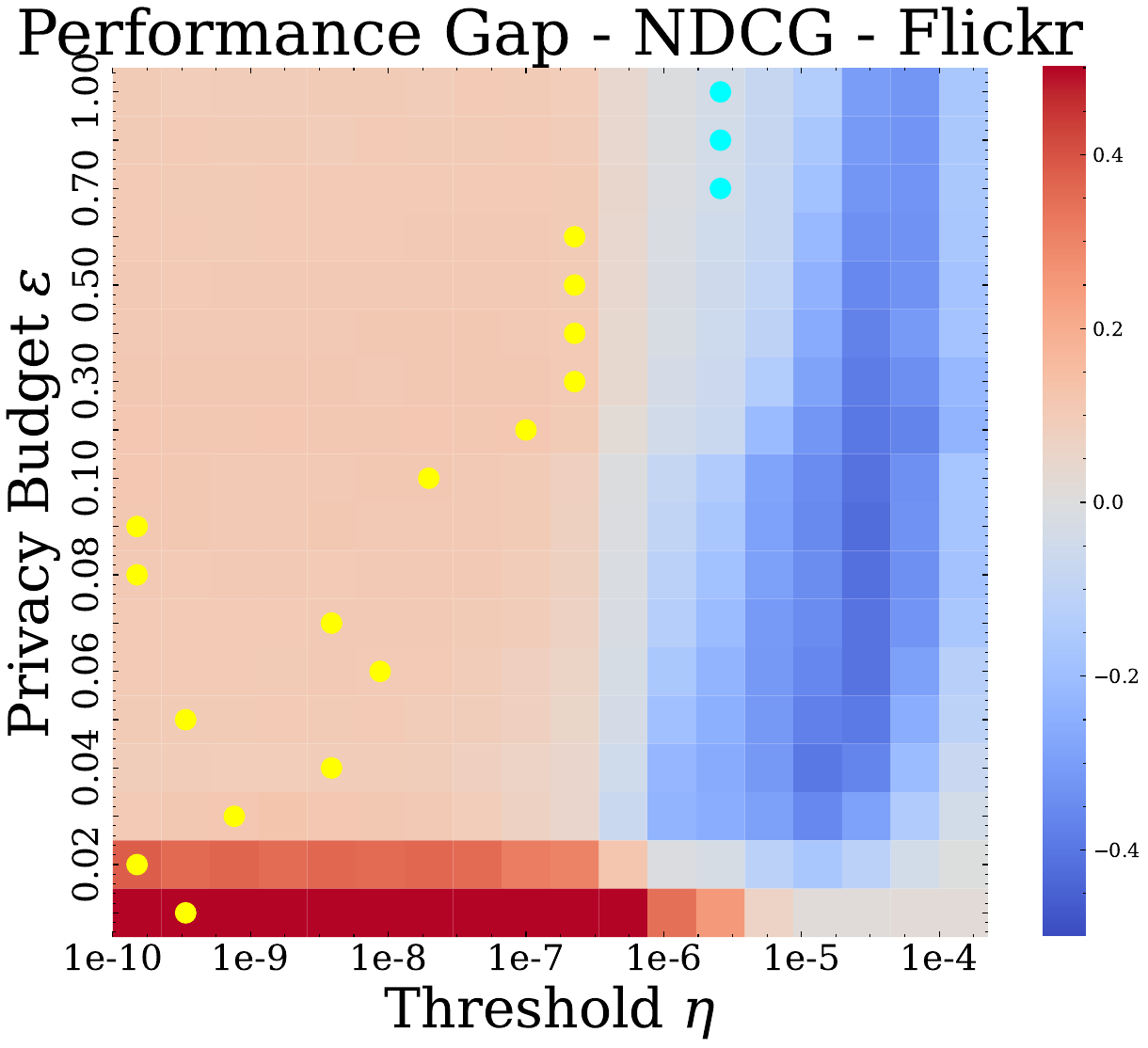}
    \end{minipage}
    \caption{\small{Transition Curves of Thresholding Parameter $\eta$ Relative to Privacy Budget $\epsilon$ for Three Benchmark Datasets.\label{fig.transition_curve_combined}}}
\end{figure}

\subsection{Ablation Study}\label{app.exp_ablation}
In this section, we conduct additional ablation studies to compare the noise distributions, and noise schedules between our Theorem~\ref{thm.privacy_noisy_graph_diffusion} and the Composition Theorem. Following the experimental settings outlined in Sec.~\ref{subsec.ablation_study}, we perform experiments on the BlogCatalog dataset, utilizing PPR diffusion with a parameter $\beta = 0.8$ and a total of $K = 100$ diffusion steps.

\textbf{Noise Distributions: Laplace versus Gaussian.}
We investigate the effectiveness of injecting Laplace noise as compared to Gaussian noise within our analysis framework. Recall that standard PABI analysis has primarily focused on the case of Gaussian noise.
We report the ranking performance in Fig.~\ref{fig.ablation_study} (Left). Within a strong privacy regime where $\epsilon$ ranges from $0.1$ to $1$, the performance gap between the two noise types is approximately $0.05$. This finding supports the superiority of Laplace noise injection for graph diffusion processes in $\ell_1$ space and is of practical importance.

\begin{figure}[h]
    \centering
    \includegraphics[width=0.4\textwidth]{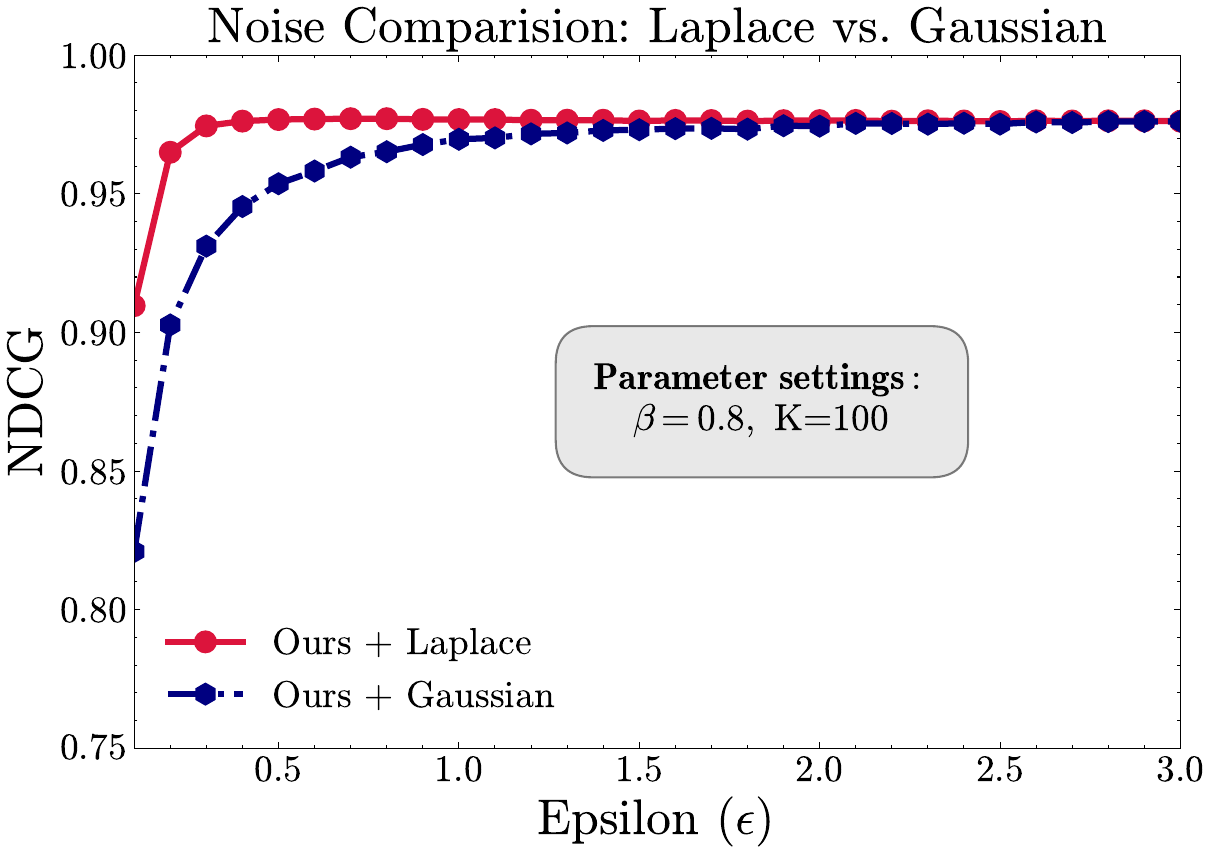}
    \hspace{3mm}\vline\hspace{3mm}
    \includegraphics[width=0.4\textwidth]{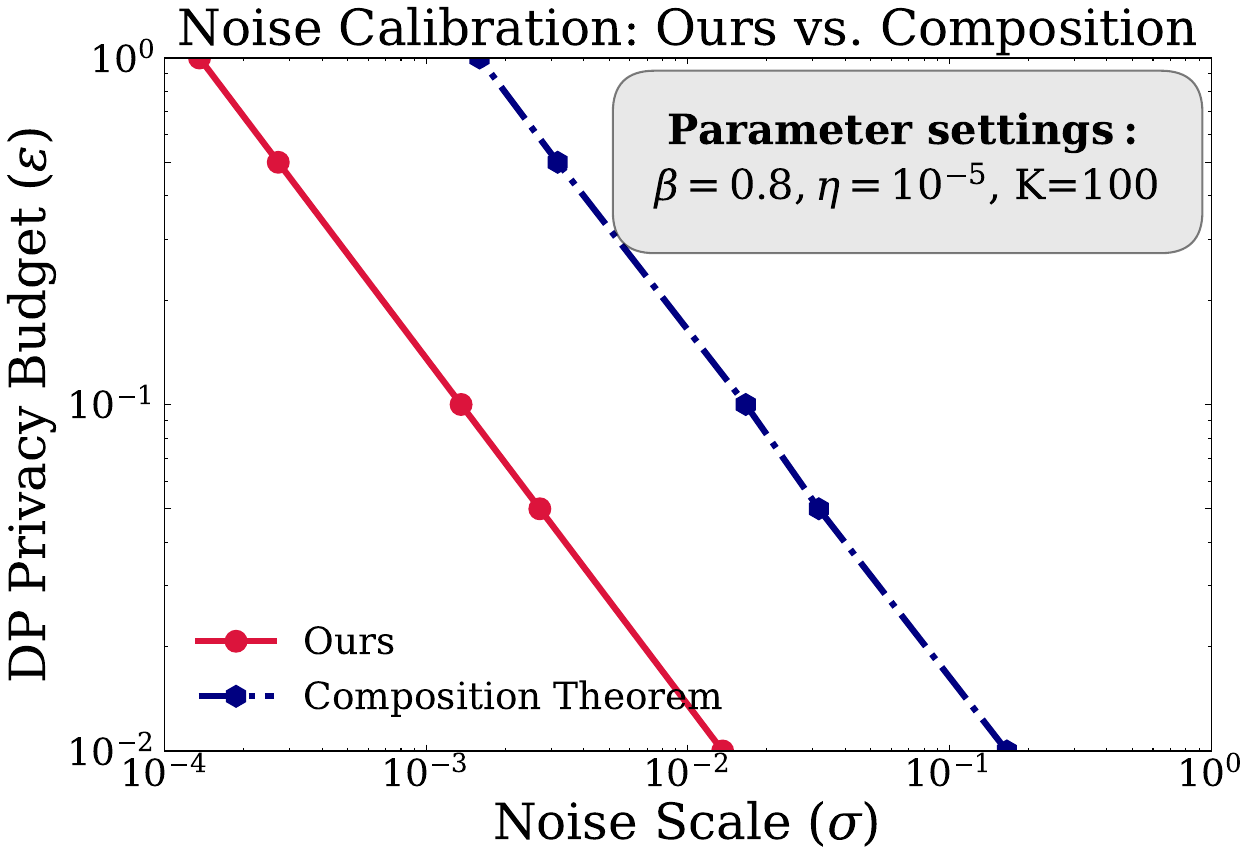}
    \caption{\small{\textbf{Left:} Ranking performance with Laplace and Gaussian noise injection. \textbf{Right:} Comparison of noise scales between our method (Theorem~\ref{thm.privacy_noisy_graph_diffusion}) and the composition theorem under the DP metric.\label{fig.ablation_study}}}
\end{figure}

\textbf{Theorem~\ref{thm.privacy_noisy_graph_diffusion} versus Composition Theorem.} As we discussed in Section~\ref{subsec.private_graph_diffusion}, one can naively adopt the DP composition theorem~\cite{kairouz2015composition,mironov2017renyi} to establish the privacy guarantee for the same noisy graph diffusion. However, it is a general approach and does not take the contraction properties of graph diffusion into account, which leads to a worse privacy bound in our case. We examine the calibrated noise scales $\sigma$ under edge-level DP, converted from RDP. Figure~\ref{fig.ablation_study} (Right) compares noise scales derived from our Theorem~\ref{thm.privacy_noisy_graph_diffusion} (Red Line) against those from the standard RDP composition theorem (Blue Line). Our method achieves a noise scale that is 10 times smaller than that required by the composition theorem.

\section{RDP to DP Conversion} \label{app.RDP_DP}
First, we present the standard results on converting RDP to DP:
\begin{propositionE}[Conversion from RDP to DP~\cite{mironov2017renyi}]
    If $\mathcal{M}$ is an $(\alpha, \epsilon_{\text{RDP}})$-RDP mechanism, then it also satisfies $(\epsilon_{\text{DP}}, \delta)$-differential privacy, where $\epsilon_{\text{DP}} = \epsilon_{\text{RDP}} + \frac{\log\frac{1}{\delta}}{\alpha - 1}$ for any $\delta \in (0, 1)$.
\end{propositionE}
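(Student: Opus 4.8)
This is the standard conversion from R\'enyi DP to approximate DP (Mironov~\cite{mironov2017renyi}, Prop.~3), so I would reproduce that argument. Fix a pair of adjacent datasets and write $P$ and $Q$ for the output laws of $\mathcal{M}$ on them, with likelihood ratio $L = dP/dQ$; this is well defined $Q$-almost everywhere because finiteness of $\mathcal{D}_\alpha(P\|Q)$ forces $P \ll Q$. Unwinding the definition of $(\alpha,\epsilon_{\text{RDP}})$-RDP yields the single fact I will use, the moment bound $\mathbb{E}_{x\sim Q}[L(x)^{\alpha}] \le e^{(\alpha-1)\epsilon_{\text{RDP}}}$. The goal is to verify the $(\epsilon_{\text{DP}},\delta)$-DP inequality $P(S) \le e^{\epsilon_{\text{DP}}}Q(S) + \delta$ for every measurable event $S$, with $\epsilon_{\text{DP}} = \epsilon_{\text{RDP}} + \tfrac{\log(1/\delta)}{\alpha-1}$; the reverse inequality (roles of $P$ and $Q$ swapped) follows identically since edge-adjacency is symmetric, which together give DP.

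\textbf{Main step.} Split $S$ by the size of the likelihood ratio: let $B = \{x : L(x) > e^{\epsilon_{\text{DP}}}\}$, so that on $S \setminus B$ one has $dP \le e^{\epsilon_{\text{DP}}}\,dQ$ pointwise. Then
\[
P(S) = \int_{S\setminus B} L\, dQ + \int_{S\cap B} L\, dQ \;\le\; e^{\epsilon_{\text{DP}}} Q(S) + P(B),
\]
and it remains to show $P(B) \le \delta$. On $B$ we have $L^{\alpha-1} > e^{(\alpha-1)\epsilon_{\text{DP}}}$, hence $L < L^{\alpha} e^{-(\alpha-1)\epsilon_{\text{DP}}}$ there, so
\[
P(B) = \int_B L\, dQ \;\le\; e^{-(\alpha-1)\epsilon_{\text{DP}}} \int_B L^{\alpha}\, dQ \;\le\; e^{-(\alpha-1)\epsilon_{\text{DP}}}\,\mathbb{E}_{x\sim Q}[L(x)^{\alpha}] \;\le\; e^{(\alpha-1)(\epsilon_{\text{RDP}}-\epsilon_{\text{DP}})}.
\]
Substituting $(\alpha-1)(\epsilon_{\text{DP}}-\epsilon_{\text{RDP}}) = \log(1/\delta)$ collapses the right-hand side to $e^{-\log(1/\delta)} = \delta$, which finishes the argument.

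\textbf{Anticipated obstacle.} There is no substantive difficulty here; the only care needed is measure-theoretic bookkeeping — deducing $P \ll Q$ from finiteness of the R\'enyi divergence, checking that the $Q$-null set on which $L$ is undefined contributes nothing to $P$, and keeping $\alpha > 1$ strict throughout (the case $\alpha = 1$ is already excluded by the $\tfrac{1}{\alpha-1}$ factor). An essentially equivalent alternative is a single application of H\"older's inequality with exponents $\alpha$ and $\tfrac{\alpha}{\alpha-1}$ to $\int_S L\,dQ$, but the region-splitting version above is what directly produces the additive $\delta$ term in the stated form.
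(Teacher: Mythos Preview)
Your argument is correct and is precisely Mironov's original proof of this conversion. The paper does not give its own proof of this proposition; it merely quotes the statement with a citation to~\cite{mironov2017renyi}, so your write-up in fact supplies what the paper omits.
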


To determine the DP guarantees of the three mechanisms, we first calculate the RDP for each mechanism. Specifically, we use Theorem~\ref{thm.privacy_personalized} for our method, sensitivity analysis in Theorem 4.3 from \cite{epasto2022differentially} with the dimensional Laplace mechanism under R\'enyi divergence (Eq.~\eqref{eq.laplace_bound}) for \texttt{DP-PUSHFLOWCAP}, and Proposition 5 from \cite{mironov2017renyi} for EdgeFlipping. For all three mechanisms, the expression $\epsilon_{\text{RDP}} + \frac{\log\frac{1}{\delta}}{\alpha - 1}$ is convex in $\alpha$. By setting $\delta = \frac{1}{\#\text{edges}}$, we then search for the optimal $\alpha$ to minimize this expression and obtain $\epsilon_{\text{DP}}$.

\section{Proof of Lemmas}
\subsection{Proof of Lemma~\ref{lm.phi_property}}
 Given any $\x \in \mathbb{R}^n$, without loss of generality, we assume that the edge sets of two graphs satisfy $\Ecal = \Ecal' \cup {(1, 2)}$, i.e., we remove edge $(1, 2)$ from graph $\GG$, resulting in graph $\GG'$. Furthermore, let $\mathcal{N}(i)$ denote the neighbors of node $i$, and define the following index sets: $A := \mathcal{N}(1) \backslash \{2\}$, $B := \mathcal{N}(2) \backslash \{1\}$, $C := A \cap B$, $A' := A \backslash C$, and $B' := B \backslash C$. Additionally, we can rearrange the node indices in $\GG$ such that $A' = \{3, \dots, |A'|+2\}$, $B' = \{|A'|+3, \dots, |A'|+|B'|+2\}$, and $C = \{|A'| + |B'| + 3, \dots, |A'| + |B'| + |C| + 2\}$. We adopt the shorthand notation $f_i = [f(\x)]_i$ and $\Delta f_i = [f(\x)]_i - [f'(\x)]_i$. We first consider the case where $d_1, d_2 \geq 2$, and define $m_1 := -\frac{f_1}{d_1(d_1 - 1)}, m_2 := -\frac{f_2}{d_2(d_2 - 1)}, \Delta m_1 := \frac{\Delta f_1}{d_1 - 1}, \Delta m_2 := \frac{\Delta f_2}{d_2 - 1}$. Since degree-based thresholding function $f$ is symmetric, WLOG, we can assume $f$ is non-negative, i.e., $f(\mathbf{x}) = \min(\max(\mathbf{x}, \mathbf{0}), \eta \cdot \mathbf{d})$. We have the following
    \begin{align}
        & \|\phi_k(f(\x)) - \phi_k'(f'(\x))\|_1 = \|\phi_k(f(\x)) - \phi_k'(f(\x)) + \phi_k'(f(\x)) - \phi_k'(f'(\x))\|_1 \label{eq.add_subtract} \\
        = & \|\gamma_{1, k} (\Pbf - \Pbf')f(\x) + (\gamma_{1, k}\Pbf' + \gamma_{2, k} \I)(f(\x) - f'(\x))\|_1\\
        = & \|\gamma_{1, k} (\A\D^{-1} - \A'\D'^{-1})f(\x) + (\gamma_{1, k}\A'\D'^{-1} + \gamma_{2, k} \I)(f(\x) - f'(\x))\|_1\\
        = & \left\|\gamma_{1, k} \begin{bmatrix} 
            0 & \frac{1}{d_2} & 0 & \cdots & 0 \\
            \frac{1}{d_1} & 0 & 0 & \cdots & 0 \\
            \cdots & \cdots & \cdots & \cdots & \cdots \\
            a_{n1}(\frac{1}{d_1} - \frac{1}{d_1 - 1}) & a_{n2}(\frac{1}{d_2} - \frac{1}{d_2 - 1}) & 0 & \cdots & 0
        \end{bmatrix} 
        \begin{bmatrix}
            f_1 \\
            f_2 \\
            \cdots\\
            f_n
        \end{bmatrix}\right. \notag \\
        & \left. + (\gamma_{1, k}\A'\D'^{-1} + \gamma_{2, k} )\begin{bmatrix}
            \Delta f_1 \\
            \Delta f_2 \\
            \cdots\\
            \Delta f_n
        \end{bmatrix}\right\|_1\\
        =&  \| \gamma_{1, k} \cdot [\frac{f_2}{d_2}, \frac{f_1}{d_1}, \underbrace{m_1, \cdots, m_1}_{|A'| \text{ Times}}, \underbrace{m_2, \cdots, m_2}_{|B'| \text{ Times}}, \underbrace{m_1 + m_2, \cdots, m_1 + m_2}_{|C| \text{ Times}}, 0, ..., 0]^T \notag \\
        &+ \gamma_{1, k} \cdot [0, 0, \underbrace{\Delta m_1, ..., \Delta m_1}_{|A'| \text{ Times}}, \underbrace{\Delta m_2, ..., \Delta m_2}_{|B'| \text{ Times}}, \underbrace{\Delta m_1 + \Delta m_2, ..., \Delta m_1 + \Delta m_2}_{|C| \text{ Times}}]^T + \gamma_{2, k} \cdot [\Delta f_1, \Delta f_1, 0, ..., 0]^T \|_1\\
        \leq & |\gamma_{1, k}| \cdot (|\frac{f_1}{d_1}| + |\frac{f_2}{d_2}|) + |\gamma_{1, k}| \cdot |A'| \cdot |(\frac{1}{d_1} - \frac{1}{d_1 - 1})f_1 + \frac{\Delta f_1}{d_1 - 1}| + |\gamma_{1, k}| \cdot |B'| \cdot |(\frac{1}{d_2} - \frac{1}{d_2 - 1})f_2 + \frac{\Delta f_2}{d_2 - 1}| \notag \\
         &  + |\gamma_{1, k}| \cdot |C| \cdot |(\frac{1}{d_1} - \frac{1}{d_1 - 1})f_1 + \frac{\Delta f_1}{d_1 - 1} + (\frac{1}{d_2} - \frac{1}{d_2 - 1})f_2 + \frac{\Delta f_2}{d_2 - 1}| + |\gamma_{2, k}| (|\Delta f_1| + |\Delta f_2|)\\
         \stackrel{(a)}{=} & |\gamma_{1, k}| \cdot (\frac{f_1}{d_1} + \frac{f_2}{d_2}) + |\gamma_{1, k}| \cdot |A'| \cdot \frac{|d_1 \Delta f_1 - f_1|}{d_1(d_1 - 1)} + |\gamma_{1, k}| \cdot |B'| \cdot \frac{|d_2 \Delta f_2 - f_2|}{d_2(d_2 - 1)} \notag \\
         &+ |\gamma_{1, k}| \cdot |C| \cdot |\frac{d_1 \Delta f_1 - f_1}{d_1(d_1 - 1)} + \frac{d_2 \Delta f_2 - f_2}{d_2(d_2 - 1)}| + |\gamma_{2, k}| \cdot (\Delta f_1 + \Delta f_2)\\
         \stackrel{(b)}{\leq} & |\gamma_{1, k}| \cdot (\frac{f_1}{d_1} + \frac{f_2}{d_2}) + |\gamma_{1, k}| \cdot \frac{|d_1 \Delta f_1 - f_1|}{d_1} + |\gamma_{1, k}| \cdot \frac{|d_2 \Delta f_2 - f_2|}{d_2} + |\gamma_{2, k}| \cdot (\Delta f_1 + \Delta f_2)\\
         \stackrel{(c)}{=} & |\gamma_{1, k}| \cdot (\frac{f_1}{d_1} + \frac{f_2}{d_2}) + |\gamma_{1, k}| \cdot \frac{f_1 - d_1 \Delta f_1}{d_1} + |\gamma_{1, k}| \cdot \frac{f_2 - d_2 \Delta f_2}{d_2} + |\gamma_{2, k}| \cdot (\Delta f_1 + \Delta f_2)\\
        = & \sum_{j = 1}^2 |\gamma_{1, k}| (f_j' - \frac{(d_j - 2)f_j}{d_j}) + |\gamma_{2, k}| \cdot (f_j - f_j')\\
        \stackrel{(d)}{\leq}& \max (4|\gamma_{1, k}|, 2(|\gamma_{1, k}| + |\gamma_{2, k}|)) \cdot \eta \leq \max (4\gaone, 2\gamma_{\text{max}}) \cdot \eta
    \end{align}
    where (a) follows from the non-negativity of the function $f$, and (b) is derived from $|A' \cap C| = d_1 - 1$ and $|B' \cap C| = d_2 - 1$. Equality is obtained when $C = \emptyset$, i.e., when there are no common neighbors between nodes 1 and 2. $(c)$ is obtained from \textbf{Result 1} and $(d)$ originates from \textbf{Result 2}.

    \textbf{Result 1.} Consistent with the above notations, we have $f_j - d_j\Delta f_j \geq 0$ for $j \in \{1, 2\}$.

    \textbf{Results 2.} Consistent with the above notations, we have $|\gamma_{1,k}| (f_j' - \frac{(d_j - 2)f_j}{d_j}) + |\gamma_{2, k}| (f_j - f_j') \leq \max(2|\gamma_{1, k}|, |\gamma_{1, k}| + |\gamma_{2, k}|) \eta$ for $j \in \{1,2\}$.

    \textbf{Proof of Result 1.} From the above definitions, we have $f_j = [f(\mathbf{x})]_j = \min(\max(x_j, 0), \eta \cdot d_j)$ and $f_j - d_j \Delta f_j = d_j f_j' - (d_j - 1) f_j$ where $x_j$ is the $j$-th entry of $\x$. Now, consider the following cases:
\begin{itemize}[leftmargin=2mm]
    \item When \( x_j \leq 0 \), we have \( f_j - d_j \Delta f_j = 0 \).
    \item When \( x_j \in (0, (d_j-1)\eta] \), it follows that \( f_j - d_j \Delta f_j = d_j \eta x_j - (d_j - 1)\eta x_j = \eta x_j \geq 0 \).
    \item When \( x_j \in ((d_j - 1)\eta, d_j\eta] \), we obtain \( f_j - d_j \Delta f_j = (d_j - 1)(d_j - x_j)\eta^2 \geq 0 \).
    \item When \( x_j > d_j \), \( f_j - d_j \Delta f_j = 0 \).
\end{itemize}

In conclusion, based on the above cases, the result is proven.

\textbf{Proof of Result 2.} We consider different cases:
\begin{itemize}[leftmargin=2mm]
    \item When \( x_j \leq 0 \), $|\gamma_{1,k}| (f_j' - \frac{(d_j - 2)f_j}{d_j}) + |\gamma_{2, k}| (f_j - f_j') = 0$.
    \item When $x_j \in (0, (d_j - 1)\eta]$, we have
    \begin{align}
        |\gamma_{1,k}| (f_j' - \frac{(d_j - 2)f_j}{d_j}) + |\gamma_{2, k}| (f_j - f_j') = \frac{2|\gamma_{1, k}|\eta[\x]_j}{d_j} \leq \frac{2|\gamma_{1, k}|(d_j - 1)\eta}{d_j}
    \end{align}
    \item When $x_j \in ((d_j - 1)\eta, d_j\eta]$, we obtain
    \begin{align}
    &|\gamma_{1,k}| (f_j' - \frac{(d_j - 2)f_j}{d_j}) + |\gamma_{2, k}| (f_j - f_j') \\
    = & |\gamma_{1,k}| \eta \left(\frac{(d_j - 1) d_j - (d_j - 2)x_j}{d_j}\right) + |\gamma_{2, k}|\eta (x_j - (d_j - 1)) \\
    \leq & \max(\frac{2|\gamma_{1, k}|(d_j - 1)\eta}{d_j}, (|\gamma_{1, k}| + |\gamma_{2, k}|)\eta) \leq \max(2|\gamma_{1, k}|, |\gamma_{1, k}| + |\gamma_{2, k}|)\eta
    \end{align}
\end{itemize}
    Note that, based on the derivations in $(b)$ and $(d)$, we conclude that the bound is asymptotically tight, with the worst-case scenario occurring when the nodes connected by the perturbed edge have no common neighbors, and their degrees increase to infinity.

    Additionally, when \(d_1 = 1\) or \(d_2 = 1\), it can be directly shown that the distortion is upper bounded by \(\max(\gaone, 2\gamma_{\text{max}}) \cdot \eta\). Note that if we further incorporate Euclidean projection onto $\ell_1$ ball, we can regard $\Pscr_{\Bcal}(\x)$ as a single input without altering the overall bound.

\subsection{Proof of Lemma~\ref{lm.shift_absorption}}
To prove this lemma, we introduce result on the R\'enyi divergence for high-dimensional Laplacian distributions with shift. The details of this result and its proof are presented following the lemma.

\textbf{Result:} Given a shift $\h \in \mathbb{R}^{|\VV|}$, for two Laplacian distributions $\lap(\boldsymbol{0}, \sigma)$ and $\lap(\h, \sigma)$, if $\|\h\|_1 \leq \rho$,
        \begin{align}
            \Dcal_\alpha(\lap(\boldsymbol{0}, \sigma) \| \lap(\h, \sigma)) 
            \leq \frac{1}{\alpha - 1} \ln(\frac{\alpha}{2\alpha - 1}\exp(\frac{\alpha - 1}{\sigma}\rho) + \frac{\alpha - 1}{2\alpha - 1}\exp(-\frac{\alpha}{\sigma}\rho)) \label{eq.laplace_bound}
        \end{align}

\textbf{Lemma Proof.} 
With the above result, we can upper bound the R\'enyi divergence over joint noise distributions. For $\tau \geq 0$, 
\begin{align}
    \Dcal_\alpha(\xi_{{\tau + 1} : K}^{(2)} \| \tilde\xi_{{\tau + 1} : K}'^{(2)}) \stackrel{(a)}{\leq} & \sum_{k = \tau + 1}^{K} \sup_{\zeta_{\tau + 1:k-1}} \Dcal_\alpha (\xi_k^{(2)} | \xi_{\tau + 1:k-1}^{(2)} = \zeta_{\tau + 1:k-1} \| \tilde\xi_k'^{(2)} | \tilde\xi_{\tau + 1:k-1}'^{(2)} = \zeta_{\tau + 1:k-1}) \\
    \stackrel{(b)}{\leq} & \sum_{k = \tau + 1}^{K} \Dcal_\alpha(\lap(\0, \sigma_k) \| \lap(\h_k, \sigma_k)) \stackrel{(c)}{\leq} \sum_{k = \tau + 1}^{K} g_\alpha(\sigma_k, \rhodf)
\end{align}
where $(a)$ arises from strong composition rule for R\'enyi divergence (Lemma~\ref{lm.strong_composition}), $(b)$ is from the definition of $\xi_k^{(2)}, \tilde\xi_k^{(2)}$ with shift $\h_k = (\phi_k\circ f)(\x) - (\phi_k'\circ f')(\x)$, and $(c)$ is derived by combining the results that the $\ell_1$ norm of the shift is upper bounded by $\|\h_k\|_1 \leq \rhodf = \max(4\gaone, 2\gamma_\text{max})$ (Lemma~\ref{lm.phi_property}) and the above Laplace bound under R\'enyi divergence (Eq.~\eqref{eq.laplace_bound}) with $g_\alpha (\sigma, \rho) = \frac{1}{\alpha - 1} \ln(\frac{\alpha}{2\alpha - 1}\exp(\frac{\alpha - 1}{\sigma}\rho) + \frac{\alpha - 1}{2\alpha - 1}\exp(-\frac{\alpha}{\sigma}\rhodf))$.

Summarizing the above, we prove the lemma. Note that further considering a projection operator does not affect the distortion, as established in Lemma~\ref{lm.phi_property}, and thus leaves the bound derived here unchanged.

\textbf{Proof of Result.}
Now consider Laplacian distribution, for $\h \in \mathbb{R}^{|\VV|}$, let $h_i$ denote the $i$-th entry of the vector, we have
    \begin{align}
        \Dcal_\alpha(\lap(\boldsymbol{0}, \sigma) \| \lap(\h, \sigma)) \stackrel{(a)}{=} \sum_{i = 1}^{|\VV|} \frac{1}{\alpha - 1} \ln(\frac{\alpha}{2\alpha - 1}\exp(\frac{\alpha - 1}{\sigma}|h_i|) + \frac{\alpha - 1}{2\alpha - 1}\exp(-\frac{\alpha}{\sigma}|h_i|))
    \end{align}
    where $(a)$ is from the additivity of R\'enyi divergence~\cite{van2014renyi} and the R\'enyi divergence over one-dimensional Laplacian noise~\cite{gil2013renyi}. By considering the worst case $\h$, we can reformulate the problem as:
    \begin{gather}
        \mathop{\text{maximize}}\limits_{\h} \sum_{i = 1}^{|\VV|} \frac{1}{\alpha - 1} \ln(\frac{\alpha}{2\alpha - 1}\exp(\frac{\alpha - 1}{\sigma}|h_i|) + \frac{\alpha - 1}{2\alpha - 1}\exp(-\frac{\alpha}{\sigma}|h_i|)) \\
        \text{s.t. } \|\h\|_1 \leq \rho
    \end{gather}
    For the above optimization problem, let $\lambda_1 = \frac{\alpha}{2\alpha - 1}, \lambda_2 = \frac{\alpha-1}{\sigma}, \lambda_3 = \frac{\alpha - 1}{2\alpha - 1}$ and $\lambda_4 = \frac{\alpha}{\sigma}$. Define $f(h_i) = \frac{1}{\alpha - 1} \ln(\lambda_1\exp(\lambda_2 \cdot h_i) + \lambda_3\exp(-\lambda_4 \cdot h_i)$. Consider the gradient term $\nabla \Dcal_\alpha(\lap(\boldsymbol{0}, \sigma) \| \lap(\h, \sigma))$:
    \begin{align}
        \frac{\partial \Dcal_\alpha(\lap(\boldsymbol{0}, \sigma) \| \lap(\h, \sigma))}{\partial h_i} = \frac{\partial f(h_i)}{\partial h_i} = \frac{\lambda_1\lambda_2\exp(\lambda_2 h_i) - \lambda_3\lambda_4\exp(-\lambda_4 h_i)}{\lambda_1\exp(\lambda_2 h_i) + \lambda_3 \exp(-\lambda_4 h_i)}
    \end{align}
    Further for Hessian matrix $\nabla^2 \Dcal_\alpha(\lap(\boldsymbol{0}, \sigma) \| \lap(\h, \sigma))$, 
    \begin{align}
        &\frac{\partial^2 \Dcal_\alpha(\lap(\boldsymbol{0}, \sigma) \| \lap(\h, \sigma))}{\partial h_i^2} = \frac{\lambda_1\lambda_3(\lambda_2 + \lambda_4)^2\exp(\lambda_2 h_i) \exp(-\lambda_4 h_i)}{\lambda_1 \exp(\lambda_2 h_i) + \lambda_3 \exp(-\lambda_4 h_i)} > 0, \\
        &\frac{\partial^2 \Dcal_\alpha(\lap(\boldsymbol{0}, \sigma) \| \lap(\h, \sigma))}{\partial h_i h_j} = 0, i \neq j.
    \end{align}
    From the eigenvalue criterion for positive definiteness, $\nabla^2 \Dcal_\alpha(\lap(\boldsymbol{0}, \sigma) \| \lap(\h, \sigma)) \succ 0$, i.e. $\Dcal_\alpha(\lap(\boldsymbol{0}, \sigma) \| \lap(\h, \sigma))$ is a convex function. As the feasible set is convex, maximum is obtained on the boundary of feasible set. Thus, the problem can be further formulated as:
    \begin{gather}
       \mathop{\text{maximize}}\limits_{\h} \sum_{i = 1}^{|\VV|} \frac{1}{\alpha - 1} \ln(\frac{\alpha}{2\alpha - 1}\exp(\frac{\alpha - 1}{\sigma}|h_i|) + \frac{\alpha - 1}{2\alpha - 1}\exp(-\frac{\alpha}{\sigma}|h_i|))\; \text{s.t. } \|\h\|_1 = \rho
    \end{gather}
    Next, we use the adjustment method to solve the above objective methods. First, we define \(L(h_1, h_2, \ldots, h_{|\VV|}) = \sum_{i = 1}^{|\VV|} f(h_i)\) and we fix \(h_3, \ldots, h_{|\VV|}\). We aim to optimize:
    \begin{align}
        \mathop{\text{maximize}}\limits_{h_1, h_2} L(h_1, h_2, h_3, \ldots, h_{|\VV|})
    \end{align}
    with respect to \(h_1\) and \(h_2\), considering the constraints $\|\h\|_1 = \rho$. This is equivalent to 
    \begin{align}
         \mathop{\text{maximize}}\limits_{h_1, h_2} \sum_{i = 1}^2 L(h_1, h_2, \ldots, h_{|\VV|}), \; \text{s.t. } h_1 + h_2 = \rho - \sum_{i = 3}^{|\VV|} h_i.
    \end{align}
    Define $c_t = \rho - \sum_{i = t}^{\VV} h_i$. Since $c_3$ is fixed, and $h_2 = c_3 - h_1$, the objective function become a univariate function, by calculating the derivative
    \begin{align}
        \frac{\partial L}{\partial h_1} = \frac{(\lambda_1 \lambda_3 \lambda_4 + \lambda_1 \lambda_2 \lambda_3)(\exp(\lambda_2 h_1 + \lambda_4(h_1 - c_3)) - \exp(\lambda_2(c_3 - h_1) - \lambda_4 h_1))}{(\lambda_1 \exp(\lambda_2 h_1 + \lambda_3 \exp(-\lambda_4 h_1)))(\lambda_1 \exp(\lambda_2(c_3 - h_1)) + \lambda_3 \exp(-\lambda_4 (c_3 - h_1)))}
    \end{align}
    The derivative $\frac{\partial L}{\partial h_1}$ reaches 0 when $h_1 = \frac{c_3}{2}$. Furthermore, $\frac{\partial L}{\partial h_1} < 0$ for $h_1 < \frac{c_3}{2}$, and $\frac{\partial L}{\partial h_1} > 0$ for $h_1 > \frac{c_3}{2}$. Thus, the maximum value of the function is attained at the endpoints.
    \begin{align}
        \mathop{\text{maximize}}\limits_{h_1, h_2} L(h_1, h_2, h_3, \ldots, h_{|\VV|}) \leq \max \left[L(c_3, 0, h_3, \ldots, h_{|\VV|}), L(0, c_3, h_3, \ldots, h_{|\VV|})\right] \label{eq.op}
    \end{align}
    As the function is symmetric, two endpoints attain the same value, i.e. $L(c_3, 0, h_3, \ldots, h_{|\VV|}) = L(0, c_3, h_3, \ldots, h_{|\VV|})$. Now, we aim to maximize the objective function by each time adjustment two variables and fixed the rest variables unchanged:
    \begin{align}
        &\mathop{\text{maximize}}\limits_{h_1, h_2, ..., h_{|\VV|}} L(h_1, h_2, h_3, \ldots, h_{|\VV|}) = \mathop{\text{maximize}}\limits_{\substack{h_1, h_2, h_3, ..., h_{|\VV|} \\ h_1 + h_2 + c_3 = \rho}} L(h_1, h_2, h_3, \ldots, h_{|\VV|}) \\
        \leq & \mathop{\text{maximize}}\limits_{h_3, ..., h_{|\VV|}} \mathop{\text{maximize}}\limits_{\substack{h_1, h_2 \\ h_1 + h_2 = \rho - c_3}} L(h_1, h_2, h_3, \ldots, h_{|\VV|})\\
        \stackrel{(a)}{\leq} & \mathop{\text{maximize}}\limits_{h_3, h_4, ..., h_{|\VV|}} L(\rho - c_3, 0, h_3, \ldots, h_{|\VV|}) \leq \mathop{\text{maximize}}\limits_{h_4, ..., h_{|\VV|}}L(\rho - c_4, 0, 0, \ldots, h_{|\VV|}) \\
        \leq & \cdots \leq L(\rho, 0, 0, ..., 0) = \frac{1}{\alpha - 1} \ln(\frac{\alpha}{2\alpha - 1}\exp(\frac{\alpha - 1}{\sigma}\rho) + \frac{\alpha - 1}{2\alpha - 1}\exp(-\frac{\alpha}{\sigma})\rho)
    \end{align}
    where $(a)$ is from Eq.~\eqref{eq.op}, and for each inequality, we adjust the values of $h_1$ and $h_i$ to maximized the objective function while keep the rest variables fixed. From the above reasoning, we maximized the R\'enyi divergence over two Laplacians with shift.

\subsection{Proof of Lemma~\ref{lm.PABI_laplacian}.} 
To prove the upper bound of $\sup_{\zeta_{\tau + 1:K}} \Dcal_\alpha(\s_K | \xi_{\tau + 1 : K}^{(2)} =  \zeta_{\tau + 1:K}\| \s_K' | \tilde\xi_{\tau + 1 : K}'^{(2)} = \zeta_{\tau + 1:K})$, we mainly leverage the definition of shifted R\'enyi divergence (Def.~\ref{def.shifted_renyi}) with two properties under noise convolution and contraction mapping (Lemma~\ref{lm.shift_reduction} and Lemma~\ref{lm.contraction-reduction}). Specifically, recall that \(\mathcal{R}_\alpha(\sigma, \rho) = \sup_{\mathbf{r}: \|\mathbf{r}\| \leq \rho} \mathcal{D}_\alpha(\xi + \mathbf{r} \| \xi)\) where $\xi \sim \lap(\0, \sigma)$, define $\varphi_k = \phi_k \circ f$, for any $\zeta_{\tau + 1:K}$, define $w_K = 0$,
\begin{align}
    & \phantom{\stackrel{(a)}{\leq} } \Dcal_\alpha(\s_K | \xi_{\tau + 1 : K}^{(2)} =  \zeta_{\tau + 1:K}\| \s_K' | \tilde\xi_{\tau + 1 : K}'^{(2)} = \zeta_{\tau + 1:K}) \\
    &= \Dcal_\alpha^{(w_K)}(\s_K | \xi_{\tau + 1 : K}^{(2)} =  \zeta_{\tau + 1:K}\| \s_K' | \tilde\xi_{\tau + 1 : K}'^{(2)} = \zeta_{\tau + 1:K}) \\
    &\stackrel{(a)}{\leq} \Dcal_\alpha^{(w_K + \aaa_K)}(\varphi_K(\s_{K-1}) | \xi_{\tau + 1 : K-1}^{(2)} =  \zeta_{\tau + 1:K-1}\| \varphi_K(\s_K') | \tilde\xi_{\tau + 1 : K-1}'^{(2)} = \zeta_{\tau + 1:K-1}) + \Rcal_\alpha(\sigma_K, \aaa_K) \\
    &\stackrel{(b)}{\leq} \Dcal_\alpha^{(\frac{w_K + \aaa_K}{\gamma_{\text{max}}})}(\s_{K-1} | \xi_{\tau + 1 : K-1}^{(2)} =  \zeta_{\tau + 1:K-1}\| \s_K' | \tilde\xi_{\tau + 1 : K-1}'^{(2)} = \zeta_{\tau + 1:K-1}) + \Rcal_\alpha(\sigma_K, \aaa_K)\\
    & \xlongequal{w_{K-1} = \frac{w_K + \aaa_K}{\gamma_{\text{max}}}} \Dcal_\alpha^{(w_{K-1})}(\s_{K-1} | \xi_{\tau + 1 : K-1}^{(2)} =  \zeta_{\tau + 1:K-1}\| \s_K' | \tilde\xi_{\tau + 1 : K-1}'^{(2)} = \zeta_{\tau + 1:K-1}) + \Rcal_\alpha(\sigma_K, \aaa_K) \\
    & \leq \cdots \\
    & \leq \Dcal_\alpha^{(w_{\tau + 1})}(\s_{\tau + 1} | \xi_{\tau + 1}^{(2)} =  \zeta_{{\tau + 1}}\| \s_{{\tau + 1}}' | \tilde\xi_{{\tau + 1}}'^{(2)} = \zeta_{{\tau + 1}}) + \sum_{k = {\tau + 2}}^K\Rcal_\alpha(\sigma_k, \aaa_k) \\
    & \leq \Dcal_\alpha^{(w_{\tau})}(\s_{\tau} \| \s_{\tau}') + \sum_{k = \tau+1}^K\Rcal_\alpha(\sigma_k, \aaa_k)
\end{align}
where $(a)$ is from Lemma~\ref{lm.shift_reduction}, and $(b)$ is derived from Lemma~\ref{lm.contraction-reduction} as $\varphi_k$ is $\gamma_{\text{max}}$-contractive (composition of two contractions). Further, we have
\begin{align}
    \left\{
    \begin{aligned}
        &w_{\tau + 1} = \gamma_{\text{max}} w_\tau - \aaa_{\tau + 1} \\
        &w_{\tau + 2} = \gamma_{\text{max}} w_{\tau + 1} - \aaa_{\tau + 2} = \gamma_{\text{max}}^2 w_\tau - \gamma_{\text{max}}\aaa_{\tau + 1} - \aaa_{\tau + 2}\\
        &\cdots = \cdots \\
        &w_{K} = \gamma_{\text{max}}^{K - \tau} w_\tau - \sum_{k = 0}^{K - \tau -1} \gamma_{\text{max}}^{k}\aaa_{K - k}
    \end{aligned}
    \right.
\end{align}
By setting $\aaa_{\tau + 1} = 0, ..., \aaa_{K - 1} = 0$, we have $\aaa_K = \gamma_{\text{max}}^{K - \tau}w_\tau$. Therefore, we have
\begin{align}
    \Dcal_\alpha(\s_K | \xi_{\tau + 1 : K}^{(2)} =  \zeta_{\tau + 1:K}\| \s_K' | \tilde\xi_{\tau + 1 : K}'^{(2)} = \zeta_{\tau + 1:K})
    \leq \Dcal_\alpha^{(w_{\tau})}(\s_{\tau} \| \s_{\tau}') + \Rcal_\alpha(\sigma_K, \gamma_{\text{max}}^{K - \tau}w_\tau)
\end{align}
From the result in the proof of Lemma~\ref{lm.shift_absorption}, $\Rcal_\alpha(\sigma_K, \gamma_{\text{max}}^{K - \tau}w_\tau) = g_\alpha(\sigma_K, \gamma_{\text{max}}^{K - \tau}w_\tau)$. Summarizing the above, we obtain
\begin{align}
    \Dcal_\alpha(\s_K | \xi_{\tau + 1 : K}^{(2)} =  \zeta_{\tau + 1:K}\| \s_K' | \tilde\xi_{\tau + 1 : K}'^{(2)} = \zeta_{\tau + 1:K})
    \leq \Dcal_\alpha^{(w_{\tau})}(\s_{\tau} \| \s_{\tau}') + g_\alpha(\sigma_K, \gamma_{\text{max}}^{K - \tau}w_\tau)
\end{align}
Further, we point out that incorporating the projection operator, i.e., 
$\varphi_k = \phi_k \circ f \circ \mathscr{P}_{\Bcal}$, 
does not affect the contractiveness of the mapping. Indeed, projecting a vector 
onto the $\ell_1$-ball via Euclidean projection yields a nonexpansive operator 
under the $\ell_1$ norm~\cite{duchi2008efficient}, and thus this extension does not alter the bound.

\subsection{Proof of Lemma~\ref{lm.PABI_wasserstein}.} 
To prove this lemma, we consider tracking the $\infty$-Wasserstein distance of the coupled iterates. Given $\Dscr_{K, \Pi}$ and $\Dscr_{K, \Pi}'$, for $\tau \geq 1$, recall that for any $k (k \geq 1)$,
\begin{align}
    \Dscr_{K, \Pi}(\seed): \s_k = \phi_k(f(\s_{k-1})) + \xi_k^{(1)} + \xi_k^{(2)}, \; \Dscr_{K, \Pi}'(\seed): \s_k' = \phi_k'(f'(\s_{k-1}')) + \xi_k'^{(1)} + \tilde\xi_k'^{(2)}.
\end{align}

For any step $k (k \geq 1)$, let $\mu_k$ and $\nu_k$ denote the distribution of $\s_k$ and $\s_k'$ respectively. Further, define $\tilde\mu_k$ and $\tilde\nu_k$ denote the distribution of $\SSS_k = (\xi_{1:k}, \tilde\xi_{1:k})$ and $\SSS_k' = (\xi_{1:k}', \tilde\xi_{1 : k}')$, respectively. For step $1$, we have
\begin{align}
    & W_{\infty}(\mu_{1}, \nu_{1}) = \inf_{\pi_{1} \in \Gamma(\mu_{1}, \nu_{1})} \mathop{\mathrm{ess\,sup}}_{(\s_{1}, \s_{1}') \sim \pi_{1}} \|\s_{1} - \s_{1}'\|_1 \\
    = & \inf_{\tilde\pi_{1} \in \Gamma(\tilde\mu_{1}, \tilde\nu_{1})} \mathop{\mathrm{ess\,sup}}_{(\SSS_{1}, \SSS_{1}') \sim \tilde\pi_{1}} \|\phi_{1}(f(\s_0)) + \xi_{1}^{(1)} + \xi_{1}^{(2)} - \phi_{1}'(f'(\s_0)) - \xi_{1}'^{(1)} - \xi_{1}'^{(2)}\|_1 \\
    \stackrel{(a)}{\leq} & \mathop{\mathrm{ess\,sup}}_{(\SSS_{1}, \SSS_{1}') \sim \tilde\pi_{1}^*}  \|\phi_{1}(f(\s_0)) - \phi_{1}'(f'(\s_0))\|_1 \leq \rhodf \label{eq.wasserstein_m}
\end{align}
where $(a)$ is from selecting a coupling $\tilde \pi^*$ such that r.v.s $\SSS_{1}$ and $\SSS_{1}'$ are identical.

Next, for any $\tau$, following the above procedure, define $\tilde\pi_k^* \in \Gamma(\mu_k, \nu_k)$ such that $\SSS_k, \SSS_k'$ are identical, we have
\begin{align}
     & W_{\infty}(\mu_\tau, \nu_{\tau}) = \inf_{\pi_\tau \in \Gamma(\mu_\tau, \nu_\tau)} \mathop{\mathrm{ess\,sup}}_{(\s_\tau, \s_\tau') \sim \pi_\tau} \|\s_\tau - \s_\tau'\|_1 \\
     \leq & \inf_{\tilde\pi_\tau \in \Gamma(\tilde\mu_\tau, \tilde\nu_\tau)} \mathop{\mathrm{ess\,sup}}_{(\SSS_\tau, \SSS_\tau') \sim \tilde\pi_\tau} \|\phi_\tau(f_\tau(\s_{\tau - 1})) + \xi_\tau^{(1)} + \xi_\tau^{(2)} - \phi_\tau'(f_\tau'(\s_{\tau-1}')) - \xi_\tau'^{(1)} - \xi_\tau'^{(2)}\|_1\\
     \leq & \mathop{\mathrm{ess\,sup}}_{(\SSS_\tau, \SSS_\tau') \sim \tilde\pi_\tau^*} \|\phi_\tau(f_\tau(\s_{\tau - 1})) + \xi_\tau^{(1)} + \xi_\tau^{(2)} - \phi_\tau'(f_\tau'(\s_{\tau-1}')) - \xi_\tau'^{(1)} - \xi_\tau'^{(2)}\|_1\\
     = & \mathop{\mathrm{ess\,sup}}_{(\SSS_{\tau-1}, \SSS_{\tau-1}') \sim \tilde\pi_{\tau-1}^*} \|\phi_\tau(f_\tau(\s_{\tau - 1})) - \phi_\tau'(f_\tau'(\s_{\tau-1}'))\|_1
\end{align}

Following the analysis in Lemma~\ref{lm.phi_property}, by induction,
\begin{align}
    &\mathop{\mathrm{ess\,sup}}_{(\SSS_{\tau-1}, \SSS_{\tau-1}') \sim \tilde\pi_{\tau-1}^*} \|\phi_\tau(f_\tau(\s_{\tau - 1})) - \phi_\tau'(f_\tau'(\s_{\tau-1}'))\|_1 \\
    \leq & \mathop{\mathrm{ess\,sup}}_{(\SSS_{\tau-1}, \SSS_{\tau-1}') \sim \tilde\pi_{\tau-1}^*} \|\phi_\tau(f_\tau(\s_{\tau - 1})) - \phi_\tau(f_\tau(\s_{\tau-1}'))\|_1 + \|\phi_\tau(f_\tau(\s_{\tau-1}')) - \phi_\tau'(f_\tau'(\s_{\tau-1}'))\|_1 \\
    \stackrel{(a)}{\leq} &  \mathop{\mathrm{ess\,sup}}_{(\SSS_{\tau-1}, \SSS_{\tau-1}') \sim \tilde\pi_{\tau-1}^*} \gamma_{\text{max}} \cdot \|\s_{\tau - 1} - \s_{\tau - 1}'\|_1 + \rhodf \\
    \leq & \cdots \\
    \stackrel{(b)}{\leq} & \mathop{\mathrm{ess\,sup}}_{(\SSS_{1}, \SSS_{1}') \sim \tilde\pi_{1}^*} \gamma_{\text{max}}^{\tau - 1} \cdot \|\s_{1} - \s_{1}'\|_1 + \rhodf \left(1 + \gamma_{\text{max}} + \cdots + \gamma_{\text{max}}^{\tau - 2}\right)\\
    \stackrel{(c)}{\leq} & \gamma_{\text{max}}^{\tau - 1} \cdot \frac{(1 - \gamma_{\text{max}}) \cdot \rhodf}{1 - \gamma_{\text{max}}}  + \frac{\rhodf\cdot(1 - \gamma_{\text{max}}^{\tau - 1})}{1 - \gamma_{\text{max}}} \\
    = & \frac{\rhodf \cdot(1 - \gamma_{\text{max}}^{\tau})}{1 - \gamma_{\text{max}}}
\end{align}
where $(a)$ is from Lemma~\ref{lm.phi_property}, $(b)$ is from induction, and $(c)$ arises from Eq.~\eqref{eq.wasserstein_m}.

Therefore, 
\begin{align}
    W_{\infty}(\mu_\tau, \nu_{\tau}) \leq \frac{\rhodf \cdot(1 - \gamma_{\text{max}}^{\tau})}{1 - \gamma_{\text{max}}}
\end{align}
Further, when we set $w_\tau = \frac{\rhodf \cdot(1 - \gamma_{\text{max}}^{\tau})}{1 - \gamma_{\text{max}}}$, we have
\begin{align}
    \Dcal_\alpha^{(w_{\tau})}(\s_{\tau} \| \s_{\tau}') = \inf_{\mu_\tau': W_{\infty}(\mu_\tau, \mu_\tau')\leq w_\tau}\Dcal_\alpha(\mu_\tau' \| \nu_\tau) = 0
\end{align}
where the equality is achieved by selecting $\mu_\tau' = \nu_\tau$.

\end{document}